\documentclass[11pt,a4paper]{scrartcl}
  \usepackage[greek,pdf]{preamble}
  \usepackage{extras}
  \usepackage{queuepage}
  \usepackage{abstract}

  \title{\vspace{-2cm}The \textbfAleph\ Calculus}
  \subtitle{A declarative model of reversible programming with relevance to Brownian computers}

\endofdump

  \addbibresource{lib/lib.bib}
  \addbibresource{lib/lib3.bib}

  \DeclareSymbolFont{euler-letters}{U}{eur}{b}{n}
  \SetSymbolFont{euler-letters}{bold}{U}{eur}{b}{n}
  \DeclareMathAlphabet\mathbf{U}{eur}{b}{n}
  \DeclareMathSymbol\eulerSigma\mathord{euler-letters}{"06}
  \DeclareMathSymbol\eulermu\mathord{euler-letters}{"16}

\begin{document}

\maketitle

\begin{figure}[!h]
  \centering
  \vspace{-1cm}
  \includegraphics[width=10cm]{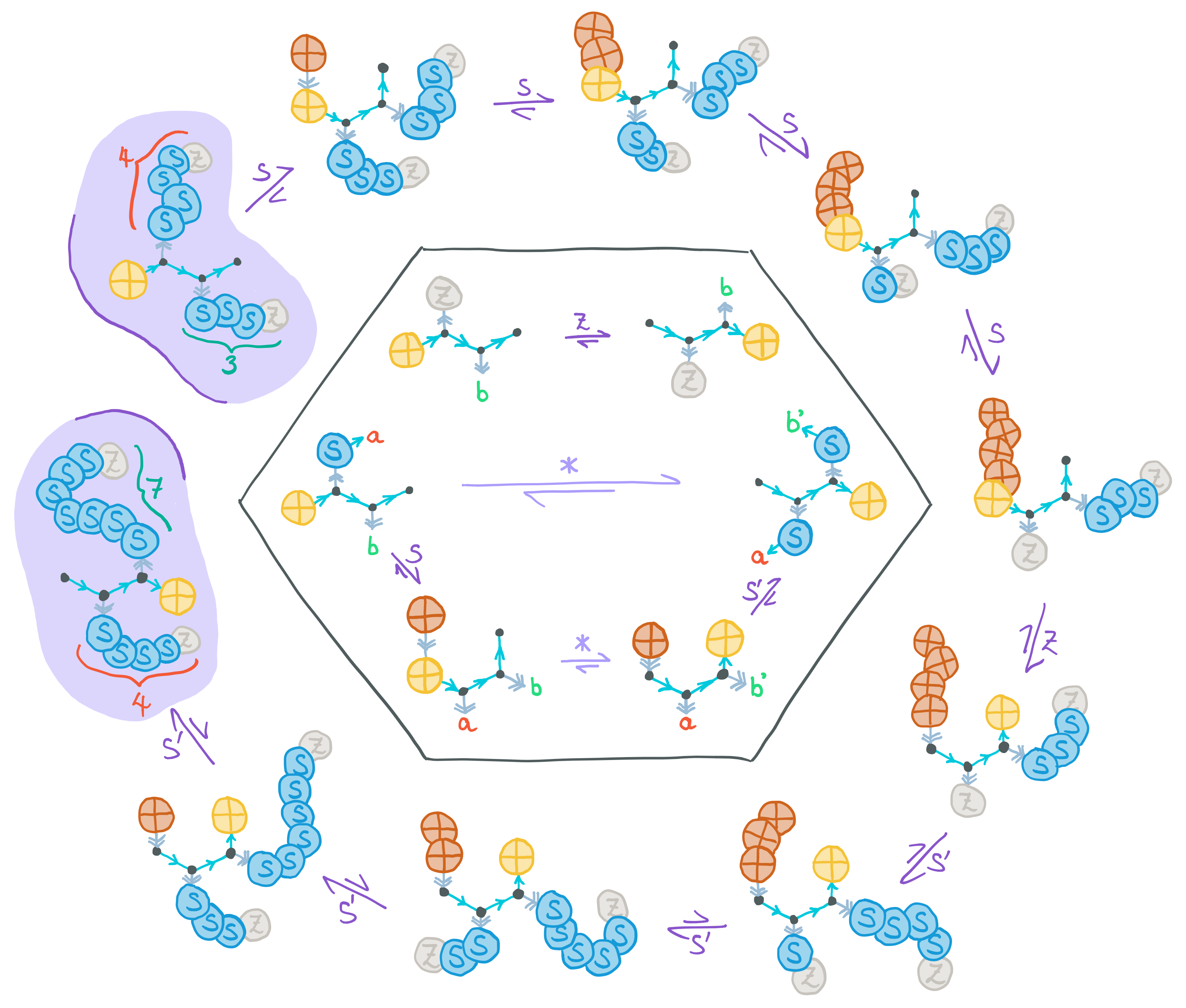}\vspace{1em}
  \caption{An abstract molecular realisation of an \textAleph\ definition of addition, here reversibly computing the sum of 4 and 3.}
  \label{fig:cover}
\end{figure}

\renewcommand{\abstractname}{Lay Summary}
\begin{abstract}
  In recent years, unconventional forms of computing ranging from molecular computers made out of DNA to quantum computers have started to be realised. Not only that, but they are becoming increasingly sophisticated and have a lot of potential to influence the future of computing. Another interesting class of unconventional computers is that of reversible computers, of which quantum computers are an example. Reversible computing---wherein state transitions must be invertible, and therefore must conserve information---is largely neglected outside of quantum computing, but is a promising avenue for realising substantial gains in computational performance and energy efficiency. We are interested in the intersection between molecular and reversible computing: although much success has been achieved in developing irreversible molecular computers driven by strong entropic forces, irreversibility can only be achieved in an approximate sense as the laws of physics are fundamentally reversible. Moreover, as molecular systems operate far closer to the underlying microscopic laws of physics, they are (arguably) influenced by this reversibility to a far greater extent. As such, we believe that reversibility is a far more natural basis for designing molecular computers.

  In this paper, we introduce a novel model of computation---the \textAleph\ calculus---that aims to be a step towards this goal. At present, \textAleph\ is a mathematical model of computation capable of concurrency, and whose objects are self-contained computational entities that are well-suited for a Brownian or molecular context. Reversibility of computation is achieved by the idea of learning and un-learning the state of variables (whence the name of the calculus, inspired by the Greek for \emph{not forgotten}, \emph{\gr ἀλήθεια}). Whilst more work needs to be done in order to realise something resembling \textAleph\ experimentally in a molecular context (for which, DNA computing is a very promising approach), we include in some of our examples what such molecular computational entities may look like---a preview of which is shown in \Cref{fig:cover}. In the meantime, computation with \textAleph\ can be realised on conventional computers via an associated programming language we also introduce, \alethe.
\end{abstract}

\renewcommand{\abstractname}{Technical Abstract}
\begin{abstract}
  Motivated by a need for a model of reversible computation appropriate for a Brownian molecular architecture, the \textAleph\ calculus is introduced. This novel model is declarative, concurrent, and term-based---encapsulating all information about the program data and state within a single structure in order to obviate the need for a \emph{von Neumann}-style discrete computational `machine', a challenge in a molecular environment. The name is inspired by the Greek for `not forgotten', due to the emphasis on (reversibly) learning and un-learning knowledge of different variables. To demonstrate its utility for this purpose, as well as its elegance as a programming language, a number of examples are presented; two of these examples, addition/subtraction and squaring/square-rooting, are furnished with designs for abstract molecular implementations. A natural by-product of these examples and accompanying syntactic sugar is the design of a fully-fledged programming language, \alethe, which is also presented along with an interpreter. Efficiently simulating \textAleph\ on a deterministic computer necessitates some static analysis of programs within the \alethe\ interpreter in order to render the declarative programs sequential. Finally, work towards a type system appropriate for such a reversible, declarative model of computation is presented.
\end{abstract}

\clearpage

\section{Introduction}

Since the advent of thermodynamics and statistical mechanics, it has become increasingly clear that the laws of physics are intrinsically reversible at a microscopic level. The connection between statistical mechanics and information theory has also been resolved to most's satisfaction, with \textcite{szilard-engine} and \textcite{landauer-limit} providing a solution to Maxwell's eponymous thought experiment, the Maxwell D{\ae}mon (\Cref{fig:daemon}). In contrast, most models of computation are intrinsically irreversible, readily discarding information at nearly every computational step. For example, a Turing Machine~\cite{turing-machine} may overwrite or erase a square of its tape, whilst an implementation of the $\lambda$ calculus~\cite{lambda-calculus} may discard its redex. The consequence is that the heat generated by conventional computers is an unavoidable byproduct of their operation (although the amount generated by contemporary consumer processors is some 6--10 orders of magnitude greater\footnotemark\ still than the lower bound found by \textcite{landauer-limit}).
\footnotetext{It is somewhat challenging to precisely quantify the information processing capacity of a modern consumer processor due to the extensive use of pipelining, vectorisation and multiple-instruction dispatch. Nevertheless, we can obtain a reasonable estimate. arm's processors are well known for emphasising power efficiency, and are used extensively in mobile electronics as well as some laptops and desktops (notably, three of Apple's new device lineup). Considering arm's A-78 processor micro-architecture, we find~\cite{arm-a78} numbers of around \SI{1}{\watt\per{core}}, with each core running at a clock speed of \SI{3}{\giga\hertz} and executing up to \SI{6}{(macro)instructions} per clock cycle with a width of up to \SI{128}{\bit}. Taken together, these give an energy dissipation of roughly \SI{1.5e-13}{\joule\per\bit}. In contrast, at a temperature of around \SI{300}{\kelvin}, Landauer's bound gives just \SI{2.9e-21}{\joule\per\bit}, a \num{5e7} overhead for the arm processor. Intel's chips have thermal design powers typically exceeding \SI{100}{\watt}, but have compensatingly greater vectorisation widths of some instructions, such that a similar Landauer overhead can be achieved when fully exploiting the processor's capabilities.}

\begin{figure}
  \centering
  \includegraphics[width=.7\textwidth]{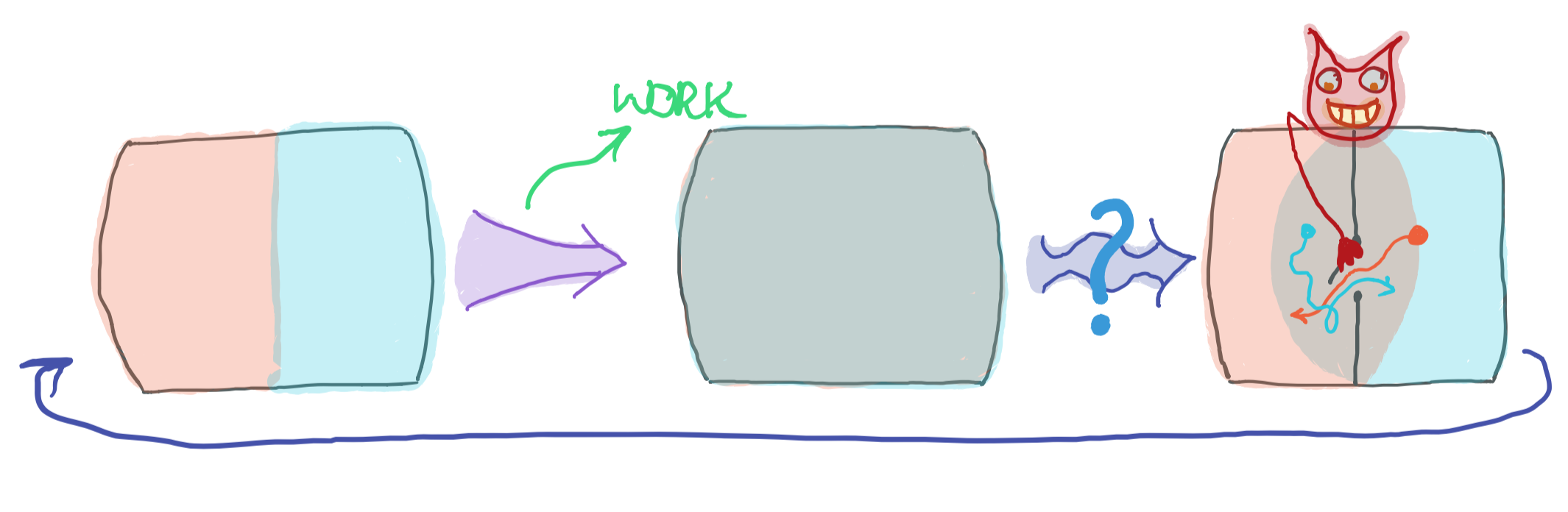}
  \caption{Maxwell's D{\ae}mon (1867) is a thought experiment which, if valid, would offer a method to violate the second law of thermodynamics. Consider a box divided in two; to the left of the divider is a red gas, and to the right a blue gas. If the divider is removed, then mechanical work can be extracted as the gases mix and equilibrate. Now suppose the divider is re-inserted, and endowed with a small window whose hinge is frictionless and can therefore be opened and closed without energetic cost. Imagine that a d{\ae}mon or other such entity, with particularly keen eyesight and nimble fingers, is positioned at the window. If she observes a red particle in the right compartment moving towards the window, or a blue particle in the left compartment moving towards the window, she briefly opens it to allow the particle's passage; otherwise she leaves the window closed. Over time, this process will bring the box back to its initial state, thereby reducing its entropy.\captionnl The problem is that this process requires the d{\ae}mon to learn information about the state of the gas particles, and this information cannot be `unlearned'; therefore her memory will eventually be filled by this stale information. Forgetting all this information requires a commensuarate increase in entropy. If this entropy manifests as heat, then Landauer showed that discarding a quantity of information $\Delta I$ requires the production of heat $\Delta Q\ge kT\Delta I$ where $k$ is Boltzmann's constant, $T$ the temperature, and equality only in the limit of a thermodynamically reversible process (i.e.\ taking infinite time).}
  \label{fig:daemon}
\end{figure}

Fortunately computation is not inherently irreversible, as \textcite{bennett-tm}---the father of reversible computing---found. He introduced a reversible analogue of the Turing Machine and even demonstrated an algorithm that could embed any irreversible computation $x\mapsto f(x)$ within a reversible computer as $x\mapsto(x,f(x))$, cleaning up any intermediate state through a reversible process rather than discarding to the environment as is standard. He later~\cite{bennett-pebbling} iterated upon this algorithm to show this could be achieved efficiently in both time and space. Nevertheless we are not satisfied with this embedding, and wish to better exploit reversible computational architectures by programming directly with reversible primitives. The benefits of doing so are that one often finds that far less temporary information need be generated than Bennett's algorithms might suggest, and also the injective embedding, $x\mapsto(x,f(x))$, retains the input which is excessive except in the trivial case of $f$ being constant; for any other function $f$ having any correlation at all between its outputs and inputs, only a partial image of $x$ need be preserved. Moreover it is often the case that one can imagine a suitable and more preferable injective or bijective embedding. For example, for the operation of addition a useful embedding might take the form $+:(x,y)\leftrightarrow(x,x+y)$, whereas Bennett's algorithm would yield $+:(x,y)\leftrightarrow(x,y,x+y)$. In defining these injections more carefully, one then often finds that the residual information of the input can in fact be made use of; for example, a Peano arithmetic implementation of $+:(x,y)\leftrightarrow(x,x+y)$ over the naturals readily begets the (domain-restricted) injection $\forall x>0.\times:(x,y)\leftrightarrow(x,xy)$ without generating any temporary data whatsoever. Moreover, when redundancy is eliminated as in these two cases one obtains the converse operation for free: that is, running $+:(x,y)\leftrightarrow(x,x+y)$ in reverse yields subtraction, and $\forall x>0.\times:(x,y)\leftrightarrow(x,xy)$ yields division, whilst their Bennett embeddings cannot do the same. A little thought shows these must fail in certain cases, and we shall have more to say about this in the following section.

A number of designs of reversible computer architectures and reversible programming languages have thus arisen. One of the first suggestive of a physical implementation was the ballistic architecture of \textcite{fredkin-conlog}, upon which \textcite{ressler} proposed a full processor design in his Master's thesis; the ballistic architecture makes use of perfectly elastic classical billiard balls projected on a frictionless surface at prescribed velocities and initial positions, which then bounce off of strategically placed walls and each other. Classical physics, being reversible, ensures that the trajectories of the balls is reversible and it was shown by Fredkin and Toffoli that these trajectories could encode arbitrary computation by devising a series of reversible logic gates. Unfortunately Bennett's analysis of this architecture showed that it could not be realised in our universe:
\def\signed #1{{\leavevmode\unskip\nobreak\hfil\penalty50\hskip2em
  \hbox{}\nobreak\hfil{#1}%
  \parfillskip=0pt \finalhyphendemerits=0 \endgraf}}
\begin{quote}
  ``Even if classical balls could be shot with perfect accuracy into a perfect apparatus, fluctuating tidal forces from turbulence in the atmospheres of nearby stars would be enough to randomise their motion within a few hundred collisions. Needless to say, the trajectory would be spoiled much sooner if stronger nearby noise sources (e.g., thermal radiation and conduction) were not eliminated.''
  \signed{\textrm{ --- \textcite{bennett-rev}}}
\end{quote}
See~\textcite{frank-thesis,earley-parsimony-i,earley-parsimony-ii,earley-parsimony-iii} for further analysis of the constraints physics puts on the performance of reversible computers. More practicable architectures have since arisen, such as Pendulum~\cite{pendulum}, as well as a host of languages~\cite{janus,psi-lisp,frank-r,kayak,theseus,lang-pi}. Furthermore, it transpires that Quantum Computers are necessarily reversible as irreversible logic can only be achieved by disrupting the well-prepared quantum state of the system, and therefore all quantum computing architectures and quantum programming languages are reversible.

In this paper we present a new model of reversible computing, the \textAleph\ Calculus, and an associated programming language \alethe\ together with interpreter, whose properties we believe to be novel. The name is inspired by the Greek meaning `not forgotten', as the semantics of \textAleph\ revolve around the transformation of knowledge: `unlearning' knowledge of one or more variables in order to `learn' knowledge about one or more other variables, but in a reversible fashion such that nothing is ever truly forgotten. \textAleph\ is declarative and concurrent and, whilst perhaps a little too abstract and high level for this purpose, is motivated by a need for a reversible model for molecular programming and DNA computing. Crucial to such applications is that almost all information about not just the program's data, but the program state too, should be encoded within a computation `term'. This is in contrast to the imperative and functional languages referenced above, wherein state information such as the instruction counter (indicating where in the program the current execution context is) is implicitly stored in a special register or other hidden state of the processor; in \textAleph, the program itself is (reversibly) mutated during the course of its execution, such that the distinction between `program' and `processor' vanishes. The reason for this design is that a \emph{von Neumann}-style architecture, in which there is a discrete processing unit that interacts with a memory unit to execute a program, is unsuitable for a molecular context as it is---in some sense---too `bulky', and is difficult to engineer. We assert that the proposed design is far more suitable to molecular implementation, or at least serves as a step towards this goal, and it is hoped the reader will be convinced of this by the examples and semantics illustrated herein.

\section{\textbfAleph\ by Example}
\label{sec:ex1}

Before expositing the formal semantics of \textAleph, it is illuminating to introduce a number of examples of \textAleph\ in order to gain an intutition for its syntax and features. We begin with the reversible definition of natural addition from the introduction, but first there is a point to be made regarding injectivity and bijectivity. Recall the Bennett-style reversible embedding of addition, $+_{\text{Bennett}}:(x,y)\leftrightarrow(x,y,x+y)$; it is easy to see that the forward direction of this program maps, e.g., $(3,4)$ to $(3,4,7)$, and likewise the reverse direction maps $(3,4,7)$ to $(3,4)$. Clearly the forward direction is an injection, but what about the reverse? Suppose we attempt to feed in $(3,4,5)$: as $3+4\ne5$, and as the forward direction is injective, there can be no pair $(a,b)$ that maps to $(3,4,5)$. In fact, even our less redundant embedding $+:(x,y)\leftrightarrow(x,x+y)$ suffers from non-injectivity of its inverse, for example there is no value $y\in\mathbb N$ satisfying $+:(5,y)\leftrightarrow(5,2)$. Yet another example occurs for the function $+':(x,y)\leftrightarrow(x-y,x+y)$ defined over the integers, in which there is no pair $(x,y)$ satisfying $+':(x,y)\leftrightarrow(3,6)$ (there is in $\mathbb Q$, however; namely, $(\frac92,\frac32)$). Therefore, whilst we might expect reversible computers to compute bijective operations, this appears to be violated in even the simple example of addition. In fact, no violation of reversibility has occurred, and the `primitive' steps of any reversible computer \emph{are} bijective. What we have encountered here is simply a (co)domain error, the same as if we were to ask a conventional computer to evaluate $1/0$. That is, whereas conventional computers compute \emph{partial} functions, reversible computers compute \emph{partial} bijections (or partial isomorphisms). Exactly what happens when such an error is encountered depends on both the algorithm and architecture in question; for example, attempting to divide 1 by 0 may cause a computer to immediately complain, or it may enter an infinite loop if the algorithm is `repeatedly subtract the divisor until the dividend vanishes'. We shall have more to say about how \textAleph\ handles such errors in due course.

\para{Addition}
\queuefloat{listing}
  \def\monus{\mathbin{\ooalign{\hss\raisebox{0.5ex}{$\cdot$}\hss\cr\phantom{$+$}\cr$-$}}}
  \centering
  \begin{sublisting}{\textwidth}
    \centering
    \begin{align*}
      & {+}~{\Z}~{b}~{\unit} = {\unit}~{\Z}~{b}~{+}{;} && \ruleName{add--base} \\
      & {+}~({\S} {a})~{b}~{\unit} = {\unit}~({\S} {a})~({\S} {b'})~{+}{:} && \ruleName{add--step} \\
      &\qquad  {+}~{a}~{b}~{\unit} = {\unit}~{a}~{b'}~{+}{.} && \ruleName{add--step--sub}
    \end{align*}
    \caption{The definition of reversible natural addition in \textAleph.}
    \label{lst:ex-add-def}
  \end{sublisting}
  \begin{sublisting}{\textwidth}
    \centering
\begingroup%
\newcommand{\subLeft}[1][1]{\smash{\raisebox{-0.65em}{\scaleto{\begin{tikzpicture}%
    \draw[<->] (0,0) to [out=270,in=90, looseness=1] (-#1,-0.5);%
  \end{tikzpicture}}{1.78em}}}\hspace{4.3em}}%
\newcommand{\subRight}[1][1]{\hspace{7.3em}\smash{\raisebox{-0.65em}{\scaleto{\begin{tikzpicture}%
    \draw[<->] (#1,-0.5) to [out=90,in=270, looseness=1] (0,0);%
  \end{tikzpicture}}{1.78em}}}}%
\def\bindings#1{\{#1\}}%
\def\bindingsSub#1{\smash{\underline{\bindings{#1}}}}%
\def\subterm#1{\smash{\overline{#1}}}%
\def\midsp{\phantom{\bindings{b:3}}}%
\begin{align*}
  {+}~{4}~{3}~{\unit} \leftrightsquigarrow \bindingsSub{a:3, b:3} &\midsp \bindingsSub{a:3, b':6} \leftrightsquigarrow {\unit}~{4}~{7}~{+} && \ruleName{add--step} \\
  \subLeft &\subRight && \ruleName{add--step--sub} \\
  \subterm{{+}~{3}~{3}~{\unit}} \leftrightsquigarrow \bindingsSub{a:2, b:3} &\midsp \bindingsSub{a:2, b':5} \leftrightsquigarrow \subterm{{\unit}~{3}~{6}~{+}} && \ruleName{add--step} \\
  \subLeft &\subRight && \ruleName{add--step--sub} \\
  \subterm{{+}~{2}~{3}~{\unit}} \leftrightsquigarrow \bindingsSub{a:1, b:3} &\midsp \bindingsSub{a:1, b':4} \leftrightsquigarrow \subterm{{\unit}~{2}~{5}~{+}} && \ruleName{add--step} \\
  \subLeft &\subRight && \ruleName{add--step--sub} \\
  \subterm{{+}~{1}~{3}~{\unit}} \leftrightsquigarrow \bindingsSub{a:0, b:3} &\midsp \bindingsSub{a:0, b':3} \leftrightsquigarrow \subterm{{\unit}~{1}~{4}~{+}} && \ruleName{add--step} \\
  \subLeft[0.25]\hspace{-1.8em} & \hspace{-1.95em}\subRight[0.28] && \ruleName{add--step--sub} \\
  \subterm{{+}~{\Z}~{3}~{\unit}} \leftrightsquigarrow{} &\bindings{b:3} \leftrightsquigarrow \subterm{{\unit}~{\Z}~{3}~{+}} && \ruleName{add--base}
\end{align*}%
\endgroup
    \caption{The \textAleph\ execution path when reversibly adding $4$ to $3$ or, in reverse, subtracting $4$ from $7$. The $\leftrightsquigarrow$~arrows refer to pattern matching/substitution, whilst the solid arrows refer to instantiation/consumption of `sub-terms'.}
    \label{lst:ex-add-43}
  \end{sublisting}
  \begin{sublisting}{\textwidth}
    \centering
\begingroup%
\newcommand{\subLeft}[1][1]{\smash{\raisebox{-0.65em}{\scaleto{\begin{tikzpicture}%
    \draw[<->] (0,0) to [out=270,in=90, looseness=1] (-#1,-0.5);%
  \end{tikzpicture}}{1.78em}}}\hspace{4.3em}}%
\def\bindings#1{\{#1\}}%
\def\bindingsSub#1{\smash{\underline{\bindings{#1}}}}%
\def\subterm#1{\smash{\overline{#1}}}%
\def\nomatch{\mathrel{\reflectbox{$\rightsquigarrow$}\hspace{-0.7ex}/}}%
\def\nomatchlong{\mathrel{\rlap{$\nomatch$}\phantom{\leftrightsquigarrow}}}%
\begin{align*}
  {\unit}~{5}~{2}~{+} \leftrightsquigarrow \bindingsSub{a:4, b':1} & \phantom{\bindings{b:3} \bindingsSub{a:3, b':6} \leftrightsquigarrow {\unit}~{4}~{7}~{+}} && \ruleName{add--step} \\
  \subLeft &&& \ruleName{add--step--sub} \\
  \subterm{{\unit}~{4}~{1}~{+}} \leftrightsquigarrow \bindingsSub{a:3, b':0} &&& \ruleName{add--step} \\
  \subLeft &&& \ruleName{add--step--sub} \\
  \subterm{{\unit}~{3}~{\Z}~{+}} \nomatchlong \phantom{\bindingsSub{a:1, b':3}} &&& \text{\small\emph{no matching rule}}
\end{align*}%
\endgroup
    \caption{The \textAleph\ execution path when attempting to (erroneously) subtract $5$ from $2$. The recursive algorithm identifies that $2-5\equiv 0-3$, but there is no matching definition for this and therefore computation `stalls' on this sub-term. This is usually addressed in the natural numbers by employing the saturating option of `monus', i.e.\ $2\monus5=0$, but it is not reversible.}
    \label{lst:ex-add-52}
  \end{sublisting}
  \caption{The definition of, and example applications of, reversible addition in \textAleph.}
  \label{lst:ex-add}
\endqueuefloat

\queuefloat
  \def\rmsp{\nobreak\hspace{-0.1ex}}
  \caption{An overview of the different classes of (deterministic) computations, both irreversible and reversible.}
  \label{fig:rev-classes}
  \begin{subfigure}{\textwidth}
    \centering
    \includegraphics[width=.6\textwidth]{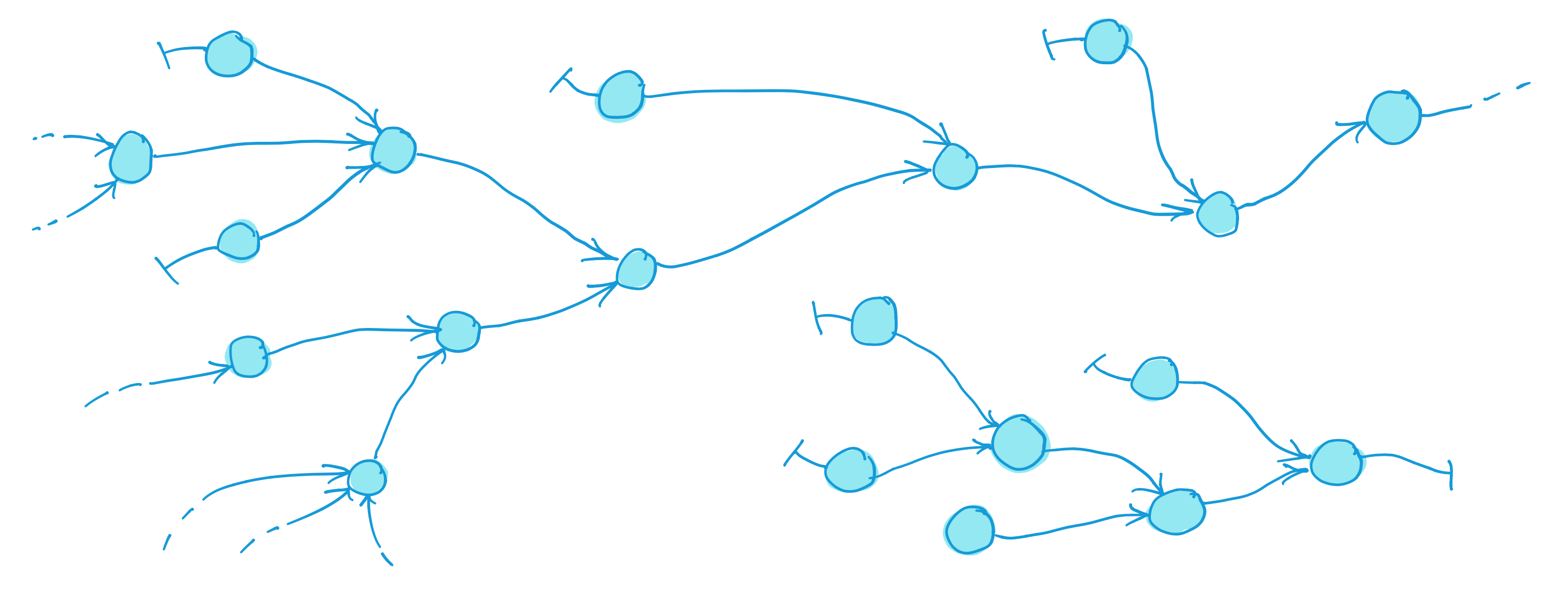}
    \caption{In an irreversible computation, each state (represented by a node) may have multiple predecessors. For example, if two Boolean variables are consumed and replaced with their logical conjunction, and if this is \atom{False}, then there are three possible predecessors: $(\atom{False},\atom{False})$, $(\atom{False},\atom{True})$, and $(\atom{True},\atom{False})$. This is loss of information, and it is impossible to determine which path was taken to reach the current state.}
  \end{subfigure}
  \begin{subfigure}{\textwidth}
    \centering
    \includegraphics[width=.6\textwidth]{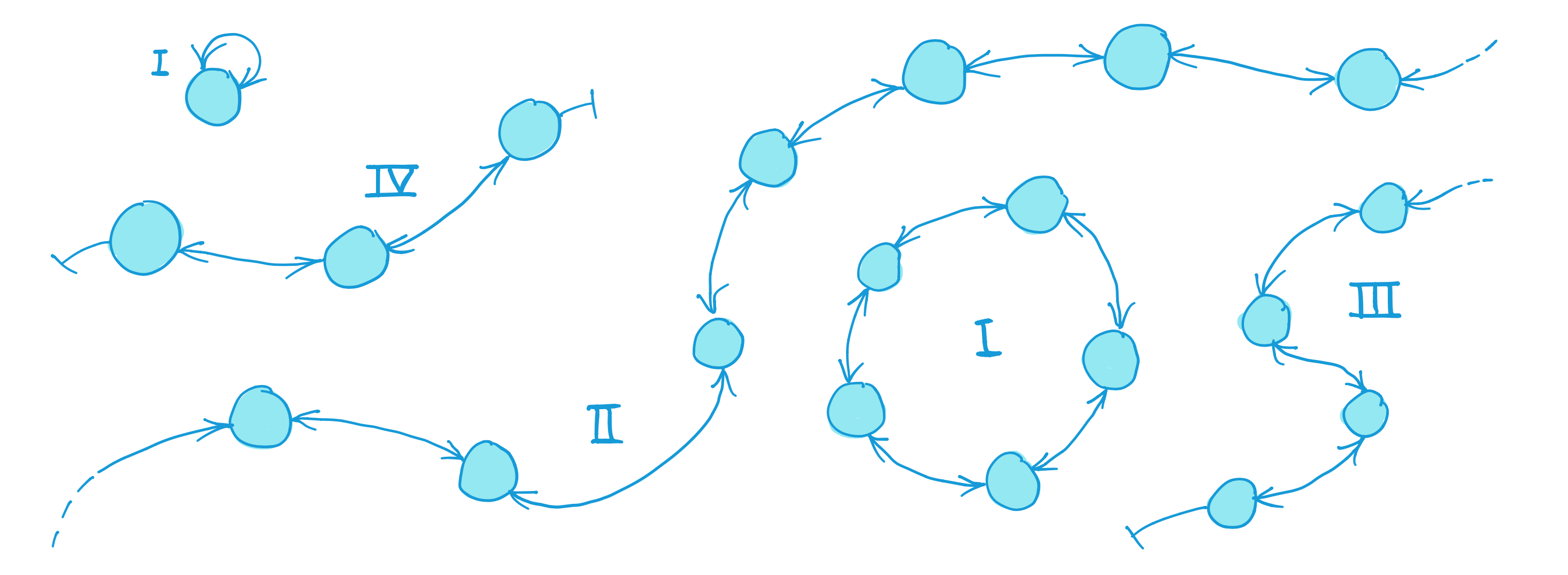}
    \caption{In contrast, a reversible computation can never lose information, and so it is always possible to retrace one's steps. As a result, every computational state has at most one successor and one predecessor. Denoting a halting state by $\vdash$, there are four kinds of reversible computation: (I)~loops without any halting state, (I\rmsp{}I)~bi-infinite computations without any halting state, (I\rmsp{}I\rmsp{}I)~semi-infinite computations which have one halting state (i.e.\ the computation has an initial state but no final state), and (I\rmsp{}V)~finite computations bounded by two halting states. The fourth class corresponds to programs of interest, as they have a well defined initial and final state (and can also be meaningfully reversed); the third corresponds to classical non-halting programs in irreversible models of computation, as does the second as the concept of an initial state is less well-defined in irreversible computation.}
  \end{subfigure}
  \addtocounter{figure}{-1}
\endqueuefloat

\queuefloat
  \centering
  \includegraphics[width=\textwidth]{aleph-molex-add-43.png}
  \caption{An abstract molecular translation of the reversible addition definition given in \Cref{lst:ex-add-def}, as well as the example addition of $4$ and $3$ following \Cref{lst:ex-add-43}. The abstract molecular model espoused here is defined by a fixed set of atoms (e.g.\ $\{+,\S,\Z,\bullet\}$) connected by two kinds of bond. Atoms joined by single-headed bonds are analogous to \textAleph\ terms, whilst double-headed bonds corresponds to nesting of composite terms. The bonds are represented by arrows because there is an intrinsic polarity/directionality to the molecules; this is not necessary, and can be replaced by auxiliary atoms such as \atom{L} and \atom{R}, but it does simplify our representation. Atoms are rendered by circles, whilst the reaction definitions also use variables written as un-circled letters. The $\bullet$ `atom' is special in that it is a placeholder for a nested composite term. Reaction arrows for `elementary' reactions are labelled by rule names, and starred reaction arrows indicate an effective reaction composed of multiple elementary steps. The arrows are drawn biased in anticipation of a biasing mechanism to be discussed later on within this section.}
  \label{fig:ex-plus}
\endqueuefloat

\queuefloat{listing}
  \centering
  \begin{sublisting}{\textwidth}
    \centering
    \begin{align*}
      & {!}~{\square}~{\blank}~{\unit}{;}\quad {!}~{\unit}~{\blank}~{\square}{;} \\
      & {\square}~{n}~{\unit} = {\square}~{n}~{\Z}~{\square}{;} && \ruleName{square--init} \\
      & {\square}~{\Z}~{m}~{\square} = {\unit}~{m}~{\square}{;} && \ruleName{square--term} \\
      & {\square}~({\S}{n})~{m}~{\square} = {\square}~n~({\S}{m''})~{\square}{:} && \ruleName{square--step} \\
      &\qquad {+}~{n}~{m}~{\unit} = {\unit}~{n}~{m'}~{+}{.} && \ruleName{square--step--sub$_1$}\\
      &\qquad {+}~{n}~{m'}~{\unit} = {\unit}~{n}~{m''}~{+}{.} && \ruleName{square--step--sub$_2$}
    \end{align*}
    \caption{The definition of $\square$ in \textAleph.}
    \label{lst:ex-square-def-aleph}
  \end{sublisting}
  \begin{sublisting}{\textwidth}
    \centering
    \includegraphics[width=.7\textwidth]{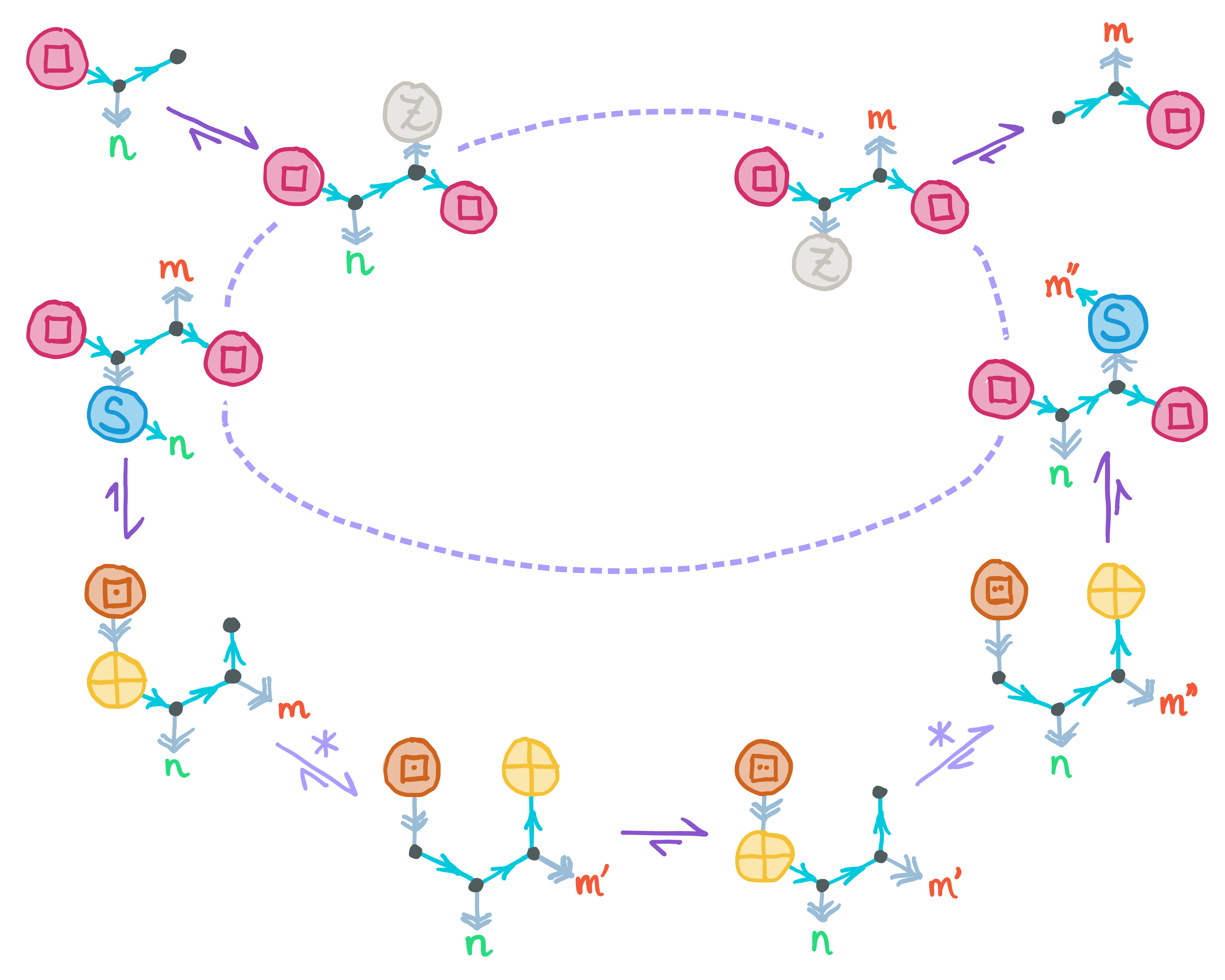}
    \caption{The definition of $\square$ in the abstract molecular formalism introduced in \Cref{fig:ex-plus}. The dashed lines indicate that there exists a term matching both patterns, and this is characteristic of conditionals and looping in \textAleph. Note that computation remains unambiguous and deterministic: by \Cref{fig:rev-classes}, each intermediate state has both a predecessor and a successor, and of the two matching patterns one will correspond to the reverse direction and one to the forward direction. For example, in the middle of a loop one may encounter the intermediate species ${\square}~{2}~{5}~{\square}$ and this term matches both the patterns ${\square}~({\S}{n})~{m}~{\square}$ and ${\square}~n~({\S}{m''})~{\square}$, but if it matches the former the computation will proceed forward whilst if the latter the computation will proceed backward. It will be shown later how the semantics keep track of this, and how we ensure the program really is unambiguous.}
    \label{lst:ex-square-def-mol}
  \end{sublisting}
  \caption{A reversible definition of squaring of natural numbers, $\square:n\leftrightarrow n^2$, both in \textAleph\ and in an abstract molecular formalism.}
  \label{lst:ex-square-def}
\endqueuefloat

\queuefloat{listing}
  \centering
\begingroup%
\newcommand{\subb}{\smash{\raisebox{-0.65em}{\scaleto{\begin{tikzpicture}%
    \draw[<->] (0,0) to [out=270,in=90, looseness=1.2] (-0.4,-0.22);%
  \end{tikzpicture}}{0.78em}}}}
\def\bindings#1{\{#1\}}%
\def\over#1{\smash{\overline{#1}}}%
\def\under#1{\smash{\underline{#1}}}%
\def\overunder#1{\smash{\overline{\underline{#1}}}}%
\def\hl{\\[-1em]}%
\def\nl{\\[1.5em]}%
\def\subl{\hl&\hspace{0.8em}\subb\\}%
\def\subr{\hl&\hspace{4em}\subb\\}%
\begin{align*}
  {!}\quad{\square}~{3}~{\unit} \leftrightsquigarrow{} &\bindings{n:3} \leftrightsquigarrow {\square}~{3}~{\Z}~{\square} \quad\cdots && \ruleName{square--init} \nl
  \cdots\quad{\square}~{3}~{\Z}~{\square} \leftrightsquigarrow{} &\under{\bindings{n:2,m:\Z}} && \ruleName{square--step} \subl
  \multicolumn{2}{c}{$\over{{+}~{2}~{\Z}~{\unit}}\leftrightarrow\under{{\unit}~{2}~{2}~{+}}$} && \ruleName{square--step--sub$_1$} \subr
  &\overunder{\bindings{n:2,m':2}} && \text{\small\emph{checkpoint}} \subl
  \multicolumn{2}{c}{$\over{{+}~{2}~{2}~{\unit}}\leftrightarrow\under{{\unit}~{2}~{4}~{+}}$} && \ruleName{square--step--sub$_2$} \subr
  &\over{\bindings{n:2,m'':4}} \leftrightsquigarrow {\square}~{2}~{5}~{\square}\quad\cdots && \ruleName{square--step} \nl
  \cdots\quad{\square}~{2}~{5}~{\square} \leftrightsquigarrow{} &\under{\bindings{n:1,m:5}} && \ruleName{square--step} \subl
  \multicolumn{2}{c}{$\over{{+}~{1}~{5}~{\unit}}\leftrightarrow\under{{\unit}~{1}~{6}~{+}}$} && \ruleName{square--step--sub$_1$} \subr
  &\overunder{\bindings{n:1,m':6}} && \text{\small\emph{checkpoint}} \subl
  \multicolumn{2}{c}{$\over{{+}~{1}~{6}~{\unit}}\leftrightarrow\under{{\unit}~{1}~{7}~{+}}$} && \ruleName{square--step--sub$_2$} \subr
  &\over{\bindings{n:1,m'':7}} \leftrightsquigarrow {\square}~{1}~{8}~{\square}\quad\cdots && \ruleName{square--step} \nl
  \cdots\quad{\square}~{1}~{8}~{\square} \leftrightsquigarrow{} &\under{\bindings{n:\Z,m:8}} && \ruleName{square--step} \subl
  \multicolumn{2}{c}{$\over{{+}~{\Z}~{8}~{\unit}}\leftrightarrow\under{{\unit}~{\Z}~{8}~{+}}$} && \ruleName{square--step--sub$_1$} \subr
  &\overunder{\bindings{n:\Z,m':8}} && \text{\small\emph{checkpoint}} \subl
  \multicolumn{2}{c}{$\over{{+}~{\Z}~{8}~{\unit}}\leftrightarrow\under{{\unit}~{\Z}~{8}~{+}}$} && \ruleName{square--step--sub$_2$} \subr
  &\over{\bindings{n:\Z,m'':8}} \leftrightsquigarrow {\square}~{\Z}~{9}~{\square}\quad\cdots && \ruleName{square--step} \nl
  \cdots\quad {\square}~{\Z}~{9}~{\square} \leftrightsquigarrow{} &\bindings{m:9} \leftrightsquigarrow {\unit}~{9}~{\square}\quad{!} && \ruleName{square--term}
\end{align*}%
\begin{center}\fbox{\begin{minipage}{0.8\textwidth}\vspace{-0.5em}\begin{align*}
  {!}\quad{\square}~{3}~{\unit}\leftrightarrow
  {\square}~{3}~{\Z}~{\square}\leftrightarrow
  {\square}~{2}~{5}~{\square}\leftrightarrow
  {\square}~{1}~{8}~{\square}\leftrightarrow
  {\square}~{\Z}~{9}~{\square}\leftrightarrow
  {\unit}~{9}~{\square}\quad{!}
\end{align*}\vspace{-1.2em}\end{minipage}}\end{center}%
\endgroup
  \caption{An example application of the $\square$ definition from \Cref{lst:ex-square-def}.}
  \label{lst:ex-square}
\endqueuefloat

\queuefloat{listing}
  \centering
  \begingroup%
\def\nl{\\[1.5em]}%
  \begin{align*}
    & \atom{False}~\infix{\atom{Not}}~\atom{True}{;} \\
    & \atom{True}~\infix{\atom{Not}}~\atom{False}{;}
  \nl
    & \infix{{<}~{m}~{\Z}}~\atom{False}{;} \\
    & \infix{{<}~{\Z}~{({\S}~{n})}}~\atom{True}{;} \\
    & \infix{{<}~{({\S}~{m})}~{({\S}~{n})}}~{b}{:} \\
    &\qquad \infix{{<}~{m}~{n}}~{b}{.}
  \nl
    & \infix{{\le}~{m}~{n}}~{b'}{:}
      \quad\infix{{<}~{n}~{m}}~{b}{.}
      \quad{b}~\infix{\atom{Not}}~{b'}{.} \\
    & \infix{{>}~{m}~{n}}~\mathrlap{{b}{:}}\phantom{{b'}{:}}
      \quad\infix{{<}~{n}~{m}}~{b}{.} \\
    & \infix{{\ge}~{m}~{n}}~{b'}{:}
      \quad \infix{{<}~{m}~{n}}~{b}{.}
      \quad {b}~\infix{\atom{Not}}~{b'}{.}
  \nl
    & {!}~{\blank}~\infix{\atom{InsertionSort}~{p}}~{\blank}~{\blank}{;} \\
    & {({\atom{InsertionSort}}~{p})}~\cursivexS~{\unit} = \atom{IS}~{p}~\cursivexS~[]~[]~\atom{IS}{;} \\
    & \atom{IS}~{p}~[{x}~{\cdot}~\cursivexS]~\cursivenS~\cursiveyS~\atom{IS} = \atom{IS}~{p}~\cursivexS~[{n}~{\cdot}~\cursivenS]~\cursivezS~\atom{IS}{:} \\
    &\qquad {x}~\cursiveyS~\infix{\atom{Insert}~{p}}~{n}~\cursivezS{.} \\
    & \atom{IS}~{p}~[]~\cursivenS~\cursiveyS~\atom{IS} = {\unit}~\cursivenS~\cursiveyS~{(\atom{InsertionSort}~{p})}{;}
  \nl
    & {x}~[]~\infix{\atom{Insert}~{p}}~{\Z}~[{x}]{;} \\
    & {x}~[{y}~{\cdot}~\cursiveyS]~\infix{\atom{Insert}~{p}}~{n}~[{z}~{z'}~{\cdot}~\cursivezS]{:} \\
    &\qquad \infix{{p}~{x}~{y}}~{b}{.} \\
    &\qquad {x}~[{y}~{\cdot}~\cursiveyS]~{b}~\infix{\atom{Insert'}~{p}}~n~[{z}~{z'}~{\cdot}~\cursivezS]{.}
  \nl
    & {x}~[{y}~{\cdot}~\cursiveyS]~\atom{True}~\infix{\atom{Insert'}~{p}}~{\Z}~[{x}~{y}~{\cdot}~\cursiveyS]{;} \\
    & {x}~[{y}~{\cdot}~\cursiveyS]~\atom{False}~\infix{\atom{Insert'}~{p}}~{({\S}~n)}~[{y}~{\cdot}~\cursivezS]{:} \\
    &\qquad {x}~\cursiveyS~\infix{\atom{Insert}~{p}}~{n}~\cursivezS{.} \\
    & \phantom{{[3~2~0~7~6~4~5~1]}~\infix{\atom{InsertionSort}~{<}}~{[1~4~3~3~3~0~0~0]}~{[0~1~2~3~4~5~6~7]}{.}}
  \end{align*}
\fbox{\begin{minipage}{.8\textwidth}\vspace{-0.2em}\begin{align*}
  & {[3~2~0~7~6~4~5~1]}~\infix{\atom{InsertionSort}~{<}}~{[1~4~3~3~3~0~0~0]}~{[0~1~2~3~4~5~6~7]}{.} \\
  & {[3~2~0~7~6~4~5~1]}~\infix{\atom{InsertionSort}~{\ge}}~\mathrlap{\underbrace{\phantom{[6~2~2~1~0~2~1~0]}}_{\text{garbage}}}{[6~2~2~1~0~2~1~0]}~{[7~6~5~4~3~2~1~0]}{.}
\end{align*}\vspace{-1.2em}\end{minipage}}
\endgroup
  \caption{An \textAleph\ implementation of the comparison operators $<$, $\le$, $>$ and $\ge$, and of insertion sort that can make use of these comparison operators. Lastly, the list $[{3}~{2}~{0}~{7}~{6}~{4}~{5}~{1}]$ is sorted into both ascending and descending order.}
  \label{lst:ex-sort}
\endqueuefloat

Our examples will mostly concern natural numbers because they are particularly amenable to inductive and recursive definitions of both their structure and operations over them, such as addition and multiplication. The standard approach to this is the Peano axiomatic formulation, in which a natural number is defined to either be $\Z\equiv 0$, or $\S n\equiv n+1$ whenever $n$ is a natural number---i.e.\ the \emph{successor} of $n$. For example, 4 is constructed as $\S(\S(\S(\S\Z)))$. In fact there are seven more axioms in order to clarify such subtle points as uniqueness of representation, conditions for equality, and non-negativity. Addition can then be defined recursively by the base case $\Z+b=b$ and the inductive step $\S a+b=\S(a+b)$, and multiplication by $a\cdot\Z=\Z$ and $a\cdot\S b=a+a\cdot b$. To render addition reversible, we write $+:(\Z,b)\leftrightarrow(\Z,b)$ and $+:(\S a,b)\leftrightarrow(\S a,\S(a+b))$, realising the proposed embedding $+:(x,y)\leftrightarrow(x,x+y)$. Multiplication is embedded similarly, but there is a domain-restriction imposed in that $a$ must not be zero (or else $b$ cannot be uniquely recovered); $b$ may, however, be zero. In \textAleph, this is written thus
  \begin{align*}
    & {+}~{\Z}~{b}~{\unit} = {\unit}~{\Z}~{b}~{+}{;} && \ruleName{add--base} \\
    & {+}~({\S} {a})~{b}~{\unit} = {\unit}~({\S} {a})~({\S} {b'})~{+}{:} && \ruleName{add--step} \\
    &\qquad  {+}~{a}~{b}~{\unit} = {\unit}~{a}~{b'}~{+}{.} && \ruleName{add--step--sub}
  \end{align*}
and is perhaps best understood by example. In \Cref{lst:ex-add}, we perform the example addition of $4$ and $3$ and, as promised, an example failure mode in which we attempt to subtract $5$ from $2$.

\textAleph\ can thus be seen, in a loose sense, to be a term-rewriting system. It is `loose' in the sense that its `sub-terms' exist `separate' from their parent term. In the example of \Cref{lst:ex-add}, an addition term is written ${+}~{a}~{b}~{\unit}$ and is mapped by the transition rules to ${\unit}~{a}~{c}~{+}$ where $c\equiv a+b$; here $+$ is an `atom', $a$, $b$ and $c$ are terms representing natural numbers (as composite terms formed from nested applications of the atoms $\S$ and $\Z$), and $\unit$, or `unit', is the empty term and is used by convention in \textAleph\ to avoid certain ambiguities. A program corresponds to a series of definitions of transition rules which pattern match on terms, and then substitute the variables into an output pattern. This matching process is subject to certain constraints that ensure reversibility. In addition, a rule may specify sub-rules that indicate how to transform knowledge of some variable, e.g.~$b$, to knowledge of another, e.g.~$b'$, and is the primary mechanism of composition in \textAleph. Inherent to the semantics of the calculus is a secondary composition mechanism, which was the only mechanism available in an earlier iteration of this calculus (see \Cref{app:sigma}): composite terms at any level are all subject to the same transition rules. This behaviour is more reminiscent of functional programming languages, but is somewhat clumsy in practice; nevertheless it is not without utility in \textAleph---in particular, it is well suited for when a continuation-passing style approach is favoured. The sub-rule mechanism, on the other hand, is more reminiscent of declarative programming languages. 

When there is no matched rule, such as in \Cref{lst:ex-add-52}, this simply means that there is no successor state and so computation cannot continue. It is in fact very similar in nature to the case when computation succeeds, as then we obtain a term which has a predecessor state via some rule, but lacks a rule to generate a successor state. To properly distinguish these two conditions, we must explicitly mark `true' halting states; for addition, this is written as
\begin{align*}
  & {!}~{+}~{\blank}~{\blank}~{\unit}{;}\quad {!}~{\unit}~{\blank}~{\blank}~{+}{;}
\end{align*}
where $!$ (`bang') indicates that any term of the specified forms is a valid computational output. The former corresponds to the output of a subtraction, and the latter to the output of an addition. This subtlety is further contextualised by \Cref{fig:rev-classes}.

We conclude this first example by alluding to a possible molecular implementation, as depicted in \Cref{fig:ex-plus}. Here we see the importance of \textAleph\ being interpretable as a term-rewriting system, as the entirety of the state of the computation must be encoded within a single macromolecule (or molecular complex). Strictly speaking, this is not a requirement as DNA Strand Displacement systems~\cite{dsd} achieve computation without this requirement; in payment for this, though, the entire reaction volume is dedicated to the same (typically analogue) program. The precise mechanism of the reactions is omitted from the figure, instead expressing the model as an abstract chemical reaction network. Finding possible reaction mechanisms to implement \textAleph, or similar calculi, in real chemical systems will be the subject of future work. Whilst the abstract molecular formalism is attractive form an explanatory perspective, \textAleph\ provides a more concise formalism that is also easier to manipulate and to explicate its semantics.

\para{Squaring}

Eliminating redundancy in the definition of addition yielded its inverse for free, but one may still object that the additional output is `garbage' and not of any utility. What will happen if one uses this addition subroutine many times in a larger program? Naively we may expect this garbage to accumulate, requiring either active dissipation of the additional entropy or the application of Bennett's algorithm to clean it up. In fact, by retraining one's thought process from the irreversible programming paradigm to the reversible paradigm, it is often possible to make use of this garbage data. We demonstrate this with the example of finding the square of a natural number. The candidate function, $\square:n\leftrightarrow n^2$, is an injection and so clearly meets our requirement of partial bijection. Therefore, we have good reason to believe that it is possible to implement it. The obvious approach of using multiplication will not work because its reversible embedding will take a form not dissimilar to $\times:(m,n)\leftrightarrow(m,m\cdot n)$, and so would yield $\square:n\leftrightarrow(n,n^2)$. Whilst this suffices for realising the square of a number, it retains too much redundancy in its outputs.

Often a helpful tactic is to consider an inductive approach. For the square numbers this is encapsulated by $(n+1)^2-n^2=2n+1$, from which can be obtained the identity $n^2\equiv\sum_{k=0}^{n-1}2k+1$. Again, we need to be clever: in order to achieve our desired (partial) bijection, we need to completely consume our input value of $n$. This can be done by evaluating the sum in reverse. Instantiate a new variable, $m=0$; as $m$ is set to a known value, this is reversible. Then, perform the following loop until $n$ reaches 0: decrement $n$, add $2n+1$ to $m$, repeat. At the end of this loop, $n$ will have reached a unique value (and can thus be reversibly destroyed) and $m$ will have been set to the square of the original value of $n$. In addition, this can be implemented with our addition subroutine in two steps, by adding $n$ twice to $m$ (retaining the value of $n$) and finally incrementing $m$. This is implemented in \textAleph\ in \Cref{lst:ex-square-def}, and the example of squaring $3$ (equivalently, taking the square root of $9$) is presented in \Cref{lst:ex-square}.

\section{\textbfAleph\ by Example 2: Parallelism \& Concurrency}
\label{sec:ex2}

Having introduced the essence of \textAleph\ in the previous section, we now dive deeper into some more advanced features and examples of \textAleph.

\para{Sugar}

To reduce boilerplate and increase clarity in longer programs, it is helpful to introduce some syntactic sugar (shorthands). More sugar will be introduced later for the definition of the programming language \alethe\ (which is really just sugared \textAleph), but for now only a light sprinkling is required.

Many rules take the rote form
\begin{align*}
  & {!}~{f}~{x_1}~{x_2}~\cdots~{x_m}~{\unit}{;}\quad
    {!}~{\unit}~{y_1}~{y_2}~\cdots~{y_n}~{f}{;} \\
  & {f}~{x_1}~{x_2}~\cdots~{x_m}~{\unit}={\unit}~{y_1}~{y_2}~\cdots~{y_n}~{f}{:} \\
  &\qquad \cdots
\end{align*}
which we can abbreviate with an infix form as
\begin{align*}
  & {x_1}~{x_2}~\cdots~{x_m}~\infix{f}~{y_1}~{y_2}~\cdots~{y_n}{:} \\
  &\qquad \cdots
\end{align*}
where the halting patterns are implied. In the special case of $f$ a single symbol, such as $+$, $\times$ or $\square$, we omit the backticks and write, e.g., ${4}~{3}~{+}~{4}~{7}$. Note that, if $f$ is a composite term, then we additionally assert ${!}~{f}$.

We have already seen sugar for numeric data, e.g.\ $4\equiv\S(\S(\S(\S\Z)))$. Another common data type is that of lists. As is standard in the functional world, we opt for singly linked lists implemented as composite pairs. If $\Cons~{x}~{y}$ represents the pair $(x,y)$ and $\Nil$ the empty list, then the list $[{5}~{2}~{4}~{3}]$ has corresponding representation $\Cons~{5}~(\Cons~{2}~(\Cons~{4}~(\Cons~{3}~\Nil)))$. We also introduce sugar for matching on a partial prefix of a list, i.e.\ $[{x}~{y}~{\cdot}~\cursivezS]$ corresponds to $\Cons~{x}\ (\Cons~{y}\ \cursivezS)$.

\para{Parallelism}

With this sugar thus defined, we can rewrite the somewhat clumsy definition of recursively mapping a function $f$ over a list,
\begin{align*}
  & {!}~{\atom{Map}}~{\blank}{;}\quad {!}~{({\atom{Map}}~{\blank})}~{\blank}~{\unit}{;}\quad {!}~{\unit}~{\blank}~{({\atom{Map}}~{\blank})}{;} \\
  & {({\atom{Map}}~{f})}~{\Nil}~{\unit} = {\unit}~{\Nil}~{({\atom{Map}}~{f})}{;} \\
  & {({\atom{Map}}~{f})}~{({\Cons}~{x}~\cursivexS)}~{\unit} = {\unit}~{({\Cons}~{y}~\cursiveyS)}~{({\atom{Map}}~{f})}{:} \\
  &\qquad {f}~{x}~{\unit} = {\unit}~{y}~{f}{.} \\
  &\qquad {({\atom{Map}}~{f})}~\cursivexS~{\unit} = {\unit}~\cursiveyS~{({\atom{Map}}~{f})}{.}
\end{align*}
more concisely and clearly as 
\begin{align*}
  & []~\infix{\atom{Map}~{f}}~[]{;} \\
  & [{x}~{\cdot}~\cursivexS]~\infix{\atom{Map}~{f}}~[{y}~{\cdot}~\cursiveyS]{:} \\
  &\qquad {x}~\infix{f}~{y}{.} \\
  &\qquad \cursivexS~\infix{\atom{Map}~{f}}~\cursiveyS{.}
\end{align*}
Notice that the order in which the sub-rules are executed does not make a difference to the final result, due to referential transparency; in fact \textAleph, being declarative, does not ascribe any importance to the ordering of the statements. Moreover, should a rule not be necessary for the final computation (or if there are multiple routes to the answer) then that rule will not necessarily be executed. A rule may even be evaluated more than once; for example, Bennett's algorithm may be implemented as
\begin{align*}
  & {x}~\infix{\atom{Bennett}~{f}}~{x}~{y}{:} \\
  &\qquad {x}~\infix{f}~\textit{garbage}~{y}{.}
\end{align*}
wherein the function $f$ will be evaluated once in the forward direction, its output $y$ will be copied, and then $f$ will be evaluated again in the reverse direction to consume the garbage. This arbitrarity in rule ordering and execution is important to its ability to operate in a stochastic system such as a molecular context, although in practice a compilation pass that chooses and enforces an optimal execution plan is important for efficiency.

The implicit duplication of variables that may occur means that one possible interpretation of \atom{Map} is
\begin{align*}
  & []~\infix{\atom{Map}~{f}}~[]{;} \\
  & [{x}~{\cdot}~\cursivexS]~\infix{\atom{Map}~{f}}~[{y}~{\cdot}~\cursiveyS]{:} \\
  &\qquad \infix{\atom{Dup}~{f}}~{f'}{.} \\
  &\qquad {x}~\infix{f}~{y}{.} \\
  &\qquad \cursivexS~\infix{\atom{Map}~{f'}}~\cursiveyS{.}
\end{align*}
from which we can see that not only is the order of the sub-rules arbitrary, but that they can be evaluated in parallel as the following example makes clear:
\begingroup%
\def\bindings#1{\{#1\}}%
\def\over#1{\smash{\overline{#1}}}%
\def\under#1{\smash{\underline{#1}}}%
\newcommand{\sub}[2][1]{\smash{\raisebox{-0.65em}{\scaleto{\begin{tikzpicture}%
    \draw[<->] (0,0) to [out=270,in=90, looseness=#1] (#2,-0.5);%
  \end{tikzpicture}}{1.78em}}}}%
\begin{align*}
  {!}\quad{({\atom{Map}}~{\square})}~{[3~5~8]}~{\unit} \leftrightsquigarrow \bindings{ \under{f:\square, x:3\vphantom{[]}} &, \under{f':\square, \cursivexS:{[5~8]}} } \\
  \sub{-1}\hspace{2.2em} & \hspace{3em}\sub{0.5} \\
  \over{{\square}~{3}~{\unit}}=\under{{\unit}~{9}~{\square}} \quad&\quad
  \over{{(\atom{Map}~{\square})}~{[5~8]}~{\unit}}=\under{{\unit}~{[25~64]}~{(\atom{Map}~{\square})}} \\
  \sub{0.2}\hspace{1.5em} & \hspace{4.8em}\sub[0.5]{-2.3} \\
  \bindings{ \over{f:\square, y:9\vphantom{[]}} &, \over{f':\square, \cursiveyS:{[25~64]}} } \leftrightsquigarrow {\unit}~{[9~25~64]}~{({\atom{Map}}~{\square})}\quad{!}
\end{align*}%
\endgroup

There remains a subtle point to be made: variables can be implicitly duplicated if they are used by multiple rules, but there is then a contract made with these rules that they really do return the variable unchanged. It is not possible to ensure this statically, however, and so it is entirely possible that the copies of some variable may diverge. In this case, running \atom{Dup} in reverse will fail, and hence computation will stall. If duplication is not used, then computation will occur linearly and the changed value may be fed into subsequent rules unnoticed. In this case, computation may run to completion but yield an incorrect result due to the logic error.

\para{Sorting}

A larger example program, which is capable of sorting a list using an arbitrary comparison function, is presented in \Cref{lst:ex-sort}. Whilst of poor algorithmic complexity, insertion sort is employed for simplicity. A more efficient implementation using merge sort is available in the accompanying standard library\footnotemark. Clearly sorting is an irreversible process, as its purpose is to discard information regarding the original ordering of the list. The presented sorting algorithm puts a little effort towards increasing the utility of the garbage, in that the garbage data is a list saying where the original list item was placed into the sorted list \emph{at the moment of insertion}. The insertion sort included in the standard library (\Cref{lst:ex-sort2}) applies some additional processing to make this garbage data correspond to the permutation that maps the original list to the sorted list.
\footnotetext{The \alethe\ standard library is available from \url{https://github.com/hannah-earley/alethe-examples}.}

\para{Concurrency}

In addition to automatic parallelisation of independent sub-rules, \textAleph\ also supports defining transitions between separate terms. The motivation for this is for \textAleph\ to be able to fully exploit molecular architectures, but its utility is general and it is intended that a future version of the \alethe\ interpreter will support concurrency.

\subparagraph*{Biasing Computation}

In \Cref{fig:ex-plus} and \Cref{lst:ex-square-def-mol}, abstract molecular implementations of addition and squaring were introduced respectively. Notably, the reaction arrows---whilst reversible---were biased in the forward direction. From a thermodynamic perspective, the direction in which a reaction occurs cannot be specified in an absolute sense, and depends on the conditions of the reaction volume at a given point in time. Moreover, at equilibrium the net direction of each reaction is null: they make no net progress. These concerns are discussed in more detail in \textcite{earley-parsimony-i}, but suffice it to say that trying to arrange the conditions of the system such that the computational terms themselves are inherently biased is impracticable, and will also limit the number of transition steps that can be executed. Moreover, it makes it difficult to run sub-rules in a reverse direction (such as is needed for Bennett's algorithm).

The solution is to separate the concerns of computation and biasing said computation. In biochemical systems, this is (mostly) achieved by using a common free energy carrier in the form of $\ce{ATP}$ and $\ce{ADP + P_{\textrm{i}}}$, which are held in disequilibrium such that the favourable reaction direction is $\ce{ATP + H2O <=>> ADP + P_{\textrm{i}}}$. Other biochemical reactions are then coupled to one or more copies of this hydrolysis reaction. For our purposes, we generalise this to assume that free energy is available in the form of two terms $\oplus$ and $\ominus$, where the concentration of $\oplus$ in the reaction volume is greater than that of $\ominus$. The `computational bias' quantifying the average net proportion of computational transitions that are successful is found to be $b=([\oplus]-[\ominus])/([\oplus]+[\ominus])$.

With a bias system thus defined, we can couple any computational transition to it by simply having the input side consume a $\oplus$ term and the output side release a $\ominus$ term. The more transitions that couple to the bias source, the faster and more robustly the computation proceeds. For example, the square definition can be amended like so:
\begin{align*}
  & {!}~{\square}~{\blank}~{\unit}{;}\quad {!}~{\unit}~{\blank}~{\square}{;} \\
  & \begin{parties}&\oplus \\ &{\square}~{n}~{\unit}\end{parties} = \begin{parties}&\ominus \\ &{\square}~{n}~{\Z}~{\square}\end{parties}{;} && \ruleName{square--init} \\
  & \begin{parties}&\oplus \\ &{\square}~{\Z}~{m}~{\square}\end{parties} = \begin{parties}&\ominus \\ &{\unit}~{m}~{\square}\end{parties}{;} && \ruleName{square--term} \\
  &\begin{parties}&\oplus \\ &{\square}~({\S}{n})~{m}~{\square}\end{parties} =
   \begin{parties}&\ominus \\ &{\square}~n~({\S}{m''})~{\square}\end{parties}{:} && \ruleName{square--step}\\
  &\qquad {+}~{n}~{m}~{\unit} = {\unit}~{n}~{m'}~{+}{.} && \ruleName{square--step--sub$_1$}\\
  &\qquad {+}~{n}~{m'}~{\unit} = {\unit}~{n}~{m''}~{+}{.} && \ruleName{square--step--sub$_2$}
\end{align*}
The braces indicate that these definitions are concurrent, in that the transition will draw (release) two terms from (into) the reaction volume. As written above, the scheme is fairly basic in that it only drives computation in one direction. An improved scheme may make use of a hidden variable in each term to indicate its preferred direction of computation. If this direction is reversed, then the transition will instead draw a $\ominus$ term and release a $\oplus$ term. If a sub-rule then needs to be run in the opposite direction to its parent, then the hidden variable need simply be inverted. This scheme can be made even more sophisticated by allowing the hidden variable to refer to alternate bias sources in case there are multiple to which the transition could couple (as is recommended by \textcite{earley-parsimony-ii,earley-parsimony-iii}).

\subparagraph*{Communication}

A more canonical application of concurrency is that of communication between different computers. There are many possible communication schemes, and an overview and analysis of schemes appropriate for Brownian computers is presented by \textcite{earley-parsimony-ii}, but we shall consider the simple example of an open communication channel between two computers, \atom{Alice} and \atom{Bob}. \atom{Alice} has a list, the contents of which she wishes to send to \atom{Bob}. A naive approach to this may resemble the \textAleph\ definition
\begin{align*}
  & \atom{Alice}~[{x}~{\cdot}~\cursivexS] = \begin{parties}&\atom{Alice}~\cursivexS\\&\atom{Courier}~{x}\end{parties}{;} &
  & \begin{parties}&\atom{Bob}~\cursiveyS\\&\atom{Courier}~{y}\end{parties} = \atom{Bob}~[{y}~{\cdot}~\cursiveyS]{;}
\end{align*}
where the \atom{Courier} terms are used to convey list items between the two computers. Unfortunately, this does not quite realise the desired behaviour in a Brownian context: Suppose \atom{Alice} starts with the list $[1~2~3~4~5]$. She will generate the courier terms $(\atom{Courier}~1)$, $(\atom{Courier}~2)$, \ldots, $(\atom{Courier}~5)$, as expected, but these terms will be delievered to \atom{Bob} via a diffusive approach. Except for the case of a one-dimensional system, \atom{Bob} will almost certainly receive these couriers in a random order, completely uncorrelated from the original order. This may be avoided if delivery over the channel is substantially faster than the rates of term fission/fusion reactions, but this is not likely in a chemical system. Moreover, this being a reversible Brownian system, \atom{Alice} will from time to time re-absorb a courier that she previously dispatched, thereby increasing the opportunity for the list order to be shuffled.

The take-home message, here, is that whilst the local dynamics of a concurrent reversible computation system---such as \textAleph---may well be reversible, the global dynamics are not necessarily reversible. In fact, this is a common property of microscopically reversibly systems, and is the basis of thermodynamics: the manifestation of macroscopically irreversibly dynamics from microscopically reversible physics. It is doubtful that a model of concurrent reversible programming could preclude the possibility of such increases in entropy without severely restricting the system: likely any attempt to do so would effectively prevent the use of concurrency. Nevertheless, the programmer has a high degree of control here and so can, in principle, avoid entropy generation except where desired. In the above case of sending an ordered list, one could for example explicitly sequence the couriers. This suggests that a system of reversible molecular computers could have exceptionally low entropy generation and could realise very strange behaviours by getting as close to the implementation of a Maxwell D{\ae}mon as physics allows. Conversely, one is also free to exploit the thermodynamic properties of the system; for example, one could realise (to the extent the laws of physics permit) a true random number generator.

\subparagraph*{Resources}

Our last example of concurrency concerns the distribution of conserved resources amongst computers. If our computers are to respect the reversibility of the laws of physics, they should also respect mass conservation and so will need to contend with their finity. Amorphous computing presents one approach to achieving powerful computation from limited computational subunits, but it does so by making extensive use of communication which precludes the ability to perform meaningfully reversible computation. We instead suppose that the computers can exchange resources, such as memory units and structural building blocks, with the environment. Whilst this also has thermodynamic consequences~\cite{earley-parsimony-iii}, it at least separates the concerns of computation from resource access and thus preserves the ability to perform reversible computation. 

A host of resource distribution schemes, and thermodynamic analyses thereof, are presented by \textcite{earley-parsimony-iii}, but we content ourselves here with demonstrating how computers may interact with resources free in solution. In particular, we amend the definition of addition to be mass-conserving:
\begin{align*}
  & {+}~{\Z}~{b}~{\unit} = {\unit}~{\Z}~{b}~{+}{;} && \ruleName{add--base} \\
  & \begin{parties}&\S\\&{+}~({\S} {a})~{b}~{\unit}\end{parties} = {\unit}~({\S} {a})~({\S} {b'})~{+}{:} && \ruleName{add--step} \\
  &\qquad  {+}~{a}~{b}~{\unit} = {\unit}~{a}~{b'}~{+}{.} && \ruleName{add--step--sub}
\end{align*}

\para{Effects and Contexts}

As briefly mentioned, composite terms are subject to the same transition rules. Suppose \atom{Lena} is learning \textAleph\ and experiments with this, creating the following trivial wrapper around $\square$:
\begin{align*}
  & {!}~{\blank}~\infix{\atom{MySquare}}~{\blank}{;} \\
  & {\atom{MySquare}}~{n}~{\unit} = {\atom{MySquare}}~{({\square}~{n}~{\unit})}~{\atom{MySquare}}{;} \\
  & {\atom{MySquare}}~{({\unit}~{m}~{\square})}~{\atom{MySquare}} = {\atom{MySquare}}~{m}~{\unit}{;}
\end{align*}
This definition, if a little contrived, will function as intended. Suppose, now, that she wants to inspect what happens within the loop of $\square$, and so writes the following:
\begin{align*}
  & {!}~{\blank}~\infix{\atom{MySquare'}}~{\blank}{;} \\
  & {\atom{MySquare'}}~{n}~{m}~{\unit} = {\atom{MySquare'}}~{({\square}~{n}~{m}~{\square})}~{\atom{MySquare'}}{;} \\
  & {\atom{MySquare'}}~{({\square}~{n'}~{m'}~{\square})}~{\atom{MySquare'}} = {\atom{MySquare'}}~{n'}~{m'}~{\unit}{;}
\end{align*}
Now, what happens if we instantiate the term ${\atom{MySquare'}}~{3}~{\Z}~{\unit}$? Well, the final term could be any of ${\unit}~{3}~{\Z}~{\atom{MySquare'}}$, ${\unit}~{2}~{5}~{\atom{MySquare'}}$, ${\unit}~{1}~{8}~{\atom{MySquare'}}$ or ${\unit}~{\Z}~{9}~{\atom{MySquare'}}$. Whilst this arguably achieves \atom{Lena}'s aim of inspecting the execution of $\square$, there is a problem in that \textAleph\ claims all of these terms are halting states yet \Cref{fig:rev-classes} asserts there should be a maximum of two halting states. Despite being contrived, this example shows that we need to introduce another constraint into \textAleph\ to prevent unexpected entropy generation: Any composite term or sub-term can only be created in a halting state, and can only be consumed in a halting state. Moreover, the creation and consumption patterns must be unambiguous as to which of the term's two halting states they refer (if, indeed, there are two). Otherwise, each instance of this ambiguity would multiply the state space and thus there would be an exponential increase in the size of the state space over time.

With the constraint on composite terms clarified, one can then ask whether all terms must follow the same set of rules. It turns out there is an important case where distinguishing between composite terms and top-level terms is useful; specifically, some cases of effectful computation. Suppose that the reaction volume has been endowed with a spatial lattice along which terms can travel, and imagine a term $(\atom{Cons}~\atom{Charlie}~\atom{Dan})$ with such a wanderlust. If the composite terms \atom{Charlie} and \atom{Dan} want to travel to different locations, then where will the term end up? It is likely that a tug-of-war will occur, and so it makes sense to restrict the effecting of translocation to top-level terms.

This distinction is achieved by introducing term contexts: a top-level term has a top-level context, which is some label (itself a term) that may contain information, whilst composite terms have `one-hole contexts'. For example, a term attached to a lattice may have context \atom{Lattice} while a term free in solution might have context \atom{Free}. Concretely, the aforementioned tug-of-war may be resolved by giving \atom{Charlie} control by making him the top-level term, rendered as $\atom{Lattice}:\atom{Charlie}~{(\atom{Dan})}$. Meanwhile, \atom{Dan} is rendered as $(\atom{Lattice}:(\atom{Charlie}~{\bullet})):\atom{Dan}$ where $\bullet$ indicates that this is a one-hole context. One-hole contexts can be defined for many data structures, but for a tree they correspond to removing some sub-tree of interest, replacing it with a `hole'; see \textcite{ohc-diff} for some interesting properties of one-hole contexts. Rules may pattern match on top-level contexts and thus consume the information held within, or even create and destroy top-level terms, but they cannot match on one-hole contexts as this would risk altering their structure; instead, one-hole contexts can only be matched by `opaque variables'. An opaque variable is a special variable found only in context patterns, which can match against a top-level context or a one-hole context, whilst regular variables in a context pattern can only match against top-level contexts. That is, the following are allowed
\begin{align*}
  & \begin{parties}\atom{Lattice}&:\atom{Charlie}~{x}\end{parties} = 
  \begin{parties}\atom{Lattice}&:\atom{Charlie}~{\unit} \\ \atom{Free}&:{x}\end{parties}{;} \\
  & \begin{parties}{\gamma}&:\atom{Dan}~{[\atom{Dan's Stuff}]}\end{parties} =
  \begin{parties}{\gamma}&:\atom{Dan}\\\atom{Lattice}&:{[\atom{Dan's Stuff}]}\end{parties}{;}
\end{align*}
whilst this is not
\begin{align*}
  & \begin{parties}(\atom{Lattice}:(c~{\bullet}))&:\atom{Dan}\end{parties} =
  \begin{parties}(\atom{Lattice}:({\bullet}~\atom{Dan}))&:c\end{parties}{;}
\end{align*}
This is not too onerous a restriction, as if one wishes to manipulate the structure of the one-hole context one can simply match against a higher level context.

In the earlier \textAleph\ definitions, contexts were missing from the rule patterns. This can be seen as another example of syntactic sugar, with a missing context implying an opaque variable context (i.e.\ $\gamma:$) such that the rule can match against any term at any level. For concurrent definitions, however, all participating terms must be contextualised as otherwise it is unclear how to assign the results to the bound one-hole contexts.

It is as yet unclear how translocation along a lattice, or other effects, is actually achieved. Typically effects will be introduced as additional computational primitives, as by definition their actions are not `computational'. As these computational primitives must be instantiated as top-level terms, we require a way to interface between computational terms and effector terms and this warrants a continuation-passing-style approach. For example, walking along a lattice from coordinate $\alpha$ to coordinate $\beta$ may be realised thus:
\begin{align*}
  & {!}~\atom{Charlie}'~{\blank}{;} \quad {!}~\atom{Charlie}''~{\blank}{;} \\
  & \begin{parties}\atom{Latt}&:\atom{Charlie}~{\beta}~{x}\end{parties} =
    \begin{parties}\atom{Latt}&:\atom{Walk}~{\beta}~{(\atom{Charlie}'~x)}\end{parties}{;} \\
  \mbox{\emph{(primitive)}}\quad &
    \begin{parties}\atom{Latt}&:\atom{Walk}~{\beta}~c\end{parties} \leftrightarrow
    \begin{parties}\atom{Latt}&:\atom{Walk}'~{\alpha}~c\end{parties}{;} \\
  & \begin{parties}\atom{Latt}&:\atom{Walk}'~{\alpha}~{(\atom{Charlie}''~x)}\end{parties} =
    \begin{parties}\atom{Latt}&:\atom{Charlie}'''~{\alpha}~{x}\end{parties}{;}
\end{align*}
That is, \atom{Walk} is given a coordinate to travel to as well as a continuation (which is free to perform additional computation during the walk, if desired). \atom{Walk} then replaces the destination coordinate with the origin coordinate to ensure reversibility. Finally, we define a transition from $\atom{Walk}'$ in order to return control to the continuation. This reveals a subtle point, that the effectful primitives are intentionally not marked as halting so that control can be transferred to and fro' them.

\flushpagequeue

\section{The Calculus}

The \textAleph\ calculus thus introduced has a very simple definition. In BNF notation, it is
\begin{equation*}\begin{aligned}
  \ruleName{pattern term} && \tau &::= \bnfPrim{atom} ~|~ \bnfPrim{var} ~|~ (\,\tau^\ast\,) \\
  \ruleName{party} && \pi &::= \tau:\tau^\ast ~|~ \bnfPrim{var}':\tau^\ast \\
  \ruleName{definition} && \delta &::= \{\pi^\ast\}=\{\pi^\ast\}:\pi\rlap.^\ast ~|~ {!}~{\tau}
\end{aligned}\end{equation*}
where \bnfPrim{atom} is an infinite set of atomic symbols (conventionally starting with an uppercase letter or a symbol), \bnfPrim{var} an infinite set of variables (conventionally starting with a lowercase letter), and $\bnfPrim{var}'$ is an orthogonal infinite set of variables (conventionally rendered in Greek) used for opaquely matching one-hole-contexts. A program is a series of definitions, $\delta^\ast$, and a physical term is simply a pattern term without variables. Notice that the form of the sub-rules differs from the examples: a sub-rule can be separated into the instantiation of a term according to an input pattern, the evolution of that term, followed finally by the consumption of its final halting state according to the output pattern. In this view, these sub-terms are identified by one-hole-contexts---specifically, opaque variables. This formulation not only allows the semantics to represent automatic parallelisation and a non-deterministic sub-rule ordering, but also begets an additional feature whereby sub-rules can instantiate top-level terms. The sub-rule forms in the examples can then be seen as sugar, i.e.\ $s=t.$ is equivalent to $\lambda:s.~\lambda:t.$ where $\lambda\in\bnfPrim{var}'$ is a fresh opaque variable. To illustrate, recall the definition of natural addition from \Cref{lst:ex-add-def}; desugared, this takes the form
  {\def\ruleLabel#1{\llap{\tiny$#1.$}~}\begin{align*}
    \ruleLabel{1}& {!}~{+}~{a}~{b}~{\unit}{;} \\
    \ruleLabel{2}& {!}~{\unit}~{a}~{b}~{+}{;} \\
    \ruleLabel{3}& \{\alpha:{+}~{\Z}~{b}~{\unit}\} = \{\alpha:{\unit}~{\Z}~{b}~{+}\}{;} && \ruleName{add--base} \\
    \ruleLabel{4}& \{\alpha:{+}~({\S} {a})~{b}~{\unit}\} = \{\alpha:{\unit}~({\S} {a})~({\S} {b'})~{+}\}{:} && \ruleName{add--step} \\
    \raisebox{0pt}[1.9em]{}%
      {\begin{aligned}\ruleLabel{4a}&\\\ruleLabel{4b}&\end{aligned}}&%
      {\begin{aligned}
        &\qquad  \beta: {+}~{a}~{b}~{\unit}{.} \\
        &\qquad  \beta: {\unit}~{a}~{b'}~{+}{.}
      \end{aligned}} && \ruleName{add--step--sub}
  \end{align*}}%
That is, there are four definitions: two `halting', and two `computational'. On the left side of definition 4, we have a bag of one party. This party has as context pattern the opaque variable $\alpha$, and its pattern term is a composite pattern term of length 4, consisting of the atom $+$, the composite pattern term of length 2 consisting of the atom $\S$ and the variable $a$, the variable $b$, and the empty composite pattern term (`unit'). There is also an opaque variable $\beta$, confined to the inner scope of definition 4. During execution of the sub-rule in the forward direction, the variables $a$ and $b$ will first be consumed in order to generate the sub-term ${+}~{a}~{b}~{\unit}$, which will be bound to the opaque variable $\beta$. This sub-term will then evolve to its other halting state, whereupon it will match sub-rule $4b$ and thus $\beta$ will be consumed and the variables $a$ and $b'$ obtained.

\para{Semantics}

To formalise the semantics embodied in the preceding examples, we define a transition relation $\leftrightarrow$ that maps a bag of terms to another bag of terms according to the rules defined for the current program. By a bag of terms, we aim to evoke the notion of a concoction of computational molecules in solution; a bag is a multiset, which can contain multiple copies of elements, but unlike a physical solution it lacks a concept of space. If desired, spatial locations can be readily simulated by appropriately chosen top-level contexts. Recalling that the global dynamics of the system are irreversible, we should expect that $\leftrightarrow$ is a non-deterministic relation. That is, application of the relation to a set of possible bags of terms may increase the size of this set, and the logarithm of the set size is identified with the entropy of the system. Concretely, the set of bags is the macrostate of the system, and each bag within the set is a microstate.

The calculus being reversible, the relation should share properties with equivalence relations. Namely, if $S,T,U$ are bags of terms, then we have
\begin{equation*}\begin{aligned}
  && S &\leftrightarrow S && \ruleName{refl} \\
  S \leftrightarrow T \implies&& T &\leftrightarrow S && \ruleName{symm} \\
  S \leftrightarrow T \land T \leftrightarrow U \implies&& S &\leftrightarrow U && \ruleName{trans}
\end{aligned}\end{equation*} 
but we also make clear that these transitions can occur within an environment of other non-participating terms, $\Gamma$,
\begin{equation*}\begin{aligned}
  S \leftrightarrow T \implies&& \forall\Gamma.~ \Gamma \cup S &\leftrightarrow \Gamma \cup T && \ruleName{ext} \\
  s \leftrightarrow t \implies&& \forall\Gamma.~ \Gamma \cup \{s\} &\leftrightarrow \Gamma \cup \{t\} && \ruleName{ext$'$}
\end{aligned}\end{equation*}
where the second rule, in which $s$ and $t$ are single terms rather than bags of terms, is introduced as a convenient abuse of notation for later definitions.

In order to physically effect rule transitions, we introduce `mediator terms' delimited by angle brackets, which interact with computational terms and can represent each of the intermediate states. If the program is given by $\mathcal P$, then the mechanism by which these mediator terms come into and out of existence is as follows,
\begin{equation*}\begin{aligned}
  \mathcal P\vdash (I=O:R) \implies&& \{\}&\leftrightarrow\{\langle I\varnothing\varnothing R\varnothing O\rangle\} && \ruleName{inst--comp} \\
  \mathcal P\vdash \rlap{$({!}~{\tau})$}\phantom{(I=O:R)} \implies&& \{\}&\leftrightarrow\{\langle \tau\rangle\} && \ruleName{inst--halt}
\end{aligned}\end{equation*}
from which we see that the environment can contain an arbitrary number of copies of each\footnote{This complicates the aforementioned entropic interpretation of sets of term-bags, as the entropy will tend to diverge as the number of mediator terms tends to infinity. A more realistic implementation would leave the mean number of extant mediator terms finite and bounded, and perhaps even fix the number. This would have a further consequence on the kinetics and thermodynamics of the system, with the well-characterised Mich{\ae}lis-Menten kinetics a good starting point to the analysis thereof.}. The mediator terms for computational rules are sextuples $\langle II'BR\Gamma O\rangle$ consisting of, respectively, the bag of unmatched input patterns $I$, the bag of matched input patterns $I'$, the bag of resultant bindings $B$, the bag of sub-rules $R$, the local environment for internal sub-rules $\Gamma$, and the bag of output patterns $O$. The mediator terms for halting rules are trivial singletons $\langle\tau\rangle$ containing the relevant pattern, $\tau$.

To assist rules in binding composite terms, we permit the current focus of a term to vary over time,
\begin{equation*}\begin{aligned}
  c:(\vec l~{\ooalign{\hss$t$\hss\cr\phantom{$\bullet$}}}~\vec r) &\leftrightarrow (c:(\vec l~{\bullet}~\vec r)):t && \ruleName{ohc$_1$} \\
  c:[\vec l~{\ooalign{\hss$t$\hss\cr\phantom{$\bullet$}}}~\vec r] &\leftrightarrow (c:[\vec l~{\bullet}~\vec r]):t && \ruleName{ohc$_2$}
\end{aligned}\end{equation*}
where $\vec l$ and $\vec r$ are (possibly empty) strings of terms and $t$ is the term focus. The bracketed terms in the second rule will be explained shortly. Note that these one-hole-context rules may not be needed for all architectures, being implicitly true in a molecular architecture for example.

The action of halting mediators is simply to mark terms which are in a known halting state,
\begin{equation*}\begin{aligned}
  \exists b'.~t \overset{\tau}{\sim}b' \implies&& \{x:t,\langle\tau\rangle\} &\leftrightarrow \{x:[t],\langle\tau\rangle\} && \ruleName{term}
\end{aligned}\end{equation*}
with $[t]$ serving as an indicator of a halting state and where $t\overset\tau\sim b'$ means that $t$ unifies against $\tau$ with bindings $b'$ (see \Cref{lst:aleph-sem-unif}).

Computational mediators are somewhat more complicated. We first render their temporal symmetry manifest by two reversibility rules,
\begin{equation*}\begin{aligned}
  \langle I\varnothing\varnothing R\varnothing O\rangle &\leftrightarrow \langle O\varnothing\varnothing R\varnothing I\rangle && \ruleName{rev$_1$} \\
  \langle\varnothing IBR\Gamma O\rangle &\leftrightarrow \langle\varnothing OBR\Gamma I\rangle && \ruleName{rev$_2$}
\end{aligned}\end{equation*}
where these rules will be seen to be necessary even for computation in a single direction. Applying a computational rule consists first of binding against a matching term for each input pattern, followed by substituting the bindings into sub-terms per the sub-rules. The sub-terms may either be top-level or local. These transitions may all occur in parallel, i.e.\ a sub-term may be instantiated if all of its requisite variables are bound, even if not all the input patterns have matched a term.
{\def\ruleHeader#1#2{\llap{$\displaystyle #1 \hspace{5em}$}\rlap{\hspace{9.5em}\ruleName{#2}}}
\begin{equation*}\begin{aligned}
  &\ruleHeader{x:t\overset\pi\sim b \implies}{inp} \\
  \{x:t, \langle(I\cup\{\pi\})I'BR\Gamma O\rangle &\leftrightarrow
  \{\langle I(I'\cup\{\pi\})(B\cup b)R\Gamma O\rangle\} \\
  &\ruleHeader{\pi\in R \land x:t\overset\pi\sim b \implies}{sub$_1$} \\
  \langle II'(B\cup b)R\Gamma O\rangle &\leftrightarrow
  \langle II'BR(\Gamma\cup\{x:[t]\})O\rangle \\
  &\ruleHeader{x \notin \bnfPrim{var}' \implies}{sub$_2$} \\
  \{\langle II'BR(\Gamma\cup\{x:[t]\})O\rangle\} &\leftrightarrow
  \{x:[t],\langle II' BS\Gamma O\rangle\}
\end{aligned}\end{equation*}}
where $x$ is a context. The \ruleName{sub$_2$} transition enables top-level terms to escape the locally scoped environment. Completion of computation occurs by application of \ruleName{rev$_2$} followed by the \ruleName{inp} and \ruleName{sub$_{1,2}$} in reverse; clearly if there is no route from the set of input variables to the set of output variables then the computation will stall. It may be desirable to augment the transition rules with implicit variable duplication,
\begin{equation*}\begin{aligned}
  \langle II'(B\cup\{b\})R\Gamma O\rangle &\leftrightarrow\langle II'(B\cup\{b,b\})R\Gamma O\rangle && \ruleName{dup}
\end{aligned}\end{equation*}
otherwise rules which wish to increase their parallelisability should explicitly duplicate variables as needed. We shall also need to enable the sub-environment to evolve,
\begin{equation*}\begin{aligned}
  \Gamma\leftrightarrow\Gamma' \implies && \langle II'BR\Gamma O\rangle &\leftrightarrow\langle II'BR\Gamma'O\rangle && \ruleName{sub$_3$}
\end{aligned}\end{equation*}

These semantics, encapsulated by the $\leftrightarrow$ transition rule, are summarised in \Cref{lst:aleph-sem}. An example of their application is provided in \Cref{lst:semex-add}. It remains to describe the operation of binding/unification. This operation is defined as one would expect: a pattern matches if it is a variable, if it is an atom and the term is the same atom, or if it is a composite pattern and the term is a \emph{halting} composite term of the same length and if the pattern and term match pair-wise. This is summarised in \Cref{lst:aleph-sem-unif}.

\begin{listing}\centering
\fbox{\begin{minipage}{0.9\textwidth}\vspace{\baselineskip}\begin{equation*}\begin{aligned}
    && S &\leftrightarrow S && \ruleName{refl} \\
    S \leftrightarrow T \implies&& T &\leftrightarrow S && \ruleName{symm} \\
    S \leftrightarrow T \land T \leftrightarrow U \implies&& S &\leftrightarrow U && \ruleName{trans} \\
    S \leftrightarrow T \implies&& \forall\Gamma.~ \Gamma \cup S &\leftrightarrow \Gamma \cup T && \ruleName{ext} \\
    s \leftrightarrow t \implies&& \forall\Gamma.~ \Gamma \cup \{s\} &\leftrightarrow \Gamma \cup \{t\} && \ruleName{ext$'$} \\
    \mathcal P\vdash (I=O:R) \implies&& \{\}&\leftrightarrow\{\langle I\varnothing\varnothing R\varnothing O\rangle\} && \ruleName{inst--comp} \\
    \mathcal P\vdash \rlap{$({!}~{\tau})$}\phantom{(I=O:R)} \implies&& \{\}&\leftrightarrow\{\langle \tau\rangle\} && \ruleName{inst--halt} \\
    && c:(\vec l~{\ooalign{\hss$t$\hss\cr\phantom{$\bullet$}}}~\vec r) &\leftrightarrow (c:(\vec l~{\bullet}~\vec r)):t && \ruleName{ohc$_1$} \\
    && c:[\vec l~{\ooalign{\hss$t$\hss\cr\phantom{$\bullet$}}}~\vec r] &\leftrightarrow (c:[\vec l~{\bullet}~\vec r]):t && \ruleName{ohc$_2$} \\
    \exists b'.~t \overset{\tau}{\sim}b' \implies&& \{x:t,\langle\tau\rangle\} &\leftrightarrow \{x:[t],\langle\tau\rangle\} && \ruleName{term} \\
    && \langle I\varnothing\varnothing R\varnothing O\rangle &\leftrightarrow \langle O\varnothing\varnothing R\varnothing I\rangle && \ruleName{rev$_1$} \\
    && \langle\varnothing IBR\Gamma O\rangle &\leftrightarrow \langle\varnothing OBR\Gamma I\rangle && \ruleName{rev$_2$} \\
    x:t\overset\pi\sim b \implies &&&&& \ruleName{inp} \\
    &&& \omit{\llap{$\displaystyle\{x:t, \langle(I\cup\{\pi\})I'BR\Gamma O\rangle\}$}\rlap{$\displaystyle{}\leftrightarrow\{\langle I(I'\cup\{\pi\})(B\cup b)R\Gamma O\rangle\}$}} \\
    \pi\in R \land x:t\overset\pi\sim b \implies &&&&& \ruleName{sub$_1$} \\
    &&& \omit{\llap{$\displaystyle\langle II'(B\cup b)R\Gamma O\rangle$}\rlap{$\displaystyle{}\leftrightarrow\langle II'BR(\Gamma\cup\{x:[t]\})O\rangle$}} \\
    x \notin \bnfPrim{var}' \implies &&&&& \ruleName{sub$_2$} \\
    &&& \omit{\llap{$\displaystyle\{\langle II'BR(\Gamma\cup\{x:[t]\})O\rangle$}\rlap{$\displaystyle{}\leftrightarrow\{x:[t],\langle II' BR\Gamma O\rangle\}$}} \\
    \Gamma\leftrightarrow\Gamma' \implies && \langle II'BR\Gamma O\rangle &\leftrightarrow\langle II'BR\Gamma'O\rangle && \ruleName{sub$_3$}
\end{aligned}\end{equation*}\vspace{\baselineskip}\end{minipage}}
  \caption{Summary of \textAleph\ semantics.}
  \label{lst:aleph-sem}
\end{listing}

\begin{listing}\centering
\fbox{\begin{minipage}{0.9\textwidth}\vspace{\baselineskip}\begin{equation*}\begin{array}{clcl}
    \cfrac{\alpha\in\bnfPrim{atom}}{\alpha \overset{\alpha}{\sim} \varnothing} & \ruleName{unif--atom} &
    \cfrac{v\in\bnfPrim{var}}{t \overset{v}{\sim} \{v\mapsto t\}} & \ruleName{unif--var} \\
    \cfrac{\bigwedge_i t_i \overset{\pi_i}{\sim} b_i}{[t_1\cdots t_n] \mathrel{\overset{\pi_1\cdots\pi_n}{\scalebox{1.75}[1]{$\sim$}}} \bigcup_i b_i} & \ruleName{unif--sub} &
    \cfrac{\lambda\in\bnfPrim{var}'\quad t\overset{\pi}{\sim}b}{\gamma:t \overset{\lambda:\pi}{\sim} \{\lambda\mapsto\gamma\}\cup b} & \ruleName{unif--ctxt$_1$} \\
    \cfrac{\gamma\overset{\pi_1}{\sim}b_1\quad t\overset{\pi_2}{\sim}b_2}{\gamma:t \mathrel{\overset{\pi_1:\pi_2}{\scalebox{1.5}[1]{$\sim$}}} b_1\cup b_2} & \ruleName{unif--ctxt$_2$}
\end{array}\end{equation*}\vspace{\baselineskip}\end{minipage}}
  \caption{Summary of unification semantics for \textAleph.}
  \label{lst:aleph-sem-unif}
\end{listing}

\begin{listing}
  \centering
  \begingroup
\begin{sublisting}{\linewidth}
  \def\ruleLabel#1{\llap{\tiny$#1.$}~}
  \begin{align*}
    \ruleLabel{1}& {!}~{+}~{a}~{b}~{\unit}{;} \\
    \ruleLabel{2}& {!}~{\unit}~{a}~{b}~{+}{;} \\
    \ruleLabel{3}& \{\alpha:{+}~{\Z}~{b}~{\unit}\} = \{\alpha:{\unit}~{\Z}~{b}~{+}\}{;} && \ruleName{add--base} \\
    \ruleLabel{4}& \{\alpha:{+}~({\S} {a})~{b}~{\unit}\} = \{\alpha:{\unit}~({\S} {a})~({\S} {b'})~{+}\}{:} && \ruleName{add--step} \\
    \raisebox{0pt}[1.9em]{}%
      {\begin{aligned}\ruleLabel{4a}&\\\ruleLabel{4b}&\end{aligned}}&%
      {\begin{aligned}
        &\qquad  \beta: {+}~{a}~{b}~{\unit}{.} \\
        &\qquad  \beta: {\unit}~{a}~{b'}~{+}{.}
      \end{aligned}} && \ruleName{add--step--sub}
  \end{align*}
  \caption{Desugared definition of reversible natural addition in \textAleph. It will be convenient to make the definitions $I_4=\{\alpha:{+}~({\S} {a})~{b}~{\unit}\}$, $O_4=\{\alpha:{\unit}~({\S} {a})~({\S} {b'})~{+}\}$ and $R_4=\{\beta: {+}~{a}~{b}~{\unit}, \beta: {\unit}~{a}~{b'}~{+}\}$.}
  \label{lst:semex-add-def}
\end{sublisting}

\begin{sublisting}{\linewidth}
  \def\adj{~}
  \def\smS{\textsc{\scriptsize S}}
  \begin{align*}
  &\phantom{{}\leftrightarrow{}} \{ {\unit}{:}[{+}{3}{4}{\unit}] \} && \\
  \mathcal P \vdash {!}{+}{a}{b}{\unit} \implies\adj& \leftrightarrow \{\langle{+}{a}{b}{\unit}\rangle,{\unit}{:}[{+}{3}{4}{\unit}]\} && \ruleName{inst--halt} \\
  & \leftrightarrow \{\langle{+}{a}{b}{\unit}\rangle,{\unit}{:}({+}{3}{4}{\unit})\} && \ruleName{term} \\
  \mathcal P \vdash {!}{+}{a}{b}{\unit} \implies\adj& \leftrightarrow \{{\unit}{:}({+}{3}{4}{\unit})\} && \ruleName{inst--halt} \\
  \mathcal P \vdash (I_4=O_4:R_4) \implies\adj& \leftrightarrow \{\langle I_4\varnothing\varnothing R_4\varnothing O_4\rangle,{\unit}{:}({+}{3}{4}{\unit})\} && \ruleName{inst--comp} \\
  \smash{{\unit}{:}({+}{3}{4}{\unit})\overset{{\alpha}{:}{+}({\smS}{a}){b}{\unit}}\sim\{{\cdots}\}} \implies\adj& \leftrightarrow \{\langle\varnothing I_4\{\alpha{\mapsto}{\unit},a{\mapsto}2,b{\mapsto}4\}R_4\varnothing O_4\rangle\} && \ruleName{inp} \\
  {\cdots} \implies\adj& \leftrightarrow \{\varnothing I_4\{\alpha{\mapsto}{\unit}\}R_4\{\beta{:}[{+}{2}{4}{\unit}]\}O_4\} && \ruleName{sub$_1$} \\
  \{\beta{:}[{+}{2}{4}{\unit}]\}\leftrightarrow\{\beta{:}[{\unit}{2}{6}{+}]\} \implies\adj& \leftrightarrow \{\varnothing I_4\{\alpha{\mapsto}{\unit}\}R_4\{\beta{:}[{\unit}{2}{6}{+}]\}O_4\} && \ruleName{sub$_3$} \\
  {\cdots} \implies\adj& \leftrightarrow \{\langle\varnothing I_4\{\alpha{\mapsto}{\unit},a{\mapsto}2,b'{\mapsto}6\}R_4\varnothing O_4\rangle\} && \ruleName{sub$_1$} \\
  & \leftrightarrow \{\langle\varnothing O_4\{\alpha{\mapsto}{\unit},a{\mapsto}2,b'{\mapsto}6\}R_4\varnothing I_4\rangle\} && \ruleName{rev$_2$} \\
  \smash{{\unit}{:}({\unit}{3}{7}{+})\overset{{\alpha}{:}{\unit}({\smS}{a})({\smS}{b'}){+}}\sim\{{\cdots}\}} \implies\adj& \leftrightarrow \{\langle O_4\varnothing\varnothing R_4\varnothing I_4\rangle,{\unit}{:}({\unit}{3}{7}{+})\} && \ruleName{inp} \\
  & \leftrightarrow \{\langle I_4\varnothing\varnothing R_4\varnothing O_4\rangle,{\unit}{:}({\unit}{3}{7}{+})\} && \ruleName{rev$_1$} \\
  \mathcal P \vdash (I_4=O_4:R_4) \implies\adj& \leftrightarrow \{{\unit}{:}({\unit}{3}{7}{+})\} && \ruleName{inst--comp}\\
  \mathcal P \vdash {!}{\unit}{a}{b}{+} \implies\adj& \leftrightarrow \{\langle{\unit}{a}{b}{+}\rangle,{\unit}{:}({\unit}{3}{7}{+})\} && \ruleName{inst--halt} \\
  & \leftrightarrow \{\langle{\unit}{a}{b}{+}\rangle,{\unit}{:}[{\unit}{3}{7}{+}]\} && \ruleName{term} \\
  \mathcal P \vdash {!}{\unit}{a}{b}{+} \implies\adj& \leftrightarrow \{{\unit}{:}[{\unit}{3}{7}{+}]\} && \ruleName{inst--halt}
  \end{align*}
  \caption{One possible derivation of ${\unit}:[{+}~{3}~{4}~{\unit}] \leftrightarrow {\unit}:[{\unit}~{3}~{7}~{+}]$ using the semantics for $\textAleph$.}
  \label{lst:semex-add-sem}
\end{sublisting}
\endgroup
  \caption{}
  \label{lst:semex-add}
\end{listing}

\para{Reversibility}

The \ruleName{symm} transition renders the semantics trivially reversible, but this is fairly weak. We conclude the discussion of the semantics of \textAleph\ by proving that the semantics are \emph{microscopically} reversible.

\begin{dfn}[Microscopic Reversibility]A transition is primitively microscopically reversible if it is a uniquely invertible structural rearrangement. A transition is microscopically reversible if it is decomposable into a series of primitively microscopically reversible transitions.\end{dfn}

\begin{thm}\label{thm:mrev}The semantics of\kern0.1ex\ \textslAleph\ are microscopically reversible.\end{thm}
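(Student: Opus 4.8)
The plan is to argue by induction on the derivation of a transition $S \leftrightarrow T$ using the rules collected in \Cref{lst:aleph-sem}, splitting them into two groups: the \emph{structural axioms} of the relation (\ruleName{inst--comp}, \ruleName{inst--halt}, \ruleName{ohc$_1$}, \ruleName{ohc$_2$}, \ruleName{term}, \ruleName{rev$_1$}, \ruleName{rev$_2$}, \ruleName{inp}, \ruleName{sub$_1$}, \ruleName{sub$_2$} and \ruleName{dup}), and the \emph{closure rules} that build larger transitions from smaller ones (\ruleName{refl}, \ruleName{symm}, \ruleName{trans}, \ruleName{ext}, \ruleName{ext$'$} and \ruleName{sub$_3$}). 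First I would show that each axiom, read as a left-to-right rewrite of a term-bag, is primitively microscopically reversible, i.e.\ a uniquely invertible structural rearrangement. For the purely positional rules this is immediate: \ruleName{rev$_1$} and \ruleName{rev$_2$} merely transpose the $I$ and $O$ slots of a mediator (an involution); \ruleName{ohc$_1$} and \ruleName{ohc$_2$} relocate the focus, inverted by plugging $t$ back into the unique hole $\bullet$; \ruleName{sub$_2$} relocates a top-level term in or out of the local scope, the side condition $x \notin \bnfPrim{var}'$ guaranteeing the context is genuinely top-level; \ruleName{inst--comp} and \ruleName{inst--halt} create or annihilate a mediator in a fixed pristine form determined by the program $\mathcal P$; and \ruleName{dup} copies a binding forward and, backward, deletes one of two \emph{identical} copies, so the discarded datum is fully recoverable.

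The crux is the family of binding rules---\ruleName{inp}, \ruleName{sub$_1$} and \ruleName{term}---whose reversibility rests on a single lemma about the unification judgement of \Cref{lst:aleph-sem-unif}: that $\overset{\pi}{\sim}$ is functional (if $t \overset{\pi}{\sim} b$ and $t \overset{\pi}{\sim} b'$ then $b = b'$, with $\operatorname{dom} b$ equal to the set of variables of $\pi$) and is the two-sided inverse of substitution (for each binding $b$ on the variables of $\pi$ there is a unique term $t$ with $t \overset{\pi}{\sim} b$, namely the term obtained by substituting $b$ into $\pi$). I would prove this by a short induction over the derivation of $\overset{\pi}{\sim}$ following its five clauses, the composite case \ruleName{unif--sub} using that matching succeeds only against a \emph{halting} composite term of the same length, so the pairwise bindings combine without clash. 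Granting the lemma, \ruleName{inp} is inverted by reconstructing the consumed party $x{:}t$ from $\pi$ and the recorded bindings $b$ and returning $\pi$ from the matched bag to the unmatched bag; \ruleName{sub$_1$} and \ruleName{term} invert identically, the halting marker $[\cdot]$ ensuring the reverse direction only fires on states genuinely produced forwards. That each such rewrite is \emph{uniquely} invertible---despite the ambient multisets permitting repeated elements---follows because the datum being restored is pinned down structurally by $b$ and $\pi$, so any residual choice of which identical copy to act on does not change the resulting bag and is immaterial.

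With every axiom shown primitively reversible, I would then dispatch the closure rules by structural induction, observing that each preserves decomposability into a series of primitive transitions. \ruleName{refl} is the empty series; \ruleName{trans} concatenates the two series supplied by the induction hypothesis; \ruleName{symm} reverses the series, using that the inverse of a uniquely invertible structural rearrangement is again one (so primitiveness is closed under inversion, which is manifest here since every rule of \Cref{lst:aleph-sem} is already stated with $\leftrightarrow$); \ruleName{ext} and \ruleName{ext$'$} carry the inert environment $\Gamma$ alongside each primitive step, which remains a uniquely invertible rearrangement of the enlarged bag; and \ruleName{sub$_3$} lifts a primitive transition of the sub-environment into the corresponding slot of the mediator, the remaining five components acting as spectators. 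Assembling the base and inductive cases shows that every derivable $S \leftrightarrow T$ decomposes into primitively microscopically reversible transitions, which is the claim.

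The step I expect to be the main obstacle is the unification lemma underpinning \ruleName{inp}, \ruleName{sub$_1$} and \ruleName{term}: it is here that one must verify no information is silently discarded when a term is matched and later reconstituted, and in particular that the multiset semantics of the bags do not introduce spurious inverses. I would also flag one bookkeeping subtlety worth stating explicitly---that microscopic reversibility is a property of each \emph{local} primitive step and is fully compatible with the deliberate global non-determinism of $\leftrightarrow$ noted earlier (application to a set of bags may enlarge that set); unique invertibility is asserted of the structural rearrangement effected by a single rule instance, not of the relation as a whole.
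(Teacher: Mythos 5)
There is a genuine gap, and it sits precisely where you declare the work finished. You classify \ruleName{inst--comp}, \ruleName{inst--halt}, \ruleName{ohc$_1$}, \ruleName{ohc$_2$}, \ruleName{inp} and \ruleName{sub$_1$} as \emph{primitively} microscopically reversible and devote your effort instead to a unification lemma; the paper's proof is almost entirely the construction of explicit decompositions of exactly those rules into finer primitive steps, and it leaves the unification part as an exercise. The clearest failure is instantiation: $\{\}\leftrightarrow\{\langle I\varnothing\varnothing R\varnothing O\rangle\}$ creates a mediator ex nihilo, which is not a ``structural rearrangement'' of anything, so it cannot be taken as primitive under the paper's definition --- one must say where the material comes from. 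The paper does this by positing an excess of free structural building blocks and giving a small-step, mRNA-translation-like protocol (marking a definition for duplication, propagating the mark with an auxiliary hat marker, recruiting fresh atoms, ligating and disentangling) whose every step is a uniquely invertible rearrangement. Your description of the rule as creating the mediator ``in a fixed pristine form determined by $\mathcal P$'' restates the rule; it does not exhibit a decomposition.

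The same objection applies, less dramatically, to the other rules you call immediate. For \ruleName{ohc$_{1,2}$} the left-hand side admits many partitions $\vec l~t~\vec r$, so a single application is not a deterministic move from the state; the paper sequences the choice of focus with a movable marker walking one position at a time. For \ruleName{inp} and \ruleName{sub$_1$} the issue is not (as you suggest) which of several \emph{identical} copies is consumed, but that distinct terms in the bag may match the same pattern, giving genuinely different successor bags; the paper resolves this by realising bags as strings shuffled by primitive adjacent swaps, with the draw always taken from the head, so that the draw is deterministic at any instant even though draws at different instants are uncorrelated. Your closing remark that non-determinism of the relation is compatible with per-step reversibility is correct in spirit, but it is the mechanism (swaps, markers, building-block recruitment) that makes each step a \emph{uniquely invertible} rearrangement, and that mechanism is what the theorem is asking for. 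Your unification lemma is a perfectly sensible auxiliary fact, but on its own it establishes that matching is information-preserving at the relational level, not that the semantics decompose into primitive reversible transitions.
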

\begin{proof}
  The rules \ruleName{refl,symm,trans,ext,ext$'$,term,rev$_1$,rev$_2$,sub$_3$} are microscopically reversible either trivially or inductively.

  The instantiation rules \ruleName{inst--comp,inst--halt} are less obvious, but can be realised in a microscopically reversible fashion in much the same way that cells translate an mRNA template to a polypeptide product. We shall first need a microscopically reversible way to duplicate a term, for which we assume that the environment contains an excess of structural building blocks (i.e.\ free atoms, variables, units $\unit$, etc). A term $t$ to be duplicated is first marked as such i.e. $t \mapsto \overline{t}$. These modified terms deviate from the definitions introduced at the beginning of this section, and are instead intermediate transitional structures used for the small-step microscopically reversible semantics described here. This marking is then propagated throughout the structure to all atoms, variables, and units by the following two microscopically reversible transitions:
  \begin{align*}
    \overline{\gamma:t} &\leftrightarrow \overline{\gamma}:\overline{t} && \ruleName{dup--prop$_1$} \\
    \overline{({x}~\vec\cursivexS)} &\leftrightarrow (\hat{x}~\vec\cursivexS) && \ruleName{dup--prop$_2$} \\
    (\vec\cursivexS~\hat{x}~{y}~\vec\cursiveyS) &\leftrightarrow (\vec\cursivexS~\overline{x}~\hat{y}~\vec\cursiveyS) && \ruleName{dup--prop$_3$}
  \end{align*}
  where rules \ruleName{dup--prop$_{2,3}$} are used to distribute the marking throughout composite terms. Note the use of the auxiliary `hat' marker $\hat{\,\cdot\,}$ to sequence this propagation in a microscopically reversible manner.
  Elementary terms thus marked recruit fresh copies of themselves from the free structural building blocks,
  \begin{align*}
    \frac{}{a} &\leftrightarrow \frac{a}{a} &
    \frac{}{u} &\leftrightarrow \frac{u}{u} &
    \frac{}{v} &\leftrightarrow \frac{v}{v} &
    \frac{}{\unit} &\leftrightarrow \frac{\unit}{\unit} &
    &\ruleName{dup--fresh}
  \end{align*}
  where $a\in\bnfPrim{atom}$, $u\in\bnfPrim{var}$ and $v\in\bnfPrim{var}'$, and where the drawing of a fresh copy from the environment is implicit. These are then ligated together in parallel, and the composite structure disentangled,
  \begin{align*}
    \left(\vec{s}~\frac{\vec t}{\vec t}~\frac{\vec u}{\vec u}~\vec{v}\right) &\leftrightarrow
    \left(\vec{s}~\frac{\vec t~\vec u}{\vec t~\vec u}~\vec{v}\right) & 
    &\ruleName{dup--lig$_1$} &
    \left(\frac{\vec t}{\vec t}\right) &\leftrightarrow\frac{(\vec t)}{(\vec t)} &
    &\ruleName{dup--topo$_1$} \\
    \left\{\vec{s}~\frac{\vec t}{\vec t}~\frac{\vec u}{\vec u}~\vec{v}\right\} &\leftrightarrow
    \left\{\vec{s}~\frac{\vec t~\vec u}{\vec t~\vec u}~\vec{v}\right\} & 
    &\ruleName{dup--lig$_2$} & 
    \left\{\frac{\vec\pi}{\vec\pi}\right\} &\leftrightarrow\frac{\{\vec\pi\}}{\{\vec\pi\}} &
    &\ruleName{dup--topo$_2$} \\
    &&&& \frac{\gamma}{\gamma} : \frac{t}{t} &\leftrightarrow \frac{\gamma:t}{\gamma:t} &
    &\ruleName{dup--topo$_3$}
  \end{align*}
  finally resulting in a fully duplicated and disentangled structure $\frac tt$ from $\overline{t}$ for $t$ any term, party, or party-bag. Finally, the instantiation rules can be realised microscopically reversible thus,
  \begin{align*}
    (I=O:S) &\leftrightarrow (\overline{I}=\overline{O}:\overline{S}) &&\ruleName{inst--comp$_1$} \\
    \left\{\left(\frac II=\frac OO:\frac SS\right)\right\} &\leftrightarrow \{(I=O:S), \langle I\varnothing\varnothing S\varnothing O\rangle\} &&\ruleName{inst--comp$_2$} \\
    ({!}~{\tau}) &\leftrightarrow ({!}~\overline{\tau}) &&\ruleName{inst--halt$_1$} \\
    \left\{\left({!}~\frac \tau\tau\right)\right\} &\leftrightarrow \{({!}~{\tau}), \langle\tau\rangle\} &&\ruleName{inst--halt$_2$}
  \end{align*}

  The one-hole-context rules are nearly trivially microscopically reversible, except for the choice of partition. This can be achieved by a movable marker like so,
  {\def\mark#1{\underset{\hat{}}{#1}}
  \begin{align*}
    c:({x}~\vec\cursivexS) &\leftrightarrow c:\llparenthesis\mark{x}~\vec\cursivexS\rrparenthesis &&\ruleName{ohc$_{11}$} &
    c:[{x}~\vec\cursivexS] &\leftrightarrow c:\llbracket\mark{x}~\vec\cursivexS\rrbracket &&\ruleName{ohc$_{21}$} \\
    c:\llparenthesis\vec\cursivexS~\mark{x}~{y}~\vec\cursiveyS\rrparenthesis &\leftrightarrow c:\llparenthesis\vec\cursivexS~{x}~\mark{y}~\vec\cursiveyS\rrparenthesis &&\ruleName{ohc$_{12}$} &
    c:\llbracket\vec\cursivexS~\mark{x}~{y}~\vec\cursiveyS\rrbracket &\leftrightarrow c:\llbracket\vec\cursivexS~{x}~\mark{y}~\vec\cursiveyS\rrbracket &&\ruleName{ohc$_{22}$} \\
    c:\llparenthesis\vec{\ell}~\mark{t}~\vec{r}\rrparenthesis &\leftrightarrow (c:(\vec{\ell}~{\bullet}~\vec{r})):t &&\ruleName{ohc$_{13}$} &
    c:\llbracket\vec{\ell}~\mark{t}~\vec{r}\rrbracket &\leftrightarrow (c:[\vec{\ell}~{\bullet}~\vec{r}]):t &&\ruleName{ohc$_{23}$}
  \end{align*}}
  
  The rules \ruleName{inp,sub$_1$,sub$_2$} require a microscopically reversible realisation of a bag with random draws. In order to be microscopically reversible, the draw needs to be deterministic at any given time, even if draws at different times are uncorrelated. This is achieved by representing a bag as a string that can be shuffled via swap operations, i.e.
  \begin{align*}
    \{\vec\cursivexS~{x}~{y}~\vec\cursiveyS\} &\leftrightarrow \{\vec\cursivexS~{y}~{x}~\vec\cursiveyS\} && \ruleName{bag--shuff}
  \end{align*}
  Random draws are then implemented by simply picking the first element of the string.
  
  The last rules to demonstrate microscopic reversibility for are the unification semantics. We leave this as an exercise for the reader, with the hint that its realisation is similar to that of the instantiation rules.
\end{proof}

\para{Computability}

The earlier examples give reasonable assurance that \textAleph\ is Turing complete and that programming in it is `easy' in the sense that programs can be readily composed. For avoidance of doubt, however, we prove that \textAleph\ is Turing complete in two ways: the first proves a stronger claim, that \textAleph\ is Reversible-Turing complete, meaning that it can efficiently and faithfully simulate a Reversible Turing Machine without generating garbage; the second proves Turing completeness in a more conventional fashion in order to demonstrate its high level of composability.

\begin{listing}
  \centering
  \begingroup%
\def\nl{\\[1.5em]}%
\begin{align*}
  & {!}~\atom{Tape}~{\ell}~{x}~{r}{;} \\
  & {!}~\atom{Sym}~{x}{;} \\
  & {!}~\atom{Blank}{;}
  \nl
  & {[\atom{Blank}~{x}~{\cdot}~\cursivexS]}~\infix{\atom{Pop}}~\atom{Blank}~{[{x}~{\cdot}~\cursivexS]}{;} \\
  & {[{(\atom{Sym}~{x})}~{\cdot}~\cursivexS]}~\infix{\atom{Pop}}~\atom{Blank}~\cursivexS{;} \\
  & {[]}~\infix{\atom{Pop}}~\atom{Blank}~{[]}{;}
  \nl
  & {(\atom{Tape}~{\ell}~{x}~{r})}~\infix{\atom{Left}}~{(\atom{Tape}~{\ell'}~{x'}~{r'})}{:} \\
  &\qquad {\ell}~\infix{\atom{Pop}}~{x'}~{\ell'}{.} \\
  &\qquad {r'}~\infix{\atom{Pop}}~{x}~{r}{.} \\
  & {t}~\infix{\atom{Right}}~{t'}{:} \\
  &\qquad {t'}~\infix{\atom{Left}}~{t}{.}
\end{align*}%
\endgroup
  \caption{This \textAleph\ program defines all the ingredients necessary to simulate any Reversible Turing Machine. We represent a bi-infinite tape as its one-hole-context; that is, a tape is given by the square in the current position, $x$, as well as `all' the squares to the left of it, $\ell$, and `all' the squares to the right of it, $r$. Obviously we can't actually represent \emph{all} the squares to the left and the right. Instead we make use of the fact that, at any one time, only a finite bounded region of the tape is non-blank. As such, $\ell$ contains all the squares to the left up to the last symbol, and similarly for $r$. If we keep going left, past the final symbol, then $\ell$ will be the empty list, and \atom{Blank} squares will be created as needed. The rules \atom{Left} and \atom{Right} are convenience functions for changing the focus of a tape.}
  \label{lst:aleph-tape}
\end{listing}
\begin{listing}
  \centering
  \begingroup%
\def\nl{\\[.8em]}%
\def\Mu#1{\infix{\atom{Mu}~{#1}}}%
\def\Garb#1{(\atom{Garbage}~{#1})}%
\def\Dup#1#2{\infix{\atom{Dup}~{#1}}~{#2}{.}}%
\def\xyz{x\hspace{-0.3ex}y\hspace{-0.3ex}z}%
\begin{minipage}[t]{.68\linewidth}\begin{align*}
  & \cursivexS~\Mu{(\atom{Const}~{n})}~{n'}~\Garb\cursivexS{:} \\
  &\qquad \Dup{n}{n'}
  \nl
  & {[x]}~\Mu{\atom{Succ}}~{(\S x)}~{(\atom{Garbage})}{;}
  \nl
  & \cursivexS~\Mu{(\atom{Proj}~{i})}~{y}~\Garb{\cursiveyS~\cursivezS}{:} \\
  &\qquad \Dup{i}{i'} \\
  &\qquad \llap{{!}~}\atomLocal{Go}~{i'}~{[]}~\cursivexS = \atomLocal{Go}~{\Z}~{[{y}~{\cdot}~\cursiveyS]}~\cursivezS{.} \\
  &\qquad \atomLocal{Go}~{(\S n)}~\cursiveyS~{[{x}~{\cdot}~\cursivexS]} = \atomLocal{Go}~{n}~{[{y}~{\cdot}~\cursiveyS]}~\cursivexS{;}
  \nl
  & \cursivexS~\Mu{(\atom{Sub}~{g}~\cursivefS)}~{z}~\Garb{\cursivexS~{h}~\cursivehS}{:} \\
  &\qquad \cursiveyS~\Mu{g}~{z}~{h}{.} \\
  &\qquad \infix{\atomLocal{Go}~\cursivefS~\cursivexS}~\cursiveyS~\cursivehS{.} \\
  &\qquad \infix{\atomLocal{Go}~{[]}~\cursivexS}~{[]}~{[]}{;} \\
  &\qquad \infix{\atomLocal{Go}~{[{f}~{\cdot}~\cursivefS~\cursivexS]}}~{[{y}~{\cdot}~\cursiveyS]}~{[{h}~{\cdot}~\cursivehS]}{:} \\
  &\qquad\qquad \Dup\cursivexS{\cursivexS'} \\
  &\qquad\qquad {\cursivexS'}~\Mu{f}~{y}~{h}{.} \\
  &\qquad\qquad \infix{\atomLocal[2]{Go}~\cursivefS~\cursivexS}~\cursiveyS~\cursivehS{.}
  \nl
  & {[{n}~{\cdot}~\cursivexS]}~\Mu{(\atom{Rec}~{f}~{g})}~{y}~\Garb{{n'}~\cursivexS~\cursivehS}{:} \\
  &\qquad \Dup\cursivexS{\cursivexS'} \\
  &\qquad {\cursivexS'}~\Mu{f}~{x}~{h}{.} \\
  &\qquad \llap{{!}~}\atomLocal{Go}~{n}~{\Z}~{g}~{x}~\cursivexS~{[h]} =
                     \atomLocal{Go}~{\Z}~{n'}~{g}~{y}~\cursivexS~\cursivehS{.} \\
  &\qquad \atomLocal{Go}~{(\S n)}~{m}~{g}~{x}~\cursivexS~{hs} =
          \atomLocal{Go}~{n}~{(\S m)}~{g}~{y}~\cursivexS~{[{h}~{\cdot}~\cursivehS]}{:} \\
  &\qquad\qquad \Dup{m}{m'}~\Dup\cursivexS{\cursivexS'} \\
  &\qquad\qquad [{m'}~{x}~{\cdot}~{\cursivexS'}]~\Mu{g}~{y}~{h}{.}
  \nl
  & \cursivexS~\Mu{(\atom{Minim}~{f})}~{i}~\Garb{\cursivexS~\cursivehS}{:} \\
  &\qquad \llap{{!}~}\atomLocal{Go}~{f}~{\Z}~\cursivexS~{(\S\Z)}~{[]} =
                     \atomLocal{Go}~{f}~{(\S i)}~\cursivexS~{\Z}~\cursivehS{.} \\
  &\qquad \atomLocal{Go}~{f}~{i}~\cursivexS~{(\S n)}~\cursivehS =
          \atomLocal{Go}~{f}~{(\S i)}~\cursivexS~{y}~{[{n}~{h}~{\cdot}~\cursivehS]}{:} \\
  &\qquad\qquad \Dup{i}{i'}~\Dup\cursivexS{\cursivexS'} \\
  &\qquad\qquad {[{i'}~{\cdot}~{\cursivexS'}]}~\Mu{f}~{y}~{h}{.}
\end{align*}\end{minipage}%
\def\expln#1{\comment{\cmtSLopen~$#1$}}
\begin{minipage}[t]{.3\linewidth}\vspace{-.5em}%
\begin{equation*}\begin{array}{|rl}
  \qquad&\\[-.5em]
  & \dataDef{\atom{Const}~{n}}{;} \\
  & \expln{C_n(\vec\cursivexS) = n} \nl
  & \dataDef{\atom{Succ}}{;} \\
  & \expln{S(x) = x + 1} \nl
  & \dataDef{\atom{Proj}~{i}}{;} \\
  & \expln{P_i(\vec\cursivexS) = x_i} \\
  & \comment{\cmtSLopen~N.B:\ $\vec\cursivexS$\, is 1-indexed.}\nl
  & \dataDef{\atom{Sub}~{g}~\cursivefS}{;} \\
  & \expln{(g\circ\vec\cursivefS)(\vec\cursivexS) = } \\
  & \expln{\quad g(f_1(\vec\cursivexS)\cdots f_n(\vec\cursivexS))} \nl
  & \dataDef{\atom{Rec}~{f}~{g}}{;} \\
  & \expln{\rho(f,g)(0,\vec\cursivexS) = f(\vec\cursivexS)} \\
  & \expln{\rho(f,g)(n+1,\vec\cursivexS) = } \\
  & \expln{\quad g(n,\rho(f,g)(n,\vec\cursivexS),\vec\cursivexS)} \nl
  & \dataDef{\atom{Minim}~{f}}{;} \\
  & \expln{\mu(f)(\vec\cursivexS) = } \\
  & \expln{\quad\min\{n:f(n,\vec\cursivexS)=0\}}
  \\[-.5em]&
\end{array}\end{equation*}
\end{minipage}%
\endgroup
  \caption{An \textAleph\ program$^\ast$ implementing the $\mu$-recursive functions. The $\mu$-recursive functions are generated by the three functions, $C_n$, $S$ and $P_i$, and the three operators, $\circ$, $\rho$ and $\mu$, whose definitions are given in the comments above. The \textAleph\ values corresponding to these functions can be directly composed using the operators (e.g.\ $(\atom{Rec}~(\atom{Const}~0)~(\atom{Proj}~1))$), and then evaluated with the \atom{Mu} atom.}
  ~\\\parbox{\textwidth}{\hangindent=0.3cm $^\ast$\scriptsize Some additional sugar has been used in this program, mainly in the form of nested definitions and the more concise expression of looping constructs. See \Cref{sec:alethe} for an in-depth definition of these sugared forms.}
  \label{lst:aleph-murec}
\end{listing}

\begin{thm}The \textslAleph\ calculus is Reversible-Turing Complete, i.e.\ it can simulate any Reversible Turing Machine (RTM) as defined by \textcite{bennett-tm}.\end{thm}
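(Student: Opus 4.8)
The plan is to give a direct, step-by-step simulation that encodes each configuration of a Reversible Turing Machine as a single \textAleph\ term and each of the machine's transition quadruples as a single \textAleph\ rule, reusing the tape machinery of \Cref{lst:aleph-tape}. Recall that a Bennett RTM is specified by a finite set of quadruples partitioned into \emph{symbol} rules $(q,s\to s',q')$ (in state $q$ reading $s$, write $s'$ and enter $q'$ without moving) and \emph{shift} rules $(q,d,q')$ (in state $q$, move one square in direction $d\in\{L,R\}$ and enter $q'$ without reading or writing). The use of quadruples rather than quintuples is precisely what makes each step locally invertible, and this is mirrored by the two disjoint families of \textAleph\ rules we shall produce. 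The reversibility of the machine amounts to two conditions: \emph{forward determinism} (no two quadruples have overlapping domains) and \emph{backward determinism} (no two have overlapping ranges), so that the one-step transition is a partial bijection on configurations.

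First I would fix the encoding. A configuration with control state $q$ and tape-with-head $(\ell,x,r)$ is represented by the top-level term $q~(\atom{Tape}~\ell~x~r)$, where $\ell$, $x$ and $r$ use the $\atom{Sym}$ and $\atom{Blank}$ encodings of \Cref{lst:aleph-tape} and where $\ell$ and $r$ are kept in the canonical, trailing-blank-free form enforced by the $\atom{Pop}$ rules. The key lemma at this stage is that this encoding is a genuine bijection between machine configurations and (halting) \textAleph\ terms of this shape, and that $\atom{Left}$ and $\atom{Right}$ faithfully realise head movement---including the synthesis and annihilation of boundary blanks---which is essentially the content of \Cref{lst:aleph-tape}.

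Next I would translate the quadruples. Each symbol rule $(q,s\to s',q')$ becomes an \textAleph\ rule matching $q~(\atom{Tape}~\ell~s~r)$ and rewriting it to $q'~(\atom{Tape}~\ell~s'~r)$, with the blank handled as its own pattern since it is encoded differently from an $\atom{Sym}$; each shift rule $(q,d,q')$ becomes a rule from $q$ to $q'$ whose sole sub-rule invokes $\atom{Left}$ or $\atom{Right}$ on the tape. Only the designated start and stop configurations are given $!$ markers, so that intermediate configurations have both a predecessor and a successor and the whole computation lies in class~IV of \Cref{fig:rev-classes}. I would then prove faithfulness by induction on the length of the machine's run: each RTM step $C\to C'$ is matched by a bounded block of \textAleph\ transitions taking $\llbracket C\rrbracket$ to $\llbracket C'\rrbracket$ (one rule firing, plus at most one $\atom{Left}$/$\atom{Right}$ sub-rule, itself a bounded number of $\atom{Pop}$ steps and focus shifts), and conversely every \textAleph\ transition out of $\llbracket C\rrbracket$ corresponds to a genuine machine step; chaining these with \ruleName{trans} yields the full simulation. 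Since each step costs only $O(1)$ \textAleph\ transitions and the term grows only as the used portion of the tape, the simulation is efficient in both time and space, and the absence of any auxiliary output---each rule consumes exactly the old state/symbol and produces exactly the new one---is what upgrades plain Turing completeness to the garbage-free, Reversible-Turing variety, in contrast to Bennett's $x\mapsto(x,f(x))$ embedding.

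The hard part will be discharging the well-formedness (reversibility) obligation globally rather than rule-by-rule: I must show that the translated program never lets a term acquire two forward or two backward successors. For the body of the tape this follows immediately, since forward determinism of the RTM maps to disjointness of the \textAleph\ rules' input patterns and backward determinism to disjointness of their output patterns. The delicate cases are at the tape boundaries, where the head moves past the last written symbol: here reversibility hinges on the $\atom{Pop}$ rules maintaining the canonical no-trailing-blank representation, so that creating a fresh blank and later reabsorbing it are exact inverses and no configuration becomes representable by two distinct terms. Verifying that every rule preserves this canonical form---and that the boundary blank synthesis and annihilation is itself a partial bijection---is where the real care is needed; the remainder is bookkeeping.
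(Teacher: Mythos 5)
Your proposal follows essentially the same route as the paper: represent each tape by its one-hole context $(\atom{Tape}~\ell~x~r)$ with the $\atom{Pop}$ rules maintaining a canonical blank-free boundary, mechanically translate each machine rule into an \textAleph\ rule (using $\atom{Left}$/$\atom{Right}$ sub-rules for head motion), and mark only the start and stop configurations as halting so the run sits in class~IV of \Cref{fig:rev-classes}. The paper treats faithfulness and the determinism conditions as immediate where you spell them out (forward/backward determinism mapping to disjointness of input/output patterns, and the boundary-blank bijection), so your version is the same argument carried out in more detail.
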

\begin{proof}
  An RTM is a collection of one or more bi-infinite tapes divided into squares, each of which can be blank or can contain a symbol, as well as a machine head with an internal state. Symbols are drawn from a finite alphabet, and internal states from a distinct finite alphabet. At any time, the RTM head looks at a single square on each tape and executes one of a finite set of reversible rules. The rule chosen is uniquely determined by the current state of the system. Each rule depends on the current internal state of the machine head, which it may alter, and for each tape it can either ignore its value, possibly moving the tape one square to the left or right, or it can depend on the square taking a certain value $u$, which it can replace with a certain other value $v$.

  For example, a 6-tape RTM with symbol alphabet $\{A,B,C,D,E,F\}$ and state alphabet $\{\atom{Start},S_1,S_2,\atom{Stop}\}$ might have a rule $S_1~C~\varnothing~/~D~/~/~\rightarrow~B~A~-~F~0~+~S_2$. This rule applies if and only if the current internal state is $S_1$, the values of tapes 1 and 4 are the symbols $C$ and $D$ respectively, and tape 2 is blank. If so, it will change the internal state to $S_2$, replace the values of tapes 1, 2 and 4 with the symbols $B$, $A$ and $F$ respectively, and move tapes 3 and 6 one square to the left and right respectively, leaving tape 5 alone. The reverse of the rule is given by $S_2~B~A~/~F~/~/~\rightarrow~C~\varnothing~+~D~0~-~S_1$. The necessary conditions for the ruleset to be deterministic and reversible are that all the domains of the forward rules are mutually exclusive, as are all the domains of the reversed rules. Of course, the domains of the forward and reverse rules will intersect in any useful program.

  It is easy to translate any description of an RTM into \textAleph. Bi-infinite tapes can be easily represented and manipulated, per the program in \Cref{lst:aleph-tape}, and any rule (such as the above) can be mechanically translated as so,
  \begin{align*}
    &\phantom{{}={}} {S_1}~%
        (\atom{Tape}~{\ell_1}~(\atom{Sym}~C)~{r_1})~%
        (\atom{Tape}~{\ell_2}~\atom{Blank}~{r_2})~{t_3}~%
        (\atom{Tape}~{\ell_4}~(\atom{Sym}~D)~{r_4})~{t_5}~{t_6} \\
    &= {S_2}~%
        (\atom{Tape}~{\ell_1}~(\atom{Sym}~B)~{r_1})~%
        (\atom{Tape}~{\ell_2}~(\atom{Sym}~A)~{r_2})~{t_3'}~%
        (\atom{Tape}~{\ell_4}~(\atom{Sym}~F)~{r_4})~{t_5}~{t_6'}{:} \\
    &\phantom{{}={}}\qquad {t_3}~\infix{\atom{Left}}~{t_3'}{.} \\
    &\phantom{{}={}}\qquad {t_6}~\infix{\atom{Right}}~{t_6'}{.}
  \end{align*}
  whilst the special \atom{Start} and \atom{Stop} states are marked thus,
  \begin{align*}
    & {!}~\atom{Start}~{t_1}~{t_2}~{t_3}~{t_4}~{t_5}~{t_6}{;} \\
    & {!}~\atom{Stop}~{t_1}~{t_2}~{t_3}~{t_4}~{t_5}~{t_6}{;}
  \end{align*}
  That this translation faithfully reproduces the operation of the RTM is obvious from the high level semantics of \textAleph.
\end{proof}
\begin{crl}The \textslAleph\ calculus is Turing Complete.\end{crl}
\begin{proof}
  \textcite{bennett-tm} proved that a Reversible Turing Machine can simulate any Turing Machine (and vice-versa), and so the corollary immediately follows immediately from \textAleph's Reversible-Turing completeness. To drive the point home, however, we also implement the $\mu$-Recursive Functions (\Cref{lst:aleph-murec})---three functions, and three composition operators, which together are capable of representing any computable function over the naturals and hence are Turing complete. For example, addition, multiplication and factorials can be defined in terms of each other (in our \textAleph\ realisation) as
  \begin{align*}
    \textit{add} &\mapsto (\atom{Rec}~(\atom{Proj}~1)~(\atom{Sub}~\atom{Succ}~[(\atom{Proj}~2)])) \\
    \textit{mul} &\mapsto (\atom{Rec}~(\atom{Const}~0)~(\atom{Sub}~\textit{add}~[(\atom{Proj}~2)~(\atom{Proj}~3)])) \\
    \textit{fac} &\mapsto (\atom{Rec}~(\atom{Const}~1)~(\atom{Sub}~\textit{mul}~[(\atom{Proj}~2)~(\atom{Sub}~\atom{Succ}~[(\atom{Proj}~1)])]))
  \end{align*}
  and then one can evaluate, e.g., $(\atom{Mu}~\textit{fac})~{[7]}~{\unit}$, obtaining $\unit~5040~\textit{garbage}~(\atom{Mu}~\textit{fac})$.
\end{proof}

\section{Implementation Concerns}
\label{sec:impl}

Although the \textAleph\ calculus is microscopically reversible by \Cref{thm:mrev}, \textAleph\ has many degrees of freedom wherein macroscopic reversibility can be violated. There are two sources of entropy generation; the first is ambiguity in rule application, and the second is intrinsic to any useful implementation of concurrency. We already saw in \Cref{sec:ex2} how concurrency leads to entropy generation, but in this section we will expand on the other mechanism of entropy generation and how it can be avoided at the compiler-level. This is important because the primary motivation for reversible programming is to avoid unexpected entropy leaks, and so generally the appearance of ambiguity is a bug rather than intentional. In addition to ambiguity, we will also discuss other aspects of a realistic implementation to avoid or minimise the presence of random walks: optimising the evaluation order of sub-rules (`serialisation') and inferring the direction of computation. A reference implementation of these algorithms (as well as additional sugar) is made available as a language and interpreter, \alethe\ (see \Cref{sec:alethe}).

Nevertheless, in some cases the appearance of ambiguity/non-determinism may be intentional and so we should provide the programmer a mechanism to \emph{explicitly} weaken the ambiguity checker when desired. An example of this might be in implementing a fair coin toss for generating randomness,
\begin{align*}
  & \infix{\atom{Coin}}~\atom{Tails}{;} \\
  & \infix{\atom{Coin}}~\atom{Heads}{;}
\end{align*}
If the computational architecture is Brownian, such as a molecular computer, then this can be used to exploit the thermal noise of the environment to get as close as classical physics allows to a true random number generator (RNG). The way this RNG is used is by constructing a term $\atom{Coin}~{\unit}$. This term then matches both rules, and so depending on which is chosen the term may evolve to ${\unit}~\atom{Tails}~\atom{Coin}$ or to ${\unit}~\atom{Heads}~\atom{Coin}$. Moreover, this process will depend on the probability distribution realised by the implementation; if we assume this is uniform\footnotemark, then the coin toss will be fair with each term observed with probability $\frac12$. If the probability distribution is non-uniform, the situation is less clear and depends on the exact rates of the forward and reverse transitions for each rule. The analysis is beyond the scope of this paper, but if the dynamics takes the form\footnotetext{A more sophisticated implementation might reify and expose control over the distribution to the programmer, allowing arbitrary probabilities to be assigned. Going further, an interesting research direction would be to what extent \textAleph\ can be augmented quantum mechanically.}
\begin{align*}
  \ce{ ${\unit}~\atom{Tails}~\atom{Coin}$ <=>[$\lambda_t'$][$\lambda_t$] $\atom{Coin}~{\unit}$ <=>[$\lambda_h$][$\lambda_h'$] ${\unit}~\atom{Heads}~\atom{Coin}$ },
\end{align*}
then the steady-state ratio of heads to tails will be $\frac{\lambda_h/\lambda_h'}{\lambda_t/\lambda_t'}$. By microscopic reversibility, we expect $\lambda_h=\lambda_h'$ and similarly for $\lambda_t$ (unless the system is biased away from equilibrium), and therefore even a non-uniform distribution will have a uniformly distributed steady-state. Note the interesting property that, in reverse, an ambiguous/non-deterministic reversible rule is indistinguishable from irreversibility. Here, for example, the inverse of a fair coin toss is a process that consumes\footnotemark\ a bit.\footnotetext{There is a subtle point to make here: if the bit being consumed has no computational content and is uniformly distributed, then this consumption does not generate any entropy. That is, drawing a random bit from the environment and then replacing it is a completely isentropic process. For non-uniform distributions, the process can also be isentropic provided that the producer/consumer has a matching distribution. The problem arises when the distribution of the bit does not fit that of the producer/consumer, with deterministic computation being an extreme case wherein the bit can only take on a prescribed value (this is true even if the bit is pseudorandom).}

\para{Ambiguity Checker}

To exclude these sorts of process, we thus introduce an ambiguity checker. Not only does the introduction of an ambiguity checker reassure the programmer that they are writing information-preserving code, but the algorithm also serves as an invaluable debugging tool. In a large codebase, it can be hard to keep track of all the different patterns employed (the standard library has nearly two thousand), let alone ensure that there are no ambiguities. The ambiguity checker in \alethe\ performs this check for the programmer and, most importantly, is able to tell the programmer where the ambiguities lie.

In an unambiguous program, there are only a few valid scenarios for each term. It can match two computational rules (where we think of the forward and reverse directions of each rule as distinct rules for the current purpose), in which case it is an intermediate state of a computation. It can match one computational rule and one or more halting rules, in which case it is a terminal state of the computation. It can match one computational rule, in which case the program has entered an erroneous state and stalled. It can match one or more halting rules, in which case the term has no computational capacity, and most likely represents a data value. Lastly, it can match no rules, in which case it is an invalid term that cannot be constructed under normal conditions.

The remaining possibilities can be summarised two-fold: a term can match two computational rules and one or more halting rules, in which case the computational path from the terminus is ambiguous, or a term can match more than two computational rules (and zero or more halting rules), which is the more obvious ambiguity scenario.

The above cases can be simplified: for a given term, consider all the rules it matches but coalesce all halting rules, if any, into a single rule. Then, a term is unambiguous if it matches at most two such rules, and ambiguous if it matches more than two.

Clearly it is impractical to test this for all possible terms, there being a countable infinity of them; instead, one can determine whether it is possible to construct an ambiguous term. To do so, build a graph $G$ whose nodes are all the patterns occurring on either side of a computational rule---which we will colour white for reasons that will become apparent---and all the patterns occurring in a halting rule---which we will colour black. An edge is drawn between two nodes if it is possible to construct a term satisfying both patterns. This condition can be determined inductively: if either pattern is variable, then draw an edge. If neither pattern is variable, then draw an edge only if they are both the same atom, or they are both composite terms of the same length and their respective sub-patterns matches pairwise. In order to proceed, we first prove a lemma about this graph.

\begin{lem}There exists a term that simultaneously satisfies each of a set of patterns $\Pi$ if and only if the subgraph of $G$ formed from the nodes $\Pi$ is complete.\label{lem:amb-subgraph}\end{lem}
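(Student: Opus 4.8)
The plan is to prove the two directions separately, with the forward implication essentially immediate and the reverse implication—a Helly-type statement—carried by structural induction on the patterns. For the forward direction, suppose a single term $t$ simultaneously satisfies every pattern in $\Pi$. Then for any pair $\pi,\pi'\in\Pi$ that same $t$ witnesses the joint satisfiability of the pair, which is exactly the condition for an edge to be drawn; hence every pair of nodes in $\Pi$ is adjacent and the induced subgraph is complete. This uses nothing more than the fact that edges, by definition, encode pairwise joint satisfiability.

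For the reverse direction I would assume the induced subgraph on $\Pi$ is complete—i.e.\ every pair of patterns in $\Pi$ is jointly satisfiable—and argue by well-founded induction on the total syntactic size $\sum_{\pi\in\Pi}\lvert\pi\rvert$. First I would discharge the variable patterns: a variable is satisfied by every term (rule \ruleName{unif--var}), so variables impose no constraint and can be set aside, leaving $\Pi_0\subseteq\Pi$ the non-variable patterns. Using the edge criterion I would establish that $\Pi_0$ is \emph{homogeneous}: two atoms are adjacent only if identical, an atom and a composite are never adjacent (no term is at once an atom and a composite), and two composites are adjacent only if they share a common length. By completeness, then, either $\Pi_0$ consists of copies of a single atom $a$—whence $t=a$ satisfies all of $\Pi$—or $\Pi_0$ consists entirely of composites of one common length $n$ (with $\Pi_0$ empty handled by taking any fixed term).

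The crux is the inductive step in the composite case. For each position $i\in\{1,\dots,n\}$ I would collect the $i$-th sub-patterns $\Pi_i=\{\pi_i : (\pi_1\cdots\pi_n)\in\Pi_0\}$. The key observation is that pairwise compatibility is inherited positionally: since each adjacent pair of composites matches componentwise (the inductive clause of the edge criterion), every pair in $\Pi_i$ is itself adjacent, so the subgraph on $\Pi_i$ is again complete. Because each $\pi_i$ is a strict sub-pattern, $\sum\lvert\Pi_i\rvert$ is strictly smaller than $\sum_{\pi\in\Pi_0}\lvert\pi\rvert$, so the induction hypothesis applies and yields a term $t_i$ satisfying all of $\Pi_i$. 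Assembling $t=(t_1\cdots t_n)$ then produces a term matching every composite in $\Pi_0$ and, trivially, every variable in $\Pi$, closing the induction.

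The substantive difficulty is precisely this Helly property of the reverse direction: pairwise joint-satisfiability does not imply global joint-satisfiability for arbitrary constraint systems, and the argument succeeds only because the constraints decompose along the rigid tree skeleton of terms—agreement at the root forces equal constructors and equal arity, pushing all residual disagreement into independent positional subproblems with no cross-coupling. The points I expect to need the most care are confirming that variables genuinely introduce no hidden coupling (so that a variable adjacent to several otherwise-incompatible patterns cannot manufacture an obstruction) and that the homogeneity dichotomy is exhaustive; both, however, fall straight out of the edge criterion. Well-foundedness of the induction is secured by the strict decrease in total pattern size on descending into sub-patterns.
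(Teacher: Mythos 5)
Your proof is correct and follows essentially the same route as the paper's: the forward direction is immediate from the definition of edges, and the reverse direction proceeds by induction, discharging variable patterns, forcing agreement on atoms, and decomposing composites position-wise into sub-pattern sets whose induced subgraphs are again complete, so that the inductive hypothesis yields components $t_i$ assembled into $(t_1\cdots t_n)$. Your version is slightly more explicit about the well-foundedness measure and the homogeneity dichotomy, but the argument is the same.
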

\begin{proof}
  The $(\Leftarrow)$ case is obvious. We prove the converse by structural induction. As base cases, we note that the statement is trivially true if $|\Pi|=0,1,2$. Now consider the possible forms of some $\pi\in\Pi$:
  \begin{itemize}
    \item[(\bnfPrim{atom})] If $\pi$ is an atom, $a$, then the only term satisfying it is the same atom, $a$. As $\pi$ has edges to every other pattern in $\Pi$, $a$ must satisfy each of these other patterns too.
    \item[(\bnfPrim{var})] Suppose there is a term $t$ that satisfies all the patterns $\Pi\setminus\{\pi\}$. As $t$ satisfies the variable pattern $\pi$ vacuously, $t$ must satisfy all of $\Pi$. Without loss of generality then, we can ignore all variable patterns.
    \item[($\tau_1\cdots\tau_n$)] If $\pi$ is a composite term, then there can only be an edge to the other patterns if these are composite terms of the same length or are the variable pattern. By the previous case, we can ignore these variable patterns. Now, index the patterns by $i=1\ldots m$ and write $\pi_i=(\tau^{(i)}_1\cdots\tau^{(i)}_n)$. Then construct graphs $H_1\cdots H_n$ such that the nodes of $H_j$ are the patterns $\{\tau^{(i)}_j:i=1\ldots m\}$. If we draw edges according to the same rules as for $G$, then by the inductive definition of the edge condition for $G$, each of the graphs $H_i$ will be complete. By inductive assumption, a term $t_i$ exists satisfying all patterns in $H_i$ for each $i$, and therefore the term $(t_1\cdots t_n)$ must satisfy all the patterns $\Pi$.
  \end{itemize}
  The lemma follows.
\end{proof}

Using \Cref{lem:amb-subgraph}, we see that the different cases of unambiguous and ambiguous terms reduce to considering the structures of the complete subgraphs of $G$. Moreover, it suffices to only consider subgraphs formed from three nodes, i.e.\ \emph{triangles}: the first ambiguous case, of two white nodes and one or more black nodes, implies the existence of a triangle with two white nodes and one black node; the second case, of three or more white nodes and zero or more black nodes, implies the existence of a triangle with three white nodes. Conversely, a triangle with two or more white nodes implies one of these ambiguous cases. The only triangles present for the unambiguous cases contain two or more black nodes. Therefore we can enumerate all the triangles of $G$ containing two or more white nodes; if there are any, then there is an ambiguity and the patterns present in the triangles should be reported to the programmer. If there are none, then the program is unambiguous and deterministic.

In fact, there is one more potential source of ambiguity. Suppose that the two patterns of a sub-rule are not orthogonal (a common occurrence), then it is in principle possible that a halting term might match either side, and therefore lead to a 2-way ambiguity. The evaluation rules of our reference interpreter for \alethe\ do not suffer from this ambiguity, as it determines ahead of time the order and directionality of all sub-rules, but it is possible for a faithful implementation of \textAleph\ to suffer so. There are two ways to avoid this: the first is to require sub-rule patterns be orthogonal, though this also forbids many legal unambiguous programs---requiring additional levels of tedious and unnecessary indirection to circumvent the restriction---and so is undesirable; the second is to apply type inference (work towards which is presented in \Cref{app:typing}) to gain the extra knowledge needed to distinguish between legal and ambiguous programs. Namely, type inference allows the compiler to infer what sequences of patterns and terms occur in a program, and hence to determine which halting patterns are computationally connected. With this knowledge, and knowledge of the types of variables in a rule, it can determine what forms the halting terms of a sub-rule will take and therefore whether there is a unique mapping between terms and patterns or not.

\para{Serialisation Heuristics}

The Brownian semantics of \textAleph, particularly in the absence of coupling to a bias source, are not very performant. Specifically, sub-rule transitions execute a random walk in a phase space whose size---in the worst case---scales exponentially with the number of variables present in the current rule. Whilst explicit bias coupling circumvents this by essentially serving to annotate the preferred order and directionality of the sub-rules, it undermines the declarative nature of \textAleph. Instead, it is possible to algorithmically infer an appropriate execution path. Moreover, it can sometimes be the case (such as in the example of fractional addition) that the execution path is non-trivial and so granting the compiler the power to perform this routing automatically makes the job of the programmer easier: the programmer need only specify how all the variables relate to one another, and may even specify this excessively, and the compiler can figure out which relations are sensible to use.

\begingroup
 \def\fr#1#2{#1\!/\!#2}%
 \def\Fr#1#2{(\atom{Frac}~{#1}~{#2})}%
 \def\fwd#1{\overset\rightarrow{#1}}%
 \def\bwd#1{\overset\leftarrow{#1}}%
 \def\ruleLabel#1{\llap{\tiny$#1.$}}%
\begin{figure}
  \centering
  \begin{subfigure}{\textwidth}
    \centering
    \input{tg-frac.tikz}
    \vspace{0.5em}
    \caption{The full transition graph. Due to the size of the graph, labels have been suppressed. Halting states are marked black.}
    \label{fig:tg-frac-full}
  \end{subfigure}
  \begin{subfigure}{\textwidth}
  \centering
    \begingroup
 \def\fr#1#2{#1\!/\!#2}%
 \def\Fr#1#2{(\atom{Frac}~{#1}~{#2})}%
 \def\fwd#1{\overset\rightarrow{#1}}%
 \def\bwd#1{\overset\leftarrow{#1}}%
 \def\ruleLabel#1{\llap{\tiny$#1.$}}%
  \pgfmathsetseed{2736}
  \begin{tikzpicture}[
    every state/.style={inner sep=-0.5ex},
    decoration={random steps,amplitude=.375ex,segment length=.75ex,
                pre=lineto,pre length=1ex,post=lineto,post length=1ex},
    rounded corners=.15ex
  ]
    \long\def\hag(#1) [#2] #3{\node[state,accepting] (#1) [#2,decorate] {$\begin{array}{c}#3\end{array}$}}
    \long\def\bag(#1) [#2] #3{\node[state] (#1) [#2,decorate] {$\begin{array}{c}#3\end{array}$}}
    \hag (A) [] {\fr ab\\\,p~q};
    \bag (B) [right=of A] {\fr cd~g\\p~q};
    \bag (C1) [above=of B] {\fr cd~g\\p~q~q^2\!\!};
    \bag (D1) [right=of C1] {\fr cd~g\\p~q~g'\!\!};
    \bag (E1) [right=of D1] {\fr ab~g'\!\\p~q};
    \bag (C2) [below=of B] {\fr cd~g\\p'\!~q};
    \bag (D2) [right=of C2] {\fr ab~g'\!\\p'\!~q~g};
    \bag (E2) [right=of D2] {\fr ab~g'\!\\p'\!~q~q^2\!\!};
    \bag (F) [below=of E1] {\fr ab~g'\!\\p'\!~q};
    \bag (G) [right=of F] {\fr cd\\p'\!~q};
    \hag (H) [right=of G] {\fr cd\\\,p~q};
    \path (A) edge[->,above] node {2} (B)
          (B) edge[->,left] node {4} (C1)
              edge[->,left] node {1} (C2)
          (C1) edge[<-,above] node {5} (D1)
          (D1) edge[<-,above] node {2} (E1)
          (E1) edge[->,right] node {1} (F)
          (C2) edge[->,above] node {3} (D2)
          (D2) edge[->,above] node {6} (E2)
          (E2) edge[<-,right] node {4} (F)
          (F) edge[<-,above,shorten >=0.1pt] node {3} (G)
          (G) edge[<-,above,shorten >=0.8pt] node {1} (H);
  \end{tikzpicture}
\endgroup
    \vspace{0.5em}
    \caption{A more practical excerpt of the full transition graph.}
    \label{fig:tg-frac-excerpt}
  \end{subfigure}
  \def\fwd#1{$\vec{#1}$}%
  \def\bwd#1{\reflectbox{\ensuremath{\vec{\reflectbox{\ensuremath{#1}}}}}}%
  \def\en{$-$}%
  \caption{Two views of the transition graph for the fraction-addition routine. Nodes correspond to bags of variables in a `known' state. An edge with label $\ell$ is drawn from node $\alpha$ to node $\beta$ if sub-rule $\ell$ can map the knowledge state $\alpha$ to the knowledge state $\beta$ (with possible variable duplication/elision). It can be seen that there are two paths from $\{\fr ab,p,q\}$ to $\{\fr cd,p,q\}$ worth considering: \fwd2\en\fwd4\en\bwd5\en\bwd2\en\fwd1\en\bwd3\en\bwd1, and \fwd2\en\fwd1\en\fwd3\en\fwd6\en\bwd4\en\bwd3\en\bwd1. These are of roughly equal computational complexity, and so either is a viable choice.}
  \label{fig:tg-frac}
\end{figure}
To illustrate the problem, consider the following program excerpt which adds two simplified fractions:
  \begin{align*}
    & {\fr ab}~\infix{{+}~{\Fr pq}}~{\fr cd}{:} \\
    \ruleLabel{1}&\qquad {p}~{-}~{p'}{.} \\
    \ruleLabel{2}&\qquad {\fr ab}~\infix{\atomLocal{}~{\Fr pq}}~{\fr cd}~{g}{.} \\
    \ruleLabel{3}&\qquad {\fr cd}~\infix{\atomLocal{}~{\Fr{p'}q}}~{\fr ab}~{g'}{.} \\
    \ruleLabel{4}&\qquad {(\S q)}~{\square}~{q^2}{.} \\
    \ruleLabel{5}&\qquad {g'}~\infix{{\times}~{g}}~{q^2}{.} \\
    \ruleLabel{6}&\qquad {g}~\infix{{\times}~{g'}}~{q^2}{.}
  \end{align*}
where a fraction is represented by $\frac{p}{q+1} \equiv \Fr pq$ with $p\in\mathbb Z$ and $q\in\mathbb N$. Maintaining the invariant that fractions are in their simplest form is non-trivial, but involves finding the greatest common divisor of the numerators and denominators of two intermediate results, $g$ and $g'$, and then showing that their product $gg'=(q+1)^2$. It turns out that this fact can then be used to eliminate the intermediate garbage. In the unassisted semantics of \textAleph, the graph of all intermediate states generated by applying these rules transitively is given by \Cref{fig:tg-frac-full}. This graph has 117 nodes. It can therefore be seen that a computation is very likely to get stuck in one of the 115 intermediates, and will take a long time to find its way from the input variables, $\{\fr ab,p,q\}$, to the output variables, $\{p,q,\fr cd\}$. Many of these states are not interesting computational, and indeed only a small subgraph of 11 (or even 8) nodes is needed to perform the desired computation as shown in \Cref{fig:tg-frac-excerpt}.

From this transition subgraph, we can see that there are two possible paths to get from the input variables to the output variables, and the goal of the serialisation algorithm is to find these paths and to pick the most optimal. Here both of these paths are of equal complexity, and so either is valid. Having made a choice, the compiler can then direct the computer to perform the prescribed series of transitions, rather than executing a random walk. In a Brownian computational system, for example, this would be achieved by adding `fake' dependencies to the sub-rules to force a linear ordering (or possibly partially parallel, where this makes sense) and automatically coupling the sub-rules to the bias source in the correct direction. As we shall see, automatically determining the cost of a path is non-trivial and sometimes even impossible, and so a perfect solution to the serialisation problem is not generally possible. Nonetheless, for many programs it is possible to find the appropriate path, or to give the compiler enough information to make a better choice, and so further discussion of serialisation is warranted.
\endgroup

The simplest and least efficient serialisation strategy is as follows: suppose you start with the input variables $\{x,y,z\}$. Enumerate all the sub-rules that can be evaluated given this current `knowledge state', e.g.\ ${x}~\infix{\atom{Foo}}~{y}~\cursivewS$ and ${z}~\atomLocal{}~{x}$ would qualify but $\cursivewS~\infix{\atom{Map}~\atom{Bar}}~\cursivevS$ would not because neither the variable $\cursivewS$ nor the variable $\cursivevS$ are currently available. Evaluate all of these sub-rules in parallel, making sure to duplicate variables as necessary to avoid unlearning any. Here, we would end up with the knowledge state $\{\cursivewS,x,y,z\}$. Now ignore these sub-rules going forward and repeat this process until all sub-rules have been used once (and only once), and all variables are known. If some sub-rule has not been used or a variable remains unknown, then there is a logic error and the programmer should be appropriately chastised. Now, perform this entire process again, but starting from the output variables. If all is well, then we will have obtained two routes: one from the input variables to all variables, and one from the output variables to all variables. As these routes are reversible, we are immediately rewarded with a route from the input to the output variables, using each sub-rule twice. Whilst clearly inefficient, this strategy demonstrates that each sub-rule need be used at most twice (once in each direction).
\begin{listing}
  \centering
  \begingroup%
\def\nl{\\[1.5em]}%
\def\nfx#1{\infix{\atom{#1}}}%
\def\infx#1#2{\infix{\atom{#1}~{#2}}}%
\def\ruleLabel#1{\llap{\tiny$#1.$}}%
\begin{sublisting}{\linewidth}
  \centering
  \begin{align*}
    & \dataDef{\atom{Tree}~{\alpha}} = \atom{Tree}~{\alpha}~{[\atom{Tree}~{\alpha}]} \\
    & \atom{polish}~{:\kern0.2ex:}~\atom{Tree}~{\alpha}~{\rightarrow}~{[(\atom{Int},{\alpha})]} \\
    & \atom{polish}~{(\atom{Tree}~{x}~\cursivexS)} = (\atom{length}~\cursivexS, x)~{:}~\atom{concatMap}~\atom{polish}~\cursivexS
  \end{align*}
  \caption{A snippet of \texttt{Haskell} code for converting an arbitrary tree into Polish notation. Polish notation is an isomorphic linear representation of tree-like structures that is well known for its use for arithmetic expressions. For example, the expression $(3 + 4) \times 7 - 2^5$ can be represented as ${-}~{\times}~{+}~{3}~{4}~{7}~{\raisebox{-0.5ex}{${}^\wedge$}}~{2}~{5}$.}
  \label{lst:serial-polish-hs}
\end{sublisting}
\begin{sublisting}{\linewidth}
  \centering
  \begin{minipage}[t]{.6\linewidth}
    \begin{align*}
      & (\atom{Tree}~{x}~\cursivetS)~\nfx{Polish}~{[({,}~{x}~{n})~{\cdot}~{p}]}{:} \\
      \ruleLabel{1}&\qquad \infx{Length}\cursivexS~{n}{.} \\
      \ruleLabel{2}&\qquad \cursivetS~\infx{ConcatMap}{\atom{Polish}}~{p}~\cursivelS{..} \\
      \ruleLabel{3}&\qquad {p}~\infx{PolishReads}{n}~\cursivetS~\cursivelS{.}
    \end{align*}\vspace{-1em}
  \end{minipage}%
  \begin{minipage}[t]{.3\linewidth}%
    \begin{equation*}\begin{array}{|rl}
      ~ & \dataDef{\atom{Tree}~{x}~\cursivetS}{;} \\
        & \dataDef{{,}~{a}~{b}}{;} \\
        &\quad \comment{\cmtSLopen~2-tuples}
    \end{array}\end{equation*}
  \end{minipage}
  \caption{An excerpt of a translation of the \texttt{Haskell} implementation into \alethe\ (see \Cref{lst:supp-polish} for the full program). Unfortunately, the implementation of \atom{ConcatMap} in \alethe\ has additional garbage in the form of a list of the lengths of the intermediate lists ($\cursivelS$). To eliminate this garbage, we write a function \atom{PolishReads} which takes a string of trees in Polish notation and converts them back into trees, also generating the same garbage value.}
  \label{lst:serial-polish-exc}
\end{sublisting}
\begin{sublisting}{\linewidth}
  \centering
  \vspace{2em}
  \pgfmathsetseed{18976}
  \begin{tikzpicture}[
    every state/.style={inner sep=-0.5ex},
    decoration={random steps,amplitude=.375ex,segment length=.75ex,
                pre=lineto,pre length=1ex,post=lineto,post length=1ex},
    rounded corners=.15ex
  ]
    \long\def\hag(#1) [#2] #3{\node[state,accepting] (#1) [#2,decorate] {$\begin{array}{c}#3\end{array}$}}
    \long\def\bag(#1) [#2] #3{\node[state] (#1) [#2,decorate] {$\begin{array}{c}#3\end{array}$}}
    \hag (A) [] {x\\\cursivetS};
    \bag (B) [right=of A] {x~n\\\cursivetS};
    \bag (C) [right=of B] {x~n\\p~\cursivelS};
    \bag (D) [right=of C] {x~n\\\cursivetS~\cursivelS};
    \hag (E) [right=of D] {x~n\\p};
    \path (A) edge[->,above,shorten >=0.3pt] node {1} (B)
          (B) edge[->,above,shorten >=0.5pt] node {2} (C)
          (C) edge[<-,above,bend left,shorten <=0.8pt,shorten >=0pt] node {2} (D)
              edge[->,below,bend right,shorten <=0.2pt] node {3} (D)
          (D) edge[<-,above,shorten >=0.8pt] node {3} (E);
  \end{tikzpicture}
  \vspace{1em}
  \def\fwd#1{$\vec{#1}$}%
  \def\bwd#1{\reflectbox{\ensuremath{\vec{\reflectbox{\ensuremath{#1}}}}}}%
  \def\en{$-$}%
  \caption{The motif employed above---generating the garbage in two different ways---can be used to eliminate it and achieve the desired (partial) bijection, as shown by its transition graph. There are two possible execution paths, \fwd1\en\fwd2\en\bwd2\en\bwd3 and \fwd1\en\fwd2\en\fwd3\en\bwd3. As explained in the text, the second route is \emph{strongly} preferred--- hence the cost annotation on \ruleName{2}.}
  \label{lst:serial-polish-tg}
\end{sublisting}
\begin{sublisting}{\linewidth}
  \centering
  \begin{minipage}[t]{.58\linewidth}
    \begin{align*}
      & {t}~\nfx{Polish}~{p}{:} \\
      &\qquad {p}~\nfx{PolishRead}~{[]}~{t}{.}
      \nl
      & {[{({,}~{x}~{n})}~{\cdot}~\cursivexS]}~\nfx{PolishRead}~{\cursivexS'}~{(\atom{Tree}~{x}~\cursivetS)}{:} \\
      &\qquad \cursivexS~{n}~\nfx{PolishReads}~{\cursivexS'}~\cursivetS{.}
    \end{align*}\vspace{-1em}
  \end{minipage}%
  \begin{minipage}[t]{.4\linewidth}
    \begin{align*}
      & \cursivexS~{\Z}~\nfx{PolishReads}~\cursivexS~{[]}{;} \\
      & \cursivexS~{(\S n)}~\nfx{PolishReads}~{\cursivexS''}~{[{t}~{\cdot}~\cursivetS]}{:} \\
      &\qquad \cursivexS~\nfx{PolishRead}~{\cursivexS'}~{t}{.} \\
      &\qquad {\cursivexS'}~{n}~\nfx{PolishReads}~{\cursivexS'}~\cursivetS{.}
    \end{align*}
  \end{minipage}
  \caption{In fact, programming in a `reversible-first' manner permits us to find the far more efficient and naturally bijective program above. This program was written by considering how to read a tree from its Polish representation: In our experience, picking the `harder' direction of a bijective computation to implement helps to suppress one's learned intuition for irreversible programming, and hence makes it less likely to fall into the traps of `shortcuts' with their accompanying garbage.}
  \label{lst:serial-polish-nice}
\end{sublisting}
\endgroup
  \caption{An example of serialisation as applied to the interconversion between trees and Polish notation.}
  \label{lst:serial-polish}
\end{listing}

A better strategy is to construct a `transition graph' as follows: take as nodes the powerset of all variables, i.e.\ $\varnothing$, $\{x\}$, $\{y\}$, $\{x,y\}$, etc. Mark the input and output nodes specially for later. Draw an edge between two nodes if a sub-rule (with possible variable duplication/elision) can be used to map between the two knowledge states, and label the edge with this sub-rule and in which direction it is to be applied. With the transition graph thus constructed, all possible routes between the input and output nodes can be enumerated\footnote{In fact there are infinitely many routes. It is therefore appropriate to restrict the routes under consideration to a sane (and finite) subset; namely, we avoid revisiting any node, and we also exclude routes that make use of a sub-rule more than twice.}. To proceed, a cost heuristic is needed in order to rank routes by preference. For example, if the goal is simply to minimise the number of sub-rules used then Dijkstra's algorithm will suffice. Note that, in the reference interpreter, transition graphs are constructed more restrictively. Namely, variable duplication/elision is not implemented as this always\footnote{The serialisation algorithm in \alethe\ is still subject to an exponential worst case complexity. The size of the transition graph constructed depends on the inter-dependency of the variables: the more dependency, the closer to linear in the number of variables the graph size is. If there is minimal dependency, such as in the automatically generated implemention of \atom{Dup} for $\dataDef{{,}~{a}~{b}~{c}~{d}~{e}}$, then the worst case will be realised.}\ leads to an exponentially large transition graph and, in practice, it is rarely needed. Where it is needed, the programmer must instead explicitly use \atom{Dup} (and create a new variable). This provides dramatically better performance at the expense of a slight inconvenience.

{\def\fwd#1{$\vec{#1}$}%
\def\bwd#1{\reflectbox{\ensuremath{\vec{\reflectbox{\ensuremath{#1}}}}}}%
\def\en{$-$}%
We saw an example of this algorithm earlier in \Cref{fig:tg-frac}. Another example, deserving of further comment, is given in \Cref{lst:serial-polish}. There are two routes of apparent equal cost, \fwd1\en\fwd2\en\bwd2\en\bwd3 and \fwd1\en\fwd2\en\fwd3\en\bwd3, but this is misleading. If the tree has $n$ nodes across $\ell\sim\log n$ levels, a single (non-recursive) step of \atom{Polish} has time complexity $\alpha$, and a single (non-recursive) step of \atom{PolishReads} has time complexity $\beta$, then the first route can be shown to have time complexity $\bigOO{2^\ell n(\alpha+\beta)}$ whilst the second has complexity $\bigOO{n(\alpha+2\ell\beta)}$. That is, if the serialisation algorithm picks a route which repeats a recursive step then consequently there will be an exponential overhead. Moreover, this becomes less obvious in the cases of corecursion, or higher order functions where a function may be passed into itself (recursively or corecursively)---compare, for example, the Y combinator. As such it is not generally possible to algorithmically discriminate this scenario (although the extra knowledge afforded by the type inference algorithm developed in \Cref{app:typing} may help). In fact, the reference interpreter for \alethe\ does not even try to do so. Instead, it exposes a method by which the programmer can adjust the heuristic cost function used in serialisation: each edge in the transition graph has a weight, given by the number of full-stops following the sub-rule in the source; therefore, in this case rule \ruleName{2} should be annotated with a cost of 2 (via two full-stops) in order to penalise its repeated use. To inspect the route chosen, the \texttt{:p} directive may be issued to the interpreter.}

\para{Directional Evaluation}

Consider an intermediate term along a reversible computation path (\Cref{fig:rev-classes}). In isolation, one cannot know in which direction computation should proceed. Bias coupling helps with this, but \alethe\ does not make use of this. Moreover, it would be nice to avoid explicit bias-coupling where practical. As such, we need some concept of computational `momentum' to maintain a consistent direction. This requires that the programs are deterministic such that phase space is branchless, which fortunately we have guaranteed through the ambiguity checker. Recall that, treating the forward and reverse directions of a rule as distinct, every term matches at most two rules; if we maintain a consistent direction of computation, then one of these rules will be the converse of the rule that was employed to reach the current state. More concretely, let the terms of the computation be labelled as some contiguous subset of $\{\ket{n}:n\in\mathbb Z\}$. By determinism, there is a unique rule $r(n\mapsto n+1)$ enacting each transition $\ket{n}\mapsto\ket{n+1}$, and by reversibility the unique rule enacting the transition $\ket{n+1}\mapsto\ket{n}$ is $\bar r$, the converse of $r$. Therefore, given a term $\ket{n+1}$, the rule $\bar r(n\mapsto n+1)$ can \emph{only} take us to $\ket{n}$; it can never also be the rule taking us to $\ket{n+2}$. It is, however, possible that the rule taking us to $\ket{n+2}$ is the same as $r(n\mapsto n+1)$. To summarise, all we need do is keep track of which rule we just applied, $r$; then, when considering the new term, we find any rules it matches and exclude $\bar r$ from this set. If the set is empty, we have entered an error state and should report it to the user. Otherwise, our ambiguity check has ensured that it either contains a single computational rule, which we duly apply, or contains one or more halting rules, whence we would halt the computation and report the result. This evaluation logic applies just as well to sub-rules. The one edge case is when evaluating a new term; as only halting terms can be constructed, it will have at most one computational rule to choose from and thus there is no ambiguity.

\section{Conclusion}

The examples furnishing this paper demonstrate the utility of the \textAleph\ calculus, as well as its appropriateness as a candidate model targeting reversible molecular and Brownian computational architectures. Though a molecular implementation is as yet unrealised, an interpreter for the programming language \alethe\ serves as a useful testbed for \textAleph. Going forward, we hope to extend the interpreter to support the full concurrent language. Additionally, development of a type system appropriate for a language that is both reversible and declarative would be helpful for improving code analysis and reducing programmer errors, and work towards this is underway (see \Cref{app:typing}). Lastly, it is hoped that \textAleph\ can be realised experimentally in a molecular context.

\appendix
\section{Acknowledgements}

The author would like to acknowledge the invaluable help and support of his supervisor, Gos Micklem. This work was supported by the Engineering and Physical Sciences Research Council, project reference 1781682.

\section{A Spoonful of Sugar: \alethe}
\label{sec:alethe}

For clarity and convenience, we have introduced a number of syntactic shorthands in \Cref{sec:ex1,sec:ex2}. We now summarise and extend this sugar to construct a programming language, \alethe, the definition of which is given in \Cref{lst:alethe-dfn}. Implementing the additional measures discussed in \Cref{sec:impl}, a reference interpreter for \alethe\ is also made available\footnote{\url{https://github.com/hannah-earley/alethe-repl}} together with a `standard library' and select examples\footnote{\url{https://github.com/hannah-earley/alethe-examples}}. We proceed with the definition of \alethe\ by desugaring each form in turn.

\begin{listing}\centering
\fbox{\begin{minipage}{0.9\textwidth}\vspace{\baselineskip}\begin{equation*}\begin{aligned}
    \ruleName{pattern term} && \tau &::= \alpha ~|~ \bnfPrim{var} ~|~ (\,\tau^*\,) ~|~ \sigma \\
    \ruleName{atom} && \alpha &::= \bnfPrim{atom} ~|~ \sim\alpha ~|~ \#{}^{\backprime\backprime}\,\bnfPrim{char}^\ast\,'' ~|~ \#\alpha \\
    \ruleName{value} && \sigma &::= \mathbb{N} ~|~ {}^{\backprime\backprime}\, \bnfPrim{char}^\ast \,'' ~|~ [\,\tau^\ast\,] ~|~ [\, \tau^+ \,.\, \tau \,] ~|~ \blank \\
    \ruleName{party} && \pi &::= \tau:\tau^\ast ~|~ \bnfPrim{var}':\tau^\ast \\
    \ruleName{parties} && \Pi &::= \pi ~|~ \Pi\,;\,\pi \\
    \ruleName{relation} && \rho &::= \tau^\ast=\tau^\ast ~|~ \tau^\ast~{}^\backprime\tau^\ast{}^\prime~\tau^\ast \\
    \ruleName{definition head} && \delta_h &::= \rho ~|~ \{\,\Pi\,\}=\{\,\Pi\,\} \\
    \ruleName{definition rule} && \delta_r &::= \delta_h \,; ~|~ \delta_h\!: \Delta^+ \\
    \ruleName{definition halt} && \delta_t &::=~ !\,\tau^\ast; ~|~ !\,\rho\,; \\
    \ruleName{definition} && \delta &::= \delta_r ~|~ \delta_t \\
    \ruleName{cost annotation} && \xi &::= .^\ast \\
    \ruleName{declaration} && \Delta &::= \delta ~|~ \pi\,.\xi ~|~ \rho\,.\xi ~|~ !\,\rho\,.\xi \\
    \ruleName{statement} && \Sigma &::= \delta ~|~ \dataDef{\tau^\ast}; ~|~ \opn{import}~ {}^{\backprime\backprime}\bnfPrim{module path}''; \\
    \ruleName{program} && \mathcal P &::= \Sigma^\ast
\end{aligned}\end{equation*}\vspace{\baselineskip}\end{minipage}}
  \caption{Definition of \alethe\ syntax.}
  \label{lst:alethe-dfn}
\end{listing}

\begin{listing}
  \centering
  \begingroup%
\def\nl{\\[1.5em]}%
\begin{align*}
  & \cursivexS~\infix{\atom{InsertionSort}~{p}}~{\cursivenS'}~\cursiveyS{:} \\
  &\qquad \phantom{!}~\cursivenS~\infix{\atom{Reverse}}~{\cursivenS'}{.} \\
  &\qquad {!}~\atomLocal{Go}~{p}~\cursivexS~{[]}~{[]} =
              \atomLocal{Go}~{p}~\cursivexS~{[{n}~{\cdot}~\cursivenS]}~{\cursiveyS'}{.} \\
  &\qquad \phantom{!}~
          \atomLocal{Go}~{p}~{[{x}~{\cdot}~\cursivexS]}~\cursivenS~\cursiveyS = 
          \atomLocal{Go}~{p}~\cursivexS~{[{n}~{\cdot}~{\cursivenS'}]}~{\cursiveyS'}{:} \\
  &\qquad\qquad {x}~\cursiveyS~\infix{\atom{Insert}~{p}}~{n}~{\cursiveyS'}{.} \\
  &\qquad\qquad \cursivenS~\infix{\atom{Map}~{(\atomLocal{Go}~{n})}}~{\cursivenS'}{.} \\
  &\qquad\qquad {m}~\infix{\atomLocal{Go}~{n}}~{m'}{:} \\
  &\qquad\qquad\qquad \infix{{<}~{m}~{n}}~{b}{.} \\
  &\qquad\qquad\qquad \infix{{<}~{m'}~{n}}~{b}{.} \\
  &\qquad\qquad\qquad {m}~\infix{\atomLocal[2]{}~{b}}~{m'}{.} \\
  &\qquad\qquad {m}~\infix{\atomLocal{}~\atom{True}}~{m}{;} \\
  &\qquad\qquad {m}~\infix{\atomLocal{}~\atom{False}}~{(\S{m})}{;}
  \nl
  & \cursivexS~\infix{\atom{Reverse}}~\cursiveyS{:} \\
  &\qquad {!}~\atomLocal{Go}~\cursivexS~{[]} = \atomLocal{Go}~{[]}~\cursiveyS{.} \\
  &\qquad\phantom{!}~\atomLocal{Go}~{[{x}~{\cdot}~\cursivexS]}~\cursiveyS =
                     \atomLocal{Go}~\cursivexS~{[{x}~{\cdot}~\cursiveyS]}{;}
\end{align*}
\fbox{\begin{minipage}{.95\textwidth}\vspace{-0.2em}\begin{align*}
  & {[77~2~42~68~41~36~8~36]}~\infix{\atom{InsertionSort}~{<}}~{[7~0~5~6~4~2~1~3]}~{[2~8~36~36~41~42~68~77]}{.}
\end{align*}\vspace{-0.8em}\end{minipage}}%
\endgroup
  \captionsetup{singlelinecheck=off}
  \def\cap{The \alethe\ implementation of insertion sort from the standard library, making use of nested locally scoped definitions. In contrast to the implementation in \Cref{lst:ex-sort}, this definition yields more useful `garbage' data in that $\cursivenS'$ contains the permutation which maps $\cursivexS$ to $\cursiveyS$, as shown in the boxed example. Specifically, this corresponds to the following permutation (in standard notation):}
  \caption[\cap]{\cap
    \begin{equation*}\begin{pmatrix}
      0 & 1 & 2 & 3 & 4 & 5 & 6 & 7 \\
      7 & 0 & 5 & 6 & 4 & 2 & 1 & 3
    \end{pmatrix}.\end{equation*}}
  \label{lst:ex-sort2}
\end{listing}

{\def\ir#1#2{\item[\ruleName{#1}] $#2$\\}\begin{itemize}
  \ir{patt.\ term}{\tau ::= \alpha ~|~ \bnfPrim{var} ~|~ (\,\tau^*\,) ~|~ \sigma}
    The definition of a pattern term only differs from its definition in \textAleph\ by the inclusion of sugared values, $\sigma$.
  \ir{atom}{\alpha ::= \bnfPrim{atom} ~|~ \sim\alpha ~|~ \#{}^{\backprime\backprime}\,\bnfPrim{char}^\ast\,'' ~|~ \#\alpha}
    Atoms are fundamentally the same as in \textAleph, but there are four ways of inputting an atom. If the name of an atom begins with an uppercase letter or a symbol, then it can be typed directly. In fact, any string of non-reserved and non-whitespace characters that does not begin with a lowercase letter (as defined by Unicode) qualifies as an atom; if it does begin with a lowercase letter, it qualifies as a variable. If one wishes to use an atom name that does not follow this rule, one can use the form $\#{}^{\backprime\backprime}\,\bnfPrim{char}^\ast\,''$, where $\bnfPrim{char}^\ast$ is any string (using \texttt{Haskell}-style character escapes if needed). Additionally you can prefix a symbol with $\#$ to suppress its interpretation as a relation (e.g.\ $\#+$). Finally, if an atom is prefixed with some number of tildes then it is locally scoped: that is, you can nest rule definitions and make these unavailable outside their scope, and you can refer to locally scoped atoms in outer scopes by using more tildes (there is no scope inheritance). You can even use an empty atom name if preceded by a tilde, which can be useful for introducing anonymous definitions. Example code using this sugar is given in \Cref{lst:ex-sort2}.
  \ir{value}{\sigma ::= \mathbb{N} ~|~ {}^{\backprime\backprime}\, \bnfPrim{char}^\ast \,'' ~|~ [\,\tau^\ast\,] ~|~ [\, \tau^+ \,.\, \tau \,] ~|~ \blank}
    The sugar here for natural numbers ($\mathbb N$) and lists is familiar from earlier, but there are two additional sugared value types: strings\footnote{Once again, using \texttt{Haskell}-style character escapes if needed}, which are lists of character atoms where the character atom for, e.g., \texttt{c} is \texttt{'c}, and units, which can be inserted with $\blank$ instead of $\unit$.
  \ir{party}{\pi ::= \tau:\tau^\ast ~|~ \bnfPrim{var}':\tau^\ast}
    Parties are the same as in raw \textAleph. Note, however, that the reference interpreter of \alethe\ has no separate representation for opaque variables: if the context is a variable, then it is assumed to be opaque rather than the variable pattern. When concurrency and contextual evaluation is supported in a future version, this deficiency will need to be addressed.
  \ir{parties}{\Pi ::= \pi ~|~ \Pi\,;\,\pi}
    Bags of parties are delimited by semicolons.
  \ir{relation}{\rho ::= \tau^\ast=\tau^\ast ~|~ \tau^\ast~{}^\backprime\tau^\ast{}^\prime~\tau^\ast}
    Relations are shorthand for where there is a single, variable context. These are familiar from the previous examples.
  \ir{def.\ head}{\delta_h ::= \rho ~|~ \{\,\Pi\,\}=\{\,\Pi\,\}}
    A rule is either a relation, or a mapping between party-bags which are enclosed in braces.
  \ir{def.\ rule}{\delta_r ::= \delta_h \,; ~|~ \delta_h\!: \Delta^+}
    If a rule has no sub-rules, then it is followed by a semicolon, otherwise it is followed by a colon and then its sub-rules/any locally scoped rules. These declarations must either be in the same line, or should be indented more than the rule head similar to \texttt{Haskell}'s off-side rule or \texttt{python}'s block syntax.
  \ir{def.\ halt}{\delta_t ::=~ !\,\tau^\ast; ~|~ !\,\rho\,;}
    As well as the canonical halting pattern form, there is also sugar for a relation wherein both sides of the relation each beget a halting pattern, as does the infix term (if any).
  \ir{def.}{\delta ::= \delta_r ~|~ \delta_t}
    This is the same as for \textAleph.
  \ir{cost ann.}{\xi ::= .^\ast}
    Sub-rules can be followed by more than one full-stop, with the number of full-stops used quantifying the cost of the sub-rule for use in the serialisation algorithm heuristics as explained towards the end of \Cref{sec:impl}.
  \ir{decl.}{\Delta ::= \delta ~|~ \pi\,.\xi ~|~ \rho\,.\xi ~|~ !\,\rho\,.\xi}
    In addition to the party sub-rules present in \textAleph, there are three sugared forms. If a declaration is a definition then, after introducing and resolving fresh anonymous names for any locally scoped atoms, the behaviour is the same as if the definition was written in the global scope. If a declaration takes the form of a relation, then it is the same as if we introduced a fresh context variable and bound each side of the relation to it. If the relation is preceded by $!$, then it defines both a sub-rule and halting pattern, e.g.
    \begin{align*}
      {!}~\atomLocal{Go}~{p}~\cursivexS~{[]}~{[]} =
          \atomLocal{Go}~{p}~\cursivexS~{[{n}~{\cdot}~\cursivenS]}~{\cursiveyS'}{.}
    \end{align*}
    becomes
    \begin{align*}
      & {!}~\atomLocal{Go}~{p}~{\cursivexS}~{[]}~{[]}{;} \\
      & {!}~\atomLocal{Go}~{p}~{\cursivexS}~{[{n}~{\cdot}~{\cursivenS}]}~{\cursiveyS'}{;} \\
      & \phantom{!}~\atomLocal{Go}~{p}~\cursivexS~{[]}~{[]} =
              \atomLocal{Go}~{p}~\cursivexS~{[{n}~{\cdot}~\cursivenS]}~{\cursiveyS'}{.}
    \end{align*}
  \ir{statement}{\Sigma ::= \delta ~|~ \dataDef{\tau^\ast}; ~|~ \opn{import}~ {}^{\backprime\backprime}\bnfPrim{module path}'';}
    A statement is a definition, a data definition, or an import statement. A data definition $\dataDef{\S~{n}}{;}$ is simply sugar for
    \begin{align*}
      & {!}~{\S}~{n}{;} \\
      & \infix{\atom{Dup}~({\S}~{n})}~{({\S}~{n'})}{:} \\
      &\qquad \infix{\atom{Dup}~{n}}~{n'}{.}
    \end{align*}
    i.e.\ it marks the pattern as halting and automatically writes a definition of \atom{Dup}. Future versions may automatically write other definitions, such as comparison functions, or include a derivation syntax akin to \texttt{Haskell}'s. The import statement imports all the definitions of the referenced file, as if they were one file, and supports mutual dependency. Future versions may support partial, qualified and renaming import variants.
  \ir{program}{\mathcal P ::= \Sigma^\ast}
    A program is a series of statements. Additionally, comments can be added in \texttt{Haskell}-style:
    \begin{align*}
      & \comment{\cmtSLopen~this is a comment.} \\
      & {!}~{\S}~\comment{\cmtMLopen~so is this...~\cmtMLclose}~{n}{;}~\comment{\cmtSLopen~...and this.}\\
      & \comment{\cmtMLopen~and this is}\\
      & \comment{\cmtMLcont~a multiline comment~\cmtMLclose}
    \end{align*}
\end{itemize}}

{\def\midtilde{\raisebox{-0.23em}{\textasciitilde}}
\noindent We also note that the interpreter for \texttt{alethe} treats the atoms \Z, \S, \Cons, \Nil, \atom{Garbage}, and character atoms specially when printing to the terminal. Specifically, they are automatically recognised and printed according to the sugared representations defined above. That is, except for the \atom{Garbage} atom which, when in the first position of a term, hides its contents---rendering as \texttt{\{\midtilde GARBAGE\midtilde\}}. This is useful when working with reversible simulations of irreversible functions that generate copious amounts of garbage data. If one assigns the garbage to a variable, then the special \texttt{:g} directive can be used to inspect its contents. Other directives include \texttt{:q} to quit, \texttt{:l file1 ...} to load the specified files as the current program, \texttt{:r} to reload the current program, \texttt{:v} to list the currently assigned variables after the course of the interpreter session, and \texttt{:p} to show all the loaded rules of the current program (and the derived serialisation strategy for each, see later). Computations can be performed in one of two ways; the first, \texttt{| (+ 3) 4 \raisebox{-0.3em}{-}}, takes as input a halting term, attempts to evaluate it to completion, and returns the output if successful, i.e.\ \texttt{() 7 (+ 3)}. The second, \texttt{> 4 `+ 3` y}, takes a relation as input and attempts to get from one side to the other. For example, here we evaluate \texttt{(+ 3) 4 ()}, obtaining \texttt{() 7 (+ 3)} which we then unify against \texttt{() y (+ 3)}, finally resulting in the variable assignment $y\mapsto7$. To run in the opposite direction, use \texttt{<} instead. Note that, whilst the interpreter does understand concurrent rules, it is not yet able to evaluate them.}

\para{Unification Efficiency}

Beyond the implementation concerns expressed in \Cref{sec:impl}, another issue of practical importance is the efficiency of identifying patterns matching a given term: to date, the standard library alone contains just shy of two thousand patterns, and so searching the known patterns linearly is impractical. Our interpreter constructs a trie-like structure for this purpose such that the time complexity for unification scales as $\bigOO{m+\log n}$ where $m$ is the number of patterns which match and $n$ the total number of patterns. Though asymptotically this is the best complexity possible, there are certainly improvements that could be made to the unification algorithm. In particular, the data structures used are primarily binary trees; hash tables with fixed lookup time for reasonable values of $n$ would yield significant improvements in distinguishing atoms, and random access arrays would yield significant improvements for distinguishing between composite terms of differing length. Notwithstanding these possible improvements, the reference interpreter has proven sufficiently fast for demonstrating non-trivial computations in \alethe. For example, computing $8! = \num{40320}$ takes only a few seconds despite the unary representation of natural numbers.

\begingroup
\def\isot{\mathrel{\ensuremath{:\kern0.11ex:}}}
\section{Towards a Type System}
\label{app:typing}

Type systems are very useful, in that they allow a compiler to statically analyse a program before it is run and identify whether it is ever possible for a value to be passed to a function that does not understand such an input. This is extremely powerful, and leads to the \texttt{Haskell} maxim `if it compiles, it runs': that is, although there may be logical errors, the program should not crash.

We may be tempted to try to start with a simple polymorphic type system, such as Hindley-Milner, but unfortunately Hindley-Milner is unsuitable for \textAleph\ being that it does not have objects that can be uniquely assigned a computational role. Consider attempting to type the symbol $\square$ from \Cref{lst:ex-square-def}, for example, which implements squaring/square rooting for natural numbers. One is tempted to say something like $\square\isot\mathbb{N}~\unit\leftrightarrow\unit~\mathbb{N}\isot\square$, but this does not work. Inspecting the definition more closely, we see that $\square$ appears in a number of contexts: ${\square}~{n}~{\unit}$, ${\square}~{n}~{m}~{\square}$, and ${\unit}~{m}~{\square}$. It is the second context where our attempt to type $\square$ breaks down: not only does its apparent arity change, but it appears twice in the same expression! Similar problems are encountered when trying to type other declarative languages, such as \texttt{Prolog}. To proceed, we shall require that any candidate type system treats terms \emph{holistically}, not ascribing any special importance to any particular part of a term. Such a holistic type system will most likely also be declarative in nature.

A fruitful way of thinking of types of terms is by considering their endpoints---their halting states. A well-formed term will be able to eventually transition to one or more halting terms, and the properties of these halting terms are what we are most interested in. For example, we would like to know that if we construct a term from $+$, two natural numbers, and $\unit$, then we will be able to extract two natural numbers---rather than, say, a boolean and a string---once computation finishes. But, the `end' state is free to wander back to the `beginning', and so it is more appropriate to consider the type of the term as an unordered pair, consisting of the two possible halting states. Checking this type will then consist of enumerating which rules may be encountered along a computational path, verifying that all these rules are consistent with the initial type and that the other state is reachable in principle. If the ambiguity checker is turned off, then we see that it is possible to have a term with more than two halting states and so its type should be a set of types corresponding to each possible halting state. We then notice that conventional data---such as a natural number---is of the same ilk, save that it only has one halting state. We call these type-sets \emph{isotypes}, and their inhabitants \emph{isovalues}, as each isovalue has one or more isomorphically equivalent representations.

How might these isotypes look? We begin with conventional data; let us call the isotype of a natural number \atom{Nat}, with inhabitants $\Z$ and $\S n$ where $n$ inhabits \atom{Nat}. Let us now attempt to type addition. As a first approximation, we write $\isot{+}~\atom{Nat}~\atom{Nat}~{\unit} = {\unit}~\atom{Nat}~\atom{Nat}~{+}{;}$ where $\isot$ at the beginning marks this as an isotype. This is in the right spirit, but doesn't readily admit a consistent type theory. In particular, how would we represent the isotype of a continuation in the term $\atom{Walk}~{\beta}~{(\atom{Charlie}'~x)}$, as introduced towards the end of \Cref{sec:ex2}? Being more careful, we can say that an addition isovalue is the unordered pair consisting of ${+}~{a}~{b}~{\unit}$ and ${\unit}~{c}~{d}~{+}$ where $a$, $b$, $c$ and $d$ are all isovalues of isotype \atom{Nat}. We need this explication, because the isotypes of the arguments could be far more complicated. Where it is unambiguous, however, it will be reasonable to write this isotype as $\isot{+}~\atom{Nat}~\atom{Nat}~{\unit} = {\unit}~\atom{Nat}~\atom{Nat}~{+}{;}$. Notice that, as this isotype is holistic, it is completely orthogonal to the isotype $\isot{+}~\atom{Rat}~\atom{Rat}~{\unit} = {\unit}~\atom{Rat}~\atom{Rat}~{+}{;}$ where \atom{Rat} is the isotype of rationals. This reveals that we get overloading for free in this type system. It is nevertheless desirable to introduce a notion similar to \texttt{Haskell}'s typeclasses, however, or else (partial) isomorphisms employing addition, but polymorphic in the isotype they are adding, will have much too generic an isotype (in particular, the type inference algorithm would not be able to conclude that an isotype $\isot{+}~{a}~{b}~{\unit} = {\unit}~{c}~{d}~{+}{;}$ does not exist, and so we would have four isotype variables rather than the 1 expected). Typeclasses permit the programmer to specify that an isotype of the form $\isot{+}~{a}~{b}~{\unit}={\unit}~{c}~{d}~{+}$ is only legal if it is of the more restrictive form $\isot{+}~{a}~{a}~{\unit}={\unit}~{a}~{a}~{+}$ for some isotype $a$. Furthermore, the introduction of typeclasses permits the abbreviation of complex (ad-hoc) polymorphic isotypes.

What, then, might a polymorphic isotype look like? A good example is given by the map function, which would have as isotype the unordered pair of $(\atom{Map}~{f})~{\cursivexS}~{\unit}$ and ${\unit}~{\cursiveyS}~(\atom{Map}~{f})$ where $\cursivexS$ is an inhabitant of the isotype $[a]$, i.e.\ the isotype of lists of elements of isotype $a$, and where $\cursiveyS$ is an inhabitant of $[b]$. $f$ is required to be some term which, when used to construct the halting states ${f}~{x}~{\unit}$ and ${\unit}~{y}~{f}$, yields the isotype $\isot{f}~{a}~{\unit} = {\unit}~{b}~{f}$. More concisely, we write this as
\begin{align*}
  \isot{}& {[a]}~\infix{\atom{Map}~{f}}~{[b]}{:} \\
  \isot{}&\qquad {a}~\infix{f}~{b}{.} \\
  & []~\infix{\atom{Map}~{f}}~[]{;} \\
  & [{x}~{\cdot}~\cursivexS]~\infix{\atom{Map}~{f}}~[{y}~{\cdot}~\cursiveyS]{:} \\
  &\qquad {x}~\infix{f}~{y}{.} \\
  &\qquad \cursivexS~\infix{\atom{Map}~{f}}~\cursiveyS{.}
\end{align*}
and we note the similarity of the isotype to the implementation of \atom{Map} at the isovalue level. The composite term $\atom{Map}~{f}$ deserves further attention; this is itself halting, and so can be considered to be an implicitly declared anonymous isotype with one representation.

As briefly mentioned earlier, a type checker (and a type inference algorithm, if it exists) will perform a reachability analysis---starting from one halting pattern, it will enumerate all the accessible rules, finding the most general type consistent with these. The information gleaned this way is very valuable for all sorts of static analyses. We can determine what other halting states are accessible, as well as whether there are any missing cases along the way that could lead to a halting state. We can also resolve the issue of sub-rule ambiguity raised earlier, helping strengthen the phase space restrictions and avoid unintentional generation of entropy. It may also be possible to use this information to refine the cost heuristic used for serialisation, improving runtimes. Furthermore, knowledge of the execution path gives opportunities for identifying code motifs and performing structural transformations, a necessary ingredient for performing the kind of code optimisations a compiler is wont to do.

There is, however, a potential hiccup in the reachability analysis that occurs when a continuation-passing-style is used. In this case, there are rules which are shared between different computations. For example, in $\atom{Walk}~{\beta}~{c}\leftrightarrow\atom{Walk'}~{\alpha}~{c}$, $c$ is a continuation. There will be a number of computations which each pass control to such a \atom{Walk} term, and later accept control back from such a \atom{Walk'} term. Consequently, no definitive isotype can be assigned to these intermediate terms. Moreover, consider these two potential computations:
\begin{align*}
  & \atom{Foo} = \atom{Walk}~{\varphi}~\atom{Foo'}{;} \\
  & \atom{Bar} = \atom{Walk}~{\beta}~\atom{Bar'}{;}
\end{align*}
It is conceivable that the reachability analyser would, starting from \atom{Foo}, (correctly) determine that the pattern $\atom{Walk}~{\beta}~{c}$ is reachable, and then (incorrectly) presume the term \atom{Bar} is reachable from \atom{Foo}. If so, the type checker would then (rightly) complain of a type error, perhaps that the isotypes of $\atom{Foo'}$ and $\atom{Bar'}$ do not unify. The remedy to the first issue, of indefinite isotyping of intermediate terms, is to only assign definitive isotypes to sets of halting terms; intermediate terms are assigned definite isotypes only in the course of inferring isotypes for particular halting sets, and these intermediate isotypes are appropriately scoped such that the parallel process of isotype inference may proceed in spite of conflicting intermediate term isotypes. The second issue is remedied by specialising types as early as possible: in order to pass the ambiguity checker, any computation making use of a shared rule as above must pass control to the shared rule in a way orthogonal to any other computations passing control. In the above example, this holds because \atom{Foo'} and \atom{Bar'} are orthogonal patterns such that no term can be constructed satisfying them both simultaneously. Therefore, by immediately specialising the type of the continuation $c$ in \atom{Walk}, there is no risk of reachability `leaking' into inaccessible rules. One may be concerned that perhaps the reachability graph is much more complicated, engendering other reachability leaks, but each such confluence point between distinct computational paths \emph{must} have this same explicit orthogonality property in order to satisfy the ambiguity checker, and so there is in fact no risk. However, this remedy may be somewhat overzealous; suppose that the example above is part of a larger program,
\begin{align*}
  & \atom{Qux}~\atom{False} = \atom{Foo}{;} \\
  & \atom{Qux}~\atom{True} = \atom{Bar}{;}
\end{align*}
then the eager specialisation of the continuation isotype will yet result in a type-checking error, similar to \texttt{Haskell}'s (optional) monomorphism restriction. To circumvent such monomorphism, it would be advantageous to allow the type checker to introduce additional scopes for intermediate types where such branches occur, providing that these alternate paths eventually converge to a single type. If this is not possible, then the solution may simply be to expect the programmer to annotate a `partial isotype' of any shared polymorphic rules in order to give the type checker sufficient information.

We are optimistic that, in programs without shared rules, it is possible to perform automatic type inference without any type annotations as in the Hindley-Milner type system. As for programs with shared rules, we are less certain; nonetheless, the power and generality of type inference algorithms in type systems as sophisticated as \texttt{Haskell}'s most recent iterations suggests that it is attainable. We also note that the above is just a sketch of a possible type system, and will require further refinement and formalisation. Finally, we have neglected as yet to consider concurrency. Concurrency severely undermines the power of the type system, as there is no longer any continuity between terms: at any point, a term could be consumed, produced, or merged with another. It may be possible to extend the notion of isotype to include all the terms that might interact with one another, but it is unclear how useful this would be. Alternatively, one could simply ascribe types to single halting patterns, instead of seeking isotypes and an enumeration of all possible isomorphic representations of an isovalue. Again, this severely weakens the utility of the type system. A desirable compromise would be to identify the non-concurrent parts of a program, and apply type checking and inference to just these parts; a more rudimentary level of type checking could then be applied to the concurrent parts to at least verify the consistency of any non-concurrent sub-rules employed.

\endgroup

\section{The \texorpdfstring{\boldmath$\eulerSigma$}{Sigma} Calculus}
\label{app:sigma}

An earlier prototype, $\Sigma$ was more functional in nature. Its terms were nested applications of `general permutations', which could be named for convenience (and for achieving recursion without a fixed point combinator). Its definition was
\begin{equation*}\begin{aligned}
  \ruleName{term} && \tau &::= \langle\pi \tau^\ast : \tau^\ast \pi\rangle ~|~ \bnfPrim{var} ~|~ \bnfPrim{ref} ~|~ \tau* \\
\end{aligned}\end{equation*}
where \bnfPrim{ref} is a reference to a named pattern. A generalised permutation is given by $\langle\pi \tau^\ast : \tau^\ast \pi\rangle$, where $\pi$ is a special variable that matches the permutation itself. That is, the term $(\langle\pi~x~y : y~x~\pi\rangle~1~2)$ would transition to $(2~1~\langle\pi~x~y : y~x~\pi\rangle)$. Notice that this is the convention taken in \textAleph, that the atom or term in the first position tends to take the `active' role, and tends to place itself at the end of the term after rule execution. In $\Sigma$, however, this convention is enshrined in the language itself because all the information concerning the transition rules is held within the permutation terms themselves, and so it is required that there be two special locations within a term corresponding to the current and previous permutation. The first position is used for the current permutation in analogy to the $\lambda$ calculus, and the last position is used for the previous permutation for symmetry. Permutations are `general' in the sense that they can alter tree structure, and can copy/elide variables.

The $\Sigma$ calculus can also be proven to be microscopically reversible and Reverse-Turing complete, but the absence of sub-rules renders composition more unwieldy. Additionally, discriminating different cases is tricky. As with the $\lambda$ calculus, where data can be represented with functions via Church encoding, we can represent data in $\Sigma$ with permutations and these permutations will perform case statements. Taking lists as an example, we have
\begin{align*}
  \atom{cons} &\equiv \langle \pi \{nc\}zf : c\{nf\}z\pi \rangle \\
  \atom{nil} &\equiv \langle \pi \{nc\}zf : n\{fc\}z\pi \rangle
\end{align*}
where $\{xyz\}$ is sugar for $(\bot xyz\top)$ used to represent an inert expression ($\top\equiv\bot\equiv\varepsilon\equiv\unit$, not being a permutation, is an inert object that does nothing and is used to represent halting states in $\Sigma$). Using these definitions, list reversal can be implemented thus,
\begin{align*}
  \atom{rev} &\equiv \langle\pi l \top : \atom{nil} \{\lambda\nu\}\{\{\varepsilon\varepsilon\}l\} \pi\rangle \\
  \lambda &\equiv \langle\pi \{\atom{rev}\,\nu\}\{\{r'r''\}\{\tilde ll'l''\}\} \tilde r : \tilde l \{\atom{rev}'\,\nu\}\{\{\tilde rr'r''\}\{l'l''\}\} \pi\rangle \\
  \nu &\equiv \langle\pi \{\atom{rev}'\,\lambda\}\{r\{l'l''\}\}\atom{cons} : \atom{cons}\{\atom{rev}\,\lambda\}\{\{l'r\}l''\} \pi\rangle \\
  \atom{rev}' &\equiv \langle\pi \{\lambda\nu\}\{r\{\varepsilon\varepsilon\}\}\atom{nil} : \bot r\pi\rangle
\end{align*}
which is certainly not the most edifying program in the world! It would be used as so:
\begin{align*}
  (\atom{rev}[1\,4\,6\,2]\top) &\overset\ast\longleftrightarrow (\bot[2\,6\,4\,1]\atom{rev}')
\end{align*}

\para{\texorpdfstring{\boldmath$\eulermu$}{Mu}-Recursive Functions}

To give a deeper flavour of $\Sigma$, we implement the $\mu$-recursive functions (recall their definition and \alethe\ implementation from \Cref{lst:aleph-murec}). In order to simulate a notion of function composition, we first adopt a `$\Sigma$-function' motif: a function is represented by the triple $F=\{f~z~f'\}$ where $f$ and $f'$ are the initial and terminal permutations, and $z$ is any data we wish to bind to $F$. The permutations $f$ and $f'$ must be such that $(f~z~f'~x~\top)\overset\ast\longleftrightarrow(\bot~y~g~f~z~f')$, where $x$ is the input, $y$ the output and $g$ any garbage data.

The base $\mu$-recursive functions can then be implemented thus:
\begin{align*}
  Z&\equiv \{z\varepsilon z\} & z&\equiv\langle z\varepsilon z~n~\top : \bot~0~n~z\varepsilon z \rangle \\
  S&\equiv \{s\varepsilon s\} & s&\equiv\langle s\varepsilon s~n~\top : \bot~\{\atom{succ} n\}~\varepsilon~s\varepsilon s\rangle \\
  \Pi_i^k&\equiv \{\pi_i^k\varepsilon\pi_i^k\} & \pi_i^k&\equiv\langle\pi_i^k\varepsilon\pi_i^k~\{n_1,\dots,n_k\}~\top : \\
  &&&\qquad\bot~n_i~\{n_1,\dots,n_{i-1},n_{i+1},\dots,n_k\}~\pi_i^k\varepsilon\pi_i^k\rangle
\end{align*}
Note that we have written $\langle z\varepsilon z\cdots$ instead of $\langle\pi\varepsilon z\cdots$ or $\langle\pi\varepsilon\pi\cdots$ for clarity.

The composition operator for $F$ of arity $k$ over functions $G_i$ is given by $\{c_k\{F\vec G\}c_k'\}$, where
\begin{align*}
  c_k &\equiv \langle c_k\{F\vec G\}c_k'~\vec n~\top : c_k''~F~(G_1\wedge\vec n)~\cdots~(G_k\wedge\vec n)~c_k\rangle \\
  c_k'' &\equiv \langle c_k''~F~(m_1~h_1\vee G_1)~\cdots~(m_k~h_k\vee G_k)~c_k : c_k'~\vec h~\vec G~(F\wedge\vec m)~c_k'' \rangle \\
  c_k' &\equiv \langle c_k'~\vec h~\vec G~(y~h_f\vee F)~c_k'' : \bot~y~\{h_f\,\vec h\}~c_k\{F\vec G\}c_k' \rangle
\end{align*}
where we have introduced sugared forms $(F\wedge x)\equiv(fzf'~x~\top)$ and $(y~h\vee F)\equiv(\bot~y~h~fzf')$.

The primitive recursion operator is a little more complicated. To understand the implementation below, it is important to know that the number 0 is represented as $\{\atom{zero}~\varepsilon\}$ and the number $m+1$ is given by $\{\atom{succ}~m\}$. When the form is unknown, we can write a unified representation as $\{\tilde mm\}$ where $\tilde m$ is the constructor and $m$ is either $\varepsilon$ or the predecessor. Furthermore, the constructors are defined as follows:
\begin{align*}
  \atom{zero} &\equiv \langle\pi \{pq\} z r : p\{rq\} z \pi\rangle \\
  \atom{succ} &\equiv \langle\pi \{pq\} z r : q\{pr\} z \pi\rangle
\end{align*}
We write the primitive recursion of $F$ and $G$ (arities $k$ and $k+2$) as $R_{FG}=\{\rho_k\{FG\}\rho_k'\}$, defined by:
\begin{align*}
  \rho_k &= \langle\rho_k\{FG\}\rho_k'~\{\vec n\{\tilde mm\}\}~\top : \tilde m~\{\zeta_k\,\sigma_k\}~\{FG\vec nm\}~\rho_k \rangle \\
  \rho_k' &= \langle \rho_k'~\{\zeta_k'\,\sigma_k''\}~\{yhFG\}~\tilde m : \bot~y~\{\tilde mh\}~\rho_k\{FG\}\rho_k' \rangle 
  \\[0.4\baselineskip]
  \zeta_k &= \langle\zeta_k~\{\rho_k\,\sigma_k\}~\{FG\vec n\varepsilon\}~\atom{zero} : \zeta_k'~(F\wedge\vec n)~G~\zeta_k\rangle \\
  \zeta_k' &= \langle\zeta_k'~(y~h\vee F)~G~\zeta_k : \atom{zero}~\{\rho_k'\,\sigma_k''\}~\{yhFG\}~\zeta_k'\rangle
  \\[0.4\baselineskip]
  \sigma_k &= \langle\sigma_k~\{\zeta_k\,\rho_k\}~\{FG\vec nm\}~\atom{succ} : \sigma_k'~(R_{FG}\wedge\vec nm)~\vec nm~\sigma_k \rangle \\
  \sigma_k' &= \langle\sigma_k'~(y~h\vee R_{FG})~\vec nm~\sigma_k : \sigma_k''~(G\wedge\vec nmy)~F~h~\sigma_k'\rangle \\
  \sigma_k'' &= \langle\sigma_k''~(z~h'\vee G)~F~h~\sigma_k' : \atom{succ}~\{\zeta_k'\,\rho_k'\}~\{z\{hh'\}FG\}~\sigma_k''\rangle
\end{align*}
In the above, the $\rho$ permutation switches between the $\zeta$ and $\sigma$ permutations depending on if $\{\tilde mm\}=0$ or $>0$; the $\zeta$ route computes $y=F(\vec n)$, whilst the $\sigma$ route recurses down to compute $y=R_{FG}(\vec n,m)$ and then $z=G(\vec n,m,y)$. Both these routes then put their results into a common representation and merge the control flow via $\tilde m$.

Finally, we implement the minimalisation operator as $M_F^m=\{\mu_k\{Fm\}\mu_k'\}$; $M_F^m$ is a generalisation of the more canonical minimalisation operator (which is equivalent to $M_F^0$),
\[ M_F^m(\vec n) = \min\{ \ell\ge m : f(\vec n; \ell)=0 \} \]
Interestingly, this is much simpler to implement than primitive recursion---only requiring four permutations---yet is the necessary ingredient for universality:
\begin{align*}
  \mu_k &= \langle\mu_k\{Fm\}\mu_k'~\vec n~\top : \mu_k''~m~(F\wedge \vec nm)~(M_F^{\{\atom{succ}\,m\}}\wedge\vec n)~\mu_k\rangle \\
  \mu_k'' &= \langle\mu_k''~m~(\{\tilde yy\}h\vee F)~q~\mu_k : \tilde y~\{\mu_k'\mu_k'''\}~m~(\{\tilde yy\}h\vee F)~q~\mu_k''\rangle \\
  \mu_k''' &= \langle\mu_k'''~\{\mu_k'\mu_k'''\}~m~p~(m'h\vee M_F^{\{\atom{succ}\,m\}})~\atom{succ} : \\
  &\qquad \atom{succ}~\{\mu_k''\mu_k'\}~m'~p~(m'h\vee M_F^{\{\atom{succ}\,m\}})~\mu_k'''\rangle \\
  \mu_k' &= \langle\mu_k'~\{\mu_k''\mu_k'''\}~m'~(F\wedge\vec nm)~(M_F^{\{\atom{succ}\,m\}}\wedge\vec n)~\tilde\mu : \bot~m'~\{\tilde\mu\vec n\}~\mu_k\{Fm\}\mu_k'\rangle
\end{align*}
In the above, we compute $F(\vec n;m)$ and check if it's 0, if so we return $m$ as $m'$, otherwise we compute $M_F^{m+1}(\vec n)$ in a recursive manner and return its result as $m'$. Clearly, if there is no $m$ such that $F(\vec n;m)=0$ then $M_F^m(\vec n)=\bot~\forall m$. For concision, we compute both functions simultaneously, making use of $\Sigma$'s parallelism to approximate laziness. We then conveniently reverse both computations in a Bennett-like manner to produce only a single bit of garbage (the history data we return is $\tilde\mu$, representing whether or not $m=m'$, and the input vector $\vec n$).

\para{The Interpreter} Whilst the implementation of the $\mu$-recursive functions demonstrates that realising meaningful programs in $\Sigma$ is certainly possible, it is inelegant and clunky. Rather than programming in $\Sigma$ feeling like the $\lambda$ calculus, as was intended, it feels much more like programming in assembly. Nevertheless, we release the source code of the interpreter\footnote{\url{https://github.com/hannah-earley/sigma-repl}}, example code\footnote{\url{https://github.com/hannah-earley/sigma-examples}}, and syntax highlighters\footnote{\url{https://github.com/hannah-earley/sigma-syntax}}\ for completeness.

\begin{minipage}{\textwidth}%
  \vspace{-\baselineskip}
  \printbibliography
\end{minipage}


\ForEachFloatingEnvironment{\SetupFloatingEnvironment{#1}{within=section}}
\setcounter{section}{18}%
\begin{listing}
  \vspace{-1em}%
  \section{Supplementary Material}
  \vspace{-2em}%
  \centering%
  \begingroup%
\def\nl{\\[1.5em]}%
\def\nfx#1{\infix{\atom{#1}}}%
\def\infx#1#2{\infix{\atom{#1}~{#2}}}%
\begin{minipage}[t]{0.45\linewidth}\begin{align*}
  & \infx{Length}{[]}~{\Z}{;} \\
  & \infx{Length}{[{x}~{\cdot}~\cursivexS]}~{(\S\ell)}{:} \\
  &\qquad \infx{Length}\cursivexS~{\ell}{.}
  \nl
  & \infx{Sum}{[]}~{\Z}{;} \\
  & \infx{Sum}{[\Z~{\cdot}~\cursivenS]}~{\sigma}{:} \\
  &\qquad \infx{Sum}\cursivenS~{\sigma}{.} \\
  & \infx{Sum}{[{(\S n)}~{\cdot}~\cursivenS]}~{(\S\sigma)}{:} \\
  &\qquad \infx{Sum}{[{n}~{\cdot}~\cursivenS]}~{\sigma}{.}
  \nl
  & {[]}~\infx{Map}{f}~{[]}{;} \\
  & {[{x}~{\cdot}~\cursivexS]}~\infx{Map}{f}~{[{y}~{\cdot}~\cursiveyS]}{;} \\
  &\qquad {x}~\infix{f}~{y}{.} \\
  &\qquad \cursivexS~\infx{Map}{f}~\cursiveyS{.}
  \nl
  & \infx{Map}{{f}~{[]}}~{[]}{;} \\
  & \infx{Map}{{f}~{[{x}~{\cdot}~\cursivexS]}}~{[{y}~{\cdot}~\cursiveyS]}{;} \\
  &\qquad \infix{f~x}~{y}{.} \\
  &\qquad \infx{Map}{f~\cursivexS}~\cursiveyS{.}
  \nl
  & {[]}~\nfx{Concat}~{[]}~{[]}{;} \\
  & {[{[]}~{\cdot}~\cursivexSS]}~\nfx{Concat}~\cursiveyS~{[{\Z}~{\cdot}~\cursivelS]}{:} \\
  &\qquad \cursivexSS~\nfx{Concat}~\cursiveyS~\cursivelS{.} \\
  & {[{[x~{\cdot}~\cursivexS]}~{\cdot}~\cursivexSS]}~\nfx{Concat}~{{x}~{\cdot}~\cursiveyS}~{[{(\S\ell)}~{\cdot}~\cursivelS]}{:} \\
  &\qquad {[\cursivexS~{\cdot}~\cursivexSS]}~\nfx{Concat}~\cursiveyS~{[{\ell}~{\cdot}~\cursivelS]}{.}
  \nl
  & \cursivexS~\infx{ConcatMap}{f}~\cursiveyS~\cursivelS{:} \\
  &\qquad \cursivexS~\infx{Map}{f}~\cursiveySS{.} \\
  &\qquad \cursiveySS~\nfx{Concat}~\cursiveyS~\cursivelS{.}
\end{align*}\vspace{-1em}\end{minipage}%
\quad\def\nl{\\[1.805em]}
\begin{minipage}[t]{.45\linewidth}\begin{align*}
  & \dataDef{\atom{Tree}~{x}~\cursivetS}{;} \\
  & \dataDef{{,}~{a}~{b}}{;}
  \nl
  &{n}~\infx{TreeSize'}{(\atom{Tree}~{x}~\cursivetS)}~{(\S n')}{:} \\
  &\qquad {n}~\infix{\atomLocal{Go}~\cursivetS}~{n'}{.} \\
  &\qquad {n}~\infix{\atomLocal{Go}~{[]}}~{n}{;} \\
  &\qquad {n}~\infix{\atomLocal{Go}~{[{t}~{\cdot}~\cursivetS]}}~{n''}{:} \\
  &\qquad\qquad {n}~\infx{TreeSize'}{t}~{n'}{.} \\
  &\qquad\qquad {n'}~\infix{\atomLocal[2]{Go}~\cursivetS}~{n''}{.}
  \nl
  & \infx{TreeSize}{t}~{n}{:} \\
  &\qquad \Z~\infx{TreeSize'}{t}~{n}{.}
  \nl
  & (\atom{Tree}~{x}~\cursivetS)~\nfx{Polish}~{[({,}~{x}~{n})~{\cdot}~{p}]}{:} \\
  &\qquad \infx{Length}\cursivexS~{n}{.} \\
  &\qquad \cursivetS~\infx{ConcatMap}{\atom{Polish}}~{p}~\cursivelS{.} \\
  &\qquad {p}~\infx{PolishReads}{n}~\cursivetS~\cursivelS{.}
  \nl
  & {[{({,}~{x}~{n})}~{\cdot}~\cursivexS]}~\nfx{PolishRead}~{\cursivexS'}~{(\atom{Tree}~{x}~\cursivetS)}~{(\S\ell)}{:} \\
  &\qquad \cursivexS~\infx{PolishReads}{n}~{\cursivexS'}~\cursivetS~\cursivelS{.} \\
  &\qquad \infx{Length}\cursivelS~{n}{.} \\
  &\qquad \infx{Sum}\cursivelS~{\ell}{.} \\
  &\qquad \infx{Map}{\atom{TreeSize}~\cursivetS}~\cursivelS{.}
  \nl
  & \cursivexS~\infx{PolishReads}{\Z}~\cursivexS~{[]}~{[]}{;} \\
  & \cursivexS~\infx{PolishReads}{(\S n)}~{\cursivexS''}~{[{t}~{\cdot}~\cursivetS]}~{[{\ell}~{\cdot}~\cursivelS]}{:} \\
  &\qquad \cursivexS~\nfx{PolishRead}~{\cursivexS'}~{t}~{\ell}{.} \\
  &\qquad {\cursivexS'}~\infx{PolishReads}{n}~{\cursivexS'}~\cursivetS~\cursivelS{.}
\end{align*}\end{minipage}%
\endgroup
  \caption{The full \alethe\ program implementing interconversion between trees and Polish notation, following the \texttt{haskell} snippet in \Cref{lst:serial-polish-hs}. In fact, a little thought and optimisation will reveal that the $\ell$ outputs of the read functions can be eliminated, which then simplifies the definition of \atom{Polish} to that given in \Cref{lst:serial-polish-nice} and obviates the need for the auxiliary functions \atom{Length}, \atom{Sum}, \atom{Map}, etc. Nevertheless, the point of this is to show that it is not unreasonable to arrive at the program listed here, and it may not be entirely obvious that a simpler implementation with less garbage is possible.}
  \label{lst:supp-polish}
\end{listing}

\end{document}